\documentclass[12pt,onecolumn,letter]{IEEEtran}
\usepackage{verbatim}
\usepackage{amsmath}
\usepackage{amsthm}
\usepackage{amssymb}
\usepackage{color}
\usepackage{bm}

\usepackage[pdftex]{graphicx}
\newtheorem{proposition}{Proposition} 
\newtheorem{theorem}{Theorem}
\newtheorem{corollary}{Corollary}

\newtheorem{lemma}{Lemma}

\newtheorem{statement}{Statement}

\DeclareMathOperator{\Tr}{\mathrm{Tr}}
\def\bV{\bm{V}}

 \newenvironment{derivationof}[1]{\vspace*{5mm} \par \noindent
         \quad{\it Derivations of #1:\hspace{2mm}}}{\endproof
}

\def\FF{\mathbb{F}}

\def\Univ{\mathop{\rm Univ}}
\def\Unif{\mathop{\rm Unif}}
\def\sgn{\mathop{\rm sgn}}

\def\bx{\bm{x}}

\def\ba{\bm{a}}
\def\bb{\bm{b}}

\def\bu{\bm{u}}

\def\bt{\bm{t}}

\def\cl{\mathop{cl.}}

\def\QED{\mbox{\rule[0pt]{1.5ex}{1.5ex}}}

\def\endproof{\hspace*{\fill}~\QED\par\endtrivlist\unskip}

\def\sM{\mathsf{M}}

\def\Label#1{\label{#1}\ [\ \text{#1}\ ]\ }
\def\Label{\label}

\begin{document}

\title{Universal classical-quantum superposition coding and universal classical-quantum multiple access channel coding}

\author{Masahito Hayashi \IEEEmembership{Fellow, IEEE}
and Ning Cai \IEEEmembership{Fellow, IEEE}
\thanks{The work of MH was supported 
in part by
Guangdong Provincial Key Laboratory (Grant No. 2019B121203002).
}
\thanks{Masahito Hayashi is with 
Shenzhen Institute for Quantum Science and Engineering, Southern University of Science and Technology,
Shenzhen, 518055, China,
Guangdong Provincial Key Laboratory of Quantum Science and Engineering,
Southern University of Science and Technology, Shenzhen 518055, China,
Shenzhen Key Laboratory of Quantum Science and Engineering, Southern
University of Science and Technology, Shenzhen 518055, China,
and
the Graduate School of Mathematics, Nagoya University, Nagoya, 464-8602, Japan
(e-mail:hayashi@sustech.edu.cn).
Ning Cai is with the School of Information Science and Technology, ShanghaiTech University, Middle Huaxia Road no 393,
Pudong, Shanghai  201210, China
(e-mail: ningcai@shanghaitech.edu.cn).} }

\markboth{M. Hayashi 
and N. Cai: Universal c-q superposition coding \& universal c-q MAC coding}{}

\maketitle

\begin{abstract}
We derive universal classical-quantum superposition coding and universal classical-quantum multiple access channel code
by using generalized packing lemmas for the type method.
Using our classical-quantum universal superposition code, we establish the capacity region of a classical-quantum compound broadcast channel with degraded message sets.
Our universal classical-quantum multiple access channel codes have two types of codes.
One is a code with joint decoding and the other is a code with separate decoding.
It is not so easy to construct a former code that universally achieves general points of the capacity region 
beyond corner points.
First, we construct the latter code that universally achieves general points of the capacity region.
Then, converting the latter code to the former coder, 
we construct the above desired code with the former type.
\end{abstract}

\begin{IEEEkeywords} 
Universal code,
classical-quantum channel,
multiple access channel,
broadcast channel with degraded message sets,
compound channel,
packing lemma,
Schur duality
\end{IEEEkeywords}

\section{Introduction}
Reliable transmission of messages via communication channel is a fundamental problem
whichever classical or quantum channel is used.
Even when the channel can be regarded as a discrete memoryless, it is not so easy to perfectly identify the form of channel.
To address this problem,
it is natural to consider the worst decoding error probability among the set of possible channels.
Such a problem is called a compound channel, and
has been introduced, in the classical setting, independently by
Blackwell, Breiman and Thomasian \cite{BBT}, Dobrusin \cite{Dobrusin}, and Wolfowitz \cite{Wolfowitz}.
For its systematic study, Csisz\'{a}r and K\"{o}rner \cite{CK}
established universal channel coding based on the method of type.
They showed the existence of a pair of an encoder and a decoder that works with an arbitrary channel when the mutual information is larger than the transmission rate.
Such a code is called a universal code, whose construction was given by the packing lemma \cite[Lemma 10.1]{CK}, which is a key lemma in the method of type.
In their setting, the number of possible channels is continuous, which is the key point of their  construction due to the following reason.
If the number of possible channels is finite, a very simple method derives the existence of a pair of an encoder and decoder to work with all possible channels.

Using this simple idea, Datta and Dorlas \cite{D-D} showed the existence of universal code for the classical-quantum channel coding when the number of possible channels is finite.
Later, the two papers \cite{BB1,Ha1} independently showed the existence of universal code for the classical-quantum channel coding for the finite-dimensional case even when the number of possible channels is infinite.
To find a universal encoder, the paper \cite{Ha1} used the packing lemma \cite[Lemma 10.1]{CK} in a way different from the way originally used by Csisz\'{a}r and K\"{o}rner \cite{CK}.
That is, the encoder in \cite{Ha1} does not depend on the dimension of the output system while the encoder in \cite{CK} depends on the cardinality of the output alphabet.
Hence, the method by \cite{Ha1} was applied to the classical case with a general output system including the classical continuous system so that
the succeeding paper \cite{Conti} constructed a universal code even for such a general output system.
To construct a universal decoder, the paper \cite{Ha1} employed a notable combination of Schur duality and the method of types, which was used in various settings for universal quantum data compression \cite[Chapter 6]{Group2} \cite{Ha2,Ha3,Ha4,Ha5}.

The broadcast channel with degraded message sets (BCD) was introduced considered by K\"{o}rner and Marton \cite{korner77}.
This problem has one sender and two receivers $Y$ and $Z$, where
we treat the private message $M_B$ intended to be sent to Receiver $Y$,
and the common message $M_A$ intended to be sent to both receivers $Y$ and $Z$.
Here, the confidentiality of the private message $M_B$ for Receiver $Z$ is not required\footnote{
Many papers \cite{Devetak,BJS1,BJS2,SAHJF} in quantum information use the word ``private message'' as the message to be confidential to the other receiver.
However, the representative papers \cite{korner77,Korner-Sgarro,kaspi11} for classical BCD
use the word ``private message'' in the same way as this paper.}.
To show the achievability, the method of superposition code is used.
K\"{o}rner and Sgarro \cite{Korner-Sgarro} proposed
universal codes for this problem with exponentially small decoding error probability
by generalizing the packing lemma.
In the case of random coding for superposition code,
Kaspi and Merhav \cite{kaspi11} derived a lower bound of the error exponent.
Superposition codes are a key tool for the broadcast channel, and used in various tasks including broadcast channels with confidential messages \cite{csiszar78}.

For the quantum version,
Yard, Hayden, and Devetak \cite{YHD} constructed a quantum superposition code, which automatically
derives the achievability part of the classical-quantum BCD (c-q BCD) \cite[Section II-C]{YHD}.
However, the converse part was not shown.
Recently, Boche, Jan{\ss}en, and Saeedinaeeni \cite{BJS1} proposed a universal random construction for a quantum superposition code.
In their construction, the choice of the encoder is a random choice, whose ensemble does not depend on the channel, 
and their decoder works with all possible channels and depends on the choice of the encoder.
Hence, it was an open problem to show the existence of a deterministic encoder that universally works 
for c-q BCD

This paper shows the converse part of the c-q BCD, i.e., it proves 
the optimality of the quantum superposition code given by \cite{YHD}.  
Then, this paper shows 
a pair of a deterministic code based on the packing lemma and a decoder 
encoder works universally
with any pair of classical-quantum channels
when the pair satisfies a certain condition with respect to the mutual information and the transmission rates.
In our construction, our encoder is given by using 
the generalized packing lemma, i.e., the result for the method of types, 
given by K\"{o}rner and Sgarro \cite{Korner-Sgarro} 
while our use of the generalized packing lemma is similar to the use of the packing lemma
in \cite{Ha1} and is different from the use in \cite{Korner-Sgarro}.
Our decoder is based on a similar combination of Schur duality and the method of types 
in a way similar to the paper \cite{Ha1}.
In addition, when we have a family of pairs of channels, to address the worst case, 
we formulate the c-q compound BCD.
Applying our universal code, we derived the capacity region for c-q compound BCD.
Further, we apply our method to another problem, 
universal construction of classical-quantum multiple access channel (c-q MAC) code that achieves 
the corner points in the capacity region.

The MAC was introduced by Ahlswede \cite{Ahlswede} and Liao \cite{Liao}.
Universal codes for this problem were proposed with exponentially small decoding error probability
by generalizing the packing lemma
by Pokorny and Wallmeier \cite{PW} and Liu and Hughes \cite{Liu-Hughes}.
For the quantum version, Winter \cite{Winter} derived the capacity region for classical-quantum MAC (c-q MAC).
The paper \cite{Ahlswede-Cai} showed the strong converse part of this problem. 
The paper \cite{HMW} addressed compound cq-MAC, and discussed the achievable rate pair for compound cq-MAC by using Plolar code. However, their rate-region is not tight.
Recently, Boche, Jan{\ss}en, and Saeedinaeeni \cite{BJS2}
proposed a universal random construction for a quantum superposition code when one sender is classical and the other sender is quantum.
Similar to another their paper \cite{BJS1}, in their construction, the choice of the encoder is a random choice, whose ensemble does not depend on the channel, and their decoder works with all possible channels and depends on the choice of the encoder.

Applying our method for universal c-q superposition code based on packing lemma, this paper shows the existence of a deterministic pair of an encoder and a decoder that universally and directly achieves corner points of the capacity region for any c-q MAC
when the c-q MAC satisfies a certain condition with respect to the mutual information and the transmission rates.
Our deterministic encoder is given by using the result for the generalized packing lemma showed by Liu and Hughes \cite{Liu-Hughes} while our use of the result by \cite{Liu-Hughes} is similar to the use in \cite{Ha1} and is different from the use in \cite{Liu-Hughes}.
Our decoder is based on a combination of Schur duality and the method of types in a way similar to the paper \cite{Ha1}.

Since the encoder of this method is based on the generalized packing lemma, it can be easily extended to a general point of the capacity region beyond corner points.
However, the decoder cannot be directly extended to this general case because this generalization requires the handling of projections that are not commutative with each other.
To avoid this problem, we introduce the concept of separate decoding.
In this setting, the receiver has two decoders. One is a decoder to decode the message from one sender,
and the other is a decoder to decode the message from another sender.
If we allow such a separate decoder, our code universally achieves general points of the capacity region for any c-q MAC.
Fortunately, by using gentle operator lemma \cite{Win,O-G,Springer}, a separate decoder can be converted to 
a joint decoder. Using this conversion, we construct 
a code with joint decoding that universally achieves general points of the capacity region for any c-q MAC.
In fact, while the previous withdrawn paper \cite{unsol} tried to construct a code to achieve general points in capacity region without time sharing,
it has a serious gap so that a code construction without time sharing had been an open problem.
Since our construction does not employ time sharing,
it can be considered as a solution for this open problem. 


Finally, to address the worst case with a given family of c-q MACs,
we discuss c-q compound MAC.
In the classical case, a single-letterized form of the capacity region of
a compound MAC is known \cite{A74,PW,Liu-Hughes}.
The recent paper \cite{BJS2} derived the capacity region of a compound MAC with a limiting expression, whose classical case is different from the above single-letterized form.
Using the above universal code with joint decoding,
this paper derives a single-letterized form of the capacity region of a c-q compound MAC as a quantum extension of the above single-letterized form.

The remaining part of this paper is organized as follows.
Section \ref{S2} states our main results for c-q BCD including the converse part of the capacity region of a c-q BCD, the existence of a universal classical-quantum superposition code, and the capacity region of c-q compound BCD.
Section \ref{S3} states our main results for c-q MAC coding including universal c-q MAC codes with joint and separate decoding and the capacity region of c-q compound MAC.
Section \ref{S3-5} discusses the capacity region of c-q compound MAC and various quantities in several examples for a family of c-q MAC.
Section \ref{S4} proves the various converse results part for c-q BCD.
Section \ref{SS6} proves the converse part of c-q compound MAC.
Section \ref{S5} gives several new results for type methods, which are
the preparation for our universal codes.
Section \ref{S6} gives our universal c-q superposition code, and shows its exponent (exponential decreasing rate of the decoding error probability).
Section \ref{S7} gives our universal c-q MAC code with joint decoding, and shows its exponent.
Section \ref{S7V} gives our universal c-q MAC code with separate decoding, and shows its exponent.
Section \ref{S8} gives the discussions and conclusions.
Appendix \ref{A2} shows the exponent for another decoder for universal c-q superposition code.
The decoder of Appendix \ref{A2}
is similar to that given in Section \ref{S7V} and it has an exponent different from that Section \ref{S6}.

\section{Classical-Quantum Broadcast Channels with Degraded Message Sets}\Label{S2}
\subsection{Fixed channel case}
This section states our results for classical-quantum broadcast channels with degraded message sets (c-q BCD) including 
universal classical-quantum superposition code.
c-q BCD is formulated with 
two classical-quantum channels with a common classical input system ${\cal X}$, which is a finite set.
One channel is a channel from the classical system ${\cal X}$
to a receiver $Y$ having a quantum system ${\cal H}_Y$, which is written as $x \mapsto W_x$.
The other channel is a channel from the classical system ${\cal X}$
to another receiver $Z$ having another quantum system ${\cal H}_Z$, which is written as $x \mapsto W_{Z,x}$.

The aim of classical-quantum broadcast channels with degraded message sets
is the transmission of two kinds of messages.
One is the common message, which needs to be correctly sent to both receivers.
The other is the private message, which needs to be correctly sent only to Receiver $Y$,
where its confidentiality to Receiver $Z$ is not required.

We define the $n$-fold c-q memoryless channel of the channel $\{W_x\}_{x \in {\cal X}}$.
\begin{align}
W_{\bx}^{(n)}:=W_{x_1}\otimes \cdots \otimes W_{x_n}
\end{align}
for $\bx=(x_1,\ldots,x_n)\in {\cal X}^n$.
Similarly, 
we define the $n$-fold c-q memoryless channel $W_{Z,\bx}^{(n)}$
of the channel $\{W_{Z,x}\}_{x \in {\cal X}}$.

An encoder is a map 
$\psi_{n}$ from $\hat{M}_{A,n}\times \hat{M}_{B,n}$
to ${\cal X}^n$, where
$\hat{M}_{A,n}:=\{1, \ldots, \sM_{A,n} \}$ and $\hat{M}_{B,n}:=\{1, \ldots, \sM_{B,n} \}$.
A decoder is given by a pair of POVMs 
$D^n:=\{D_{j,k}^n\}_{(j,k)\in \hat{M}_{A,n}\times \hat{M}_{B,n}}$
on ${\cal H}_Y^{\otimes n}$
and 
$D^{Z,n}:=\{D_{j}^{Z,n}\}_{j\in \hat{M}_{A,n}}$
on ${\cal H}_Z^{\otimes n}$

Then, the triplet $(\psi_{n},D^n,D^{Z,n})$
is called a code for classical-quantum broadcast channels with degraded message sets, and is denoted by $\Psi_n$.
In the following, it is simplified to a code. 
The message sizes $\sM_{A,n}$ and $\sM_{B,n}$ are written as $|\Psi_n|_A $ and $|\Psi_n|_B$, respectively.
The average decoding error probabilities for Receivers $Y$ and $Z$ 
are given as
\begin{align}
\epsilon_Y(\Psi_n;W^{(n)}):=&\sum_{(j,k)\in\hat{M}_{A,n}\times \hat{M}_{B,n}}
\frac{1}{\sM_{A,n} \sM_{B,n}}
\Tr W_{\psi_{n}(j,k)}^{(n)}
(I- D_{j,k}^n) \\
\epsilon_Z(\Psi_n;W^{(n)}):=&\sum_{(j,k)\in\hat{M}_{A,n}\times \hat{M}_{B,n}}
\frac{1}{\sM_{A,n} \sM_{B,n}}
\Tr W_{\psi_{n}(j,k)}^{(n)}
(I- D_{j}^{Z,n}).
\end{align}
We describe the transmission rates of the common and private messages as $R_A$ and $R_B$.
The rate pair $(R_A,R_B)$ is achievable when there exists a sequence of codes
$\{\Psi_n\}$ such that 
$R_A= \lim_{n\to \infty} \frac{1}{n}\log |\Psi_n|_A$,
$R_B= \lim_{n\to \infty} \frac{1}{n}\log |\Psi_n|_B$,
$\epsilon_Y(\Psi_n;W^{(n)})\to 0$, and 
$\epsilon_Z(\Psi_n;W^{(n)})\to 0$.
The closure of the set of achievable rate pairs $(R_A,R_B)$ is called the capacity region, and is 
denoted by ${\cal C}$, i.e.,\par\noindent
${\cal C}:= \cl \{ (R_A,R_B)| (R_A,R_B) \hbox{ is achievable.}\}$.
We can calculate the capacity region as follows.
\begin{theorem}\Label{T1}
The following equations hold;
\begin{align}
{\cal C}
=&\cl \bigcup_{P_{UX}}\Big\{(R_A,R_B)\Big| R_A \le \min(I(U;Y)_{P_{UX}},I(U;Z)_{P_{UX}}), R_B \le I(X;Y|U)_{P_{UX}} \Big)\Big\}_{P_{UX}} \nonumber \\
=&\cl \bigcup_{P_{UX}}\Big\{(R_A,R_B)\Big| R_A \le \min(I(U;Y)_{P_{UX}},I(U;Z)_{P_{UX}}), R_A+R_B I(UX;Y)_{P_{UX}} \Big)\Big\}_{P_{UX}},
\end{align}
where $\cl$ expresses the closure of the convex hull.
\end{theorem}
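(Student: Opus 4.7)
My plan is to establish Theorem~\ref{T1} in three pieces: achievability via the classical-quantum superposition code of \cite{YHD}, a converse built from the classical-quantum Fano inequality and a careful single-letterization, and the equivalence of the two stated forms via the chain-rule identity $I(UX;Y)_{P_{UX}}=I(U;Y)_{P_{UX}}+I(X;Y|U)_{P_{UX}}$.

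For achievability, for each finite-alphabet $P_{UX}$ on ${\cal U}\times{\cal X}$ with $R_A<\min(I(U;Y)_{P_{UX}},\,I(U;Z)_{P_{UX}})$ and $R_B<I(X;Y|U)_{P_{UX}}$, the construction of \cite{YHD} produces a sequence of codes $\{\Psi_n\}$ with both $\epsilon_Y(\Psi_n;W^{(n)})$ and $\epsilon_Z(\Psi_n;W^{(n)})$ tending to zero at the prescribed rates; alternatively, one could specialise the universal superposition code developed in Section~\ref{S6} to this single-channel setting. Taking the union over $P_{UX}$ and the closure of the convex hull (which absorbs time sharing) then gives that ${\cal C}$ contains the right-hand side.

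For the converse, fix any achievable pair $(R_A,R_B)$ together with a sequence of codes whose error probabilities at both receivers tend to zero. The classical-quantum Fano inequality, applied via the Holevo bound to each decoding POVM, yields
\begin{align*}
I(M_A;Y^n)\geq nR_A-n\epsilon_n,\quad I(M_A;Z^n)\geq nR_A-n\epsilon_n,\quad I(M_B;Y^n\mid M_A)\geq nR_B-n\epsilon_n,
\end{align*}
on the c-q state induced by the uniform distribution on $(M_A,M_B)$. The bound on $I(M_B;Y^n|M_A)$ single-letterizes straightforwardly: the product structure of $W^{(n)}$ together with subadditivity of von Neumann entropy gives $I(M_B;Y^n|M_A)\leq \sum_{i=1}^n I(X_i;Y_i|M_A)$, and introducing a time-sharing variable $T\sim\Unif\{1,\dots,n\}$ with $U:=(M_A,T)$ and $X:=X_T$ converts this into $R_B\leq I(X;Y|U)_{P_{UX}}$.

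The main obstacle is the simultaneous single-letterization of $I(M_A;Y^n)$ and $I(M_A;Z^n)$ under a common classical auxiliary $U$. Classically one would invoke the Csisz\'ar sum identity with $U_i=(M_A,Y^{i-1},Z_{i+1}^n)$; in the c-q setting $Y^{i-1}$ and $Z_{i+1}^n$ are quantum systems, so the natural quantum analogue of $U_i$ is not classical, and a direct dephasing or measurement would run against the data-processing inequality. My plan is to work directly with quantum conditional mutual information, exploiting the product structure of $W^{(n)}$ and $W_Z^{(n)}$ together with strong subadditivity to produce a common classical $U$ satisfying both bounds, and then to apply a Fenchel--Eggleston--Carath\'eodory argument to cap the cardinality of ${\cal U}$. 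Once this is done, the equivalence of the two forms in the theorem is immediate: the chain rule $I(UX;Y)_{P_{UX}}=I(U;Y)_{P_{UX}}+I(X;Y|U)_{P_{UX}}$ shows that, after taking the closure of the convex hull over all $P_{UX}$, the pair of constraints $R_A\leq I(U;Y)$ and $R_B\leq I(X;Y|U)$ carves out exactly the same region as $R_A\leq I(U;Y)$ and $R_A+R_B\leq I(UX;Y)$.
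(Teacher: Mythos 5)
Your proposal is sound on achievability (via~\cite{YHD}, or by specialising the universal code of Section~\ref{S6}), but the converse is not actually carried out: you correctly identify the simultaneous single-letterization of $I(M_A;Y^n)$ and $I(M_A;Z^n)$ under a common classical auxiliary as the crux, and then offer only a \emph{plan} (``work directly with quantum conditional mutual information \ldots together with strong subadditivity'') without saying how the step is to be executed. That step is the entire content of the converse, so as written the proposal has a genuine gap there.

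More importantly, the paper's converse does \emph{not} go the Csisz\'{a}r-sum route you are anticipating, so the difficulty you are bracing for is side-stepped rather than solved. The paper takes $U_i := (M_{A,n},I_n)$ to be purely classical --- just the common message together with a uniform time-sharing index --- and single-letterizes the two $R_A$ bounds separately: $I(M_{A,n};Y^n)=\sum_i I(M_{A,n};Y_i\mid Y^{i-1}) \le \sum_i I(M_{A,n};Y_i)$, the inequality being attributed to the Markov chain $Y^{i-1}-M_{A,n}-Y_i$, and analogously for $Z^n$. Because the same classical $M_{A,n}$ and the same index appear in both bounds, no common quantum auxiliary is ever needed, and the Fenchel--Eggleston--Carath\'{e}odory machinery you mention is unnecessary. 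You should, however, examine that Markov-chain step with care before relying on it: conditionally on $M_{A,n}$ the transmitted codeword still depends on $M_{B,n}$, which correlates $Y^{i-1}$ with $Y_i$, so the claimed conditional independence is not automatic and deserves its own justification.

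Finally, your claim that the equivalence of the two stated forms is ``immediate'' from the chain rule is too quick. For a fixed $P_{UX}$ the rectangle $\{R_A\le\min(I(U;Y),I(U;Z)),\ R_B\le I(X;Y\mid U)\}$ is strictly contained in the pentagon $\{R_A\le\min(I(U;Y),I(U;Z)),\ R_A+R_B\le I(UX;Y)\}$, so the chain rule alone does not collapse the union of pentagons onto the union of rectangles. The paper supplies the missing ingredient operationally: if $(R_A,R_B)$ is achievable then so is $(R_A-r,R_B+r)$ for any $r\in[0,R_A]$, simply by relabelling a rate-$r$ part of the common message as private, which shows the pentagon region is contained in ${\cal C}$ and closes the chain ${\cal C}\subset R_1\subset R_2\subset{\cal C}$. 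Your chain-rule observation can be turned into a proof if you note that $I(UX;Y)=I(X;Y)$ because $U-X-Y$ is a (cq) Markov chain, so the far pentagon vertex $(0,I(UX;Y))$ is already covered by the rectangle with $U$ constant; but this needs to be argued explicitly, not declared immediate.
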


The achievability of the above region was shown by using the quantum superposition code in \cite{YHD}.
The recent paper \cite{AHW} essentially derived an exponential decreasing rate of the decoding error probability of
the randomly generated quantum superposition code.
We show the converse part in this paper.

To state our universal code, we prepare information measure as follows.
For $\alpha>0$, 
and the state $\rho_{XY}:=\sum_{x}P_{X}(x)|x\rangle\langle x|\otimes W_x$,
Petz's version of R\'{e}nyi mutual information is given as 
\begin{align}
I_{\alpha}(X;Y):=\min_{\sigma_Y}D_\alpha(\rho_{XY}\| \rho_X \otimes \sigma_Y ),
\end{align}
where Petz's version of R\'{e}nyi divergence $D_\alpha(\rho\|\sigma)$ is defined as \cite{Petz}
\begin{align}
e^{(\alpha-1)D_\alpha(\rho\|\sigma)}:=\Tr \rho^\alpha \sigma^{1-\alpha}. 
\end{align}
It is known in \cite{Sibson}\cite[Lemma 2]{q-wire} that
this measure has the Gallager form \cite{Gallager};
\begin{align}
e^{(\alpha-1)I_{\alpha}(X;Y)}=
\sgn(\alpha-1) \min_{\sigma} \sgn(\alpha-1) \sum_{x\in {\cal X}} P_{X}(x) \Tr  W_x^{\alpha} \sigma^{1-\alpha}
=\Big(
\Tr
\Big( \sum_{x\in {\cal X}} P_{X}(x) 
 W_x^{\alpha} \Big)^{\frac{1}{\alpha}}\Big)^{\alpha}.\Label{Sibson}
\end{align}

This information measure can be extended to the case with tripartite case.
For $\alpha>0$, 
and the state $\rho_{UXY}:=\sum_{u,x}P_{UX}(u,x)|u,x\rangle\langle u,x|\otimes W_x$,
we define Petz's version of R\'{e}nyi conditional mutual information; 
\begin{align}
I_{\alpha}(X;Y|U):=\min_{\sigma_{X-U-Y}}D_\alpha(\rho_{UXY} \|\sigma_{X-U-Y} ),
\end{align}
where $\sigma_{X-U-Y}$ is restricted to the form $
\sum_{uv} Q_U(u)P_{X|U}(x|u)|u,x\rangle\langle u,x|\otimes \sigma_u$
and $Q_U$ is an arbitrary distribution and $\sigma_u$ is an arbitrary state on ${\cal H}_Y$.
If we need to express the distribution $P_{XU}$, we denote it by 
$I_{\alpha}(X;Y|U)_{P_{XU}}$.

This measure can be written as follows.
\begin{lemma}
The following equation holds;
\begin{align}
e^{(\alpha-1)I_{\alpha}(X;Y|U)}=
\Big(
\sum_{u}P_U(u)
\Tr
\Big( \sum_{x\in {\cal X}} P_{X|U}(x|u) 
 W_x^{\alpha} \Big)^{\frac{1}{\alpha}}\Big)^{\alpha}.
\end{align}
\end{lemma}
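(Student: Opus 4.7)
The plan is to reduce the conditional case to the unconditional Sibson identity (\ref{Sibson}) by exploiting the block-diagonal structure of both $\rho_{UXY}$ and the feasible $\sigma_{X-U-Y}$ in the orthonormal basis $\{|u,x\rangle\}$ of the $UX$ system.

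First I would substitute the explicit forms into $\Tr \rho_{UXY}^\alpha \sigma_{X-U-Y}^{1-\alpha}$. Because both states are block-diagonal on $UX$, the $\alpha$- and $(1-\alpha)$-powers act blockwise. Writing $P_{UX}(u,x)=P_U(u)P_{X|U}(x|u)$ and combining the $P_{X|U}(x|u)^{\alpha}\cdot P_{X|U}(x|u)^{1-\alpha}$ factors, one obtains
\begin{align}
\Tr \rho_{UXY}^\alpha \sigma_{X-U-Y}^{1-\alpha}
= \sum_{u} P_U(u)^\alpha Q_U(u)^{1-\alpha} \sum_{x} P_{X|U}(x|u)\, \Tr W_x^\alpha \sigma_u^{1-\alpha}.
\end{align}
The key observation is that the variables $\sigma_u$ and $Q_U$ appear in separate factors, so the joint optimization decouples into a sequential one.

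Next I would optimize over $\sigma_u$ for each fixed $u$ using the unconditional Sibson identity (\ref{Sibson}) applied to the conditional distribution $P_{X|U}(\cdot|u)$. This gives, with the usual $\sgn(\alpha-1)$ convention to treat $\alpha>1$ and $\alpha<1$ uniformly,
\begin{align}
\sgn(\alpha-1)\min_{\sigma_u} \sgn(\alpha-1) \sum_x P_{X|U}(x|u)\, \Tr W_x^\alpha \sigma_u^{1-\alpha}
= \Big(\Tr\big(\textstyle\sum_x P_{X|U}(x|u) W_x^\alpha\big)^{1/\alpha}\Big)^{\alpha}.
\end{align}

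Finally I would optimize over $Q_U$. Setting $B_u := \big(\Tr(\sum_x P_{X|U}(x|u) W_x^\alpha)^{1/\alpha}\big)^{\alpha}$, the remaining problem is the classical Sibson-type optimization of $\sum_u P_U(u)^\alpha Q_U(u)^{1-\alpha} B_u$ over distributions $Q_U$. A direct Lagrange-multiplier calculation (or Hölder's inequality) gives the optimizer $Q_U^*(u)\propto (P_U(u)^\alpha B_u)^{1/\alpha}$ and optimal value $(\sum_u P_U(u)B_u^{1/\alpha})^{\alpha}$, which after substituting $B_u^{1/\alpha}=\Tr(\sum_x P_{X|U}(x|u)W_x^\alpha)^{1/\alpha}$ is the claimed expression. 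The only point requiring care is the sign convention for $\alpha<1$, where both optimizations become maxima rather than minima; tracking $\sgn(\alpha-1)$ throughout keeps both cases aligned, and the decoupling of the objective guarantees that the nested optimizations yield the unconstrained optimum over all admissible $\sigma_{X-U-Y}$.
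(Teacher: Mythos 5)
Your proposal is correct and matches the paper's argument essentially step for step: expand $\Tr\rho_{UXY}^\alpha\sigma_{X-U-Y}^{1-\alpha}$ using the block-diagonal structure on $UX$, optimize $\sigma_u$ inside each block via the unconditional Sibson identity \eqref{Sibson} applied to $P_{X|U}(\cdot|u)$, then optimize $Q_U$ via the classical Sibson/H\"{o}lder argument; the paper also handles the sign convention by treating $\alpha<1$ explicitly and noting the $\alpha>1$ case swaps min and max.
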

The classical version of the right hand side 
was used in \cite[Sec. IV, Th. 1]{kaspi11}\cite[Eq.(12)]{SMC} .

\begin{proof}
We show only the case with $\alpha<1$ because 
the other case can be shown by changing the maximum by the minimum.
\begin{align}
&e^{(\alpha-1)I_{\alpha}(X;Y)}=
\max_{Q_U}\max_{\sigma_u}
\sum_u P_U(u)^{\alpha} Q_U(u)^{1-\alpha}\sum_x P_{X|U}(x|u) (\Tr W_x^\alpha \sigma_u^{1-\alpha})\nonumber \\
=&\max_{Q_U}
\sum_u P_U(u)^{\alpha} Q_U(u)^{1-\alpha}
\Big(\max_{\sigma_u} \sum_x P_{X|U}(x|u) (\Tr W_x^\alpha \sigma_u^{1-\alpha}) \Big)\nonumber \\
\stackrel{(a)}{=}  &
\max_{Q_U}
\sum_u P_U(u)^{\alpha} Q_U(u)^{1-\alpha}
\Big(
\Tr \big(\sum_x P_{X|U}(x|u) W_x^\alpha \big)^{\frac{1}{\alpha}}
\Big)^{\alpha} \nonumber \\
=&\max_{Q_U}
\Big( \sum_u P_U(u)
\Tr \big(\sum_x P_{X|U}(x|u) W_x^\alpha \big)^{\frac{1}{\alpha}}
\Big)^{\alpha} 
Q_U(u)^{1-\alpha} \nonumber \\
\stackrel{(b)}{=}  &
\Big(
\sum_{u}P_U(u)
\Tr
\Big( \sum_{x\in {\cal X}} P_{X|U}(x|u) 
 W_x^{\alpha} \Big)^{\frac{1}{\alpha}}\Big)^{\alpha},
\end{align}
where 
Step $(a)$ follows from the application of 
\eqref{Sibson} to the state $\sum_{x}P_{X|U}(x|u)|x\rangle\langle x|\otimes W_x$,
and 
Step $(b)$ follows from the application of H\"{o}lder inequality to the two real vectors
$(\sum_u P_U(u)
\max_{\sigma_u} \sum_x P_{X|U}(x|u) 
(\Tr W_x^\alpha \sigma_u^{1-\alpha}))_{u}$
and $(Q_U(u))_u$.
\end{proof}

\subsection{Universal code}
Next, we consider a universal code construction.
That is, the capacity region can be universally achieved as follows.
\begin{theorem}\Label{T2}
For any $P_{UX}$, there exists a sequence of codes $\Psi_n$ with the rate pair $(R_A,R_B)$
with the positive parameters $r_A$ and $r_B$
to satisfy the following conditions.
For any channel $(\{W_x\}_{x \in {\cal X}},\{W_{Z,x}\}_{x \in {\cal X}})$,
the decoding error probability $\epsilon_Y(\Psi_n;W^{(n)})$ 
of Receiver $Y$ exponentially goes to zero, i.e.,
\begin{align}
\lim_{n\to\infty}\frac{-1}{n}\log \epsilon_Y(\Psi_n;W^{(n)})
\ge &
\min \Big(
\min \big(\max_s s(I_{1-s}(U;Y)-R_A-r_A),r_A\big),\nonumber \\
&\quad \min \big(\max_s s(I_{1-s}(X;Y|U)-R_B-r_B),r_B\big)
\Big) . \Label{Ex1}
\end{align}
The decoding error probability $\epsilon_Z(\Psi_n;W^{(n)})$ 
of Receiver $Z$ exponentially goes to zero, i.e.,
\begin{align}
\lim_{n\to\infty}\frac{-1}{n}\log \epsilon_Z(\Psi_n;W^{(n)})
\ge
\min (\max_s s(I_{1-s}(U;Z)-R_A-r_A),r_A).
\Label{Ex2}
\end{align}
\end{theorem}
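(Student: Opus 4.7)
The plan is to combine three ingredients in the style of \cite{Ha1}: a deterministic type-based superposition encoder obtained from the generalized packing lemma of K\"orner and Sgarro \cite{Korner-Sgarro}, a pair of universal decoders built from Schur--Weyl duality together with method-of-types projections, and an exponent analysis via a Hayashi--Nagaoka-type operator inequality combined with the Gallager-form identities for $I_{1-s}(U;Y)$, $I_{1-s}(X;Y|U)$, and $I_{1-s}(U;Z)$ established just above.

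First, for each sufficiently large $n$ I would fix a joint type $P_{UX,n}$ close to $P_{UX}$ and apply the generalized packing lemma to select $\sM_{A,n}$ outer codewords $\bu(j)$ of type $P_{U,n}$ and, for each $j$, $\sM_{B,n}$ inner codewords $\bx(j,k)$ so that every pair $(\bu(j),\bx(j,k))$ has joint type $P_{UX,n}$. The positive parameters $r_A$ and $r_B$ appear as slack in the covering conditions of \cite{Korner-Sgarro}, which produce channel-free bounds on the number of codewords confusable with an arbitrary reference sequence purely in terms of joint types; this is why $r_A$ later shows up both as a rate penalty inside the $\max_s$ and as a hard cap on the exponent. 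Next, using Schur--Weyl duality on ${\cal H}_Y^{\otimes n}$, I would construct type-indexed projections $\Pi_{\bu}$ onto the representation-theoretic typical subspace associated with the empirical distribution of $\bu$, together with finer conditional projections $\Pi_{\bu,\bx}$; the decoder for Receiver $Y$ is then the square-root measurement built from the operators $\Pi_{\bu(j)}\Pi_{\bu(j),\bx(j,k)}\Pi_{\bu(j)}$. The decoder for Receiver $Z$ uses only the analogous outer projections on ${\cal H}_Z^{\otimes n}$. Because all projections are built from types and from Schur bases, they do not depend on $W$ or $W_Z$, which is precisely what yields universality.

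The analysis then splits the Receiver $Y$ error event into an outer part (wrong $\bu$) and an inner part (correct $\bu$, wrong $\bx$). Applying Hayashi--Nagaoka at each stage reduces each contribution to an average of quantities of the form $\Tr(W^{(n)}_{\psi_{n}(j,k)}\Pi_{\bu(j')})$; averaging over the uniform choice of $(j,k)$ and invoking the Gallager-form identity established in the preceding lemma for $I_{1-s}(U;Y)$ and for $I_{1-s}(X;Y|U)$ delivers the two entries of the outer minimum in \eqref{Ex1}, with the inner $\min$ with $r_A$ (resp.\ $r_B$) capturing the hard cap coming from the packing-lemma support at each stage. The identical argument run only on the outer layer against $W_Z$ yields \eqref{Ex2} via $I_{1-s}(U;Z)$.

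The main obstacle I anticipate is the non-commutation of $\Pi_{\bu(j)}$ and $\Pi_{\bu(j),\bx(j,k)}$ on ${\cal H}_Y^{\otimes n}$: the Receiver $Y$ decoder has to resolve a coarse type ($\bu$) and a finer conditional type ($\bx$ given $\bu$) simultaneously, and showing that the sequential square-root construction together with Hayashi--Nagaoka loses no exponent, so that the classical superposition bound of \cite{kaspi11} is genuinely recovered in the quantum setting through the Petz--R\'enyi quantities, is the delicate step. One also has to verify that only polynomially many Schur-type projections enter the construction, so that the type-counting union bounds hidden in the argument do not spoil the exponential decay.
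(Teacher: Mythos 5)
Your high-level architecture --- K\"{o}rner--Sgarro packing for the encoder, Schur--Weyl for the decoder, Hayashi--Nagaoka for the error split, and the Gallager-form identities for the Petz--R\'{e}nyi quantities --- coincides with the paper's. But the ``delicate step'' you flag at the end is exactly where your proposal stalls, because you have not found the specific construction that removes it. You propose the sandwich $\Pi_{\bu(j)}\,\Pi_{\bu(j),\bx(j,k)}\,\Pi_{\bu(j)}$, which presupposes that the outer and inner projections fail to commute and that this loss must be managed. The paper sidesteps this entirely. It defines $\rho_{\bu}$, $\rho_{\bx}$, and $\rho_{\Univ,n}$ as (permuted) tensor products of maximally mixed states on Schur--Weyl isotypic blocks, and these three density operators \emph{mutually commute}, inherited from $[\rho_{\Univ,m_1}\otimes\rho_{\Univ,m_2},\,\rho_{\Univ,m_1+m_2}]=0$. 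The decoder projections are then spectral tests $\Pi^{(1)}_{j,k}=\{\rho_{\bx(j,k)}\ge C^{(1)}_n\,\rho_{\bu(j)}\}$ and $\Pi^{(2)}_{j}=\{\rho_{\bu(j)}\ge C^{(2)}_n\,\rho_{\Univ,n}\}$, whose product $\Pi_{j,k}=\Pi^{(1)}_{j,k}\Pi^{(2)}_{j}$ is again a projection because everything is simultaneously diagonalizable. There is no non-commutation to tame; the two entries of the outer minimum in \eqref{Ex1} drop out because all comparisons live in a common eigenbasis, with the polynomial prefactors controlled by $\rho^{\otimes n}\le(n+1)^{(d_Y+2)(d_Y-1)/2}\rho_{\Univ,n}$. (Contrast Section \ref{S7V}: in the MAC separate-decoding setting the analogous $\overline{\Pi}^{(1)}_{j,k}$ and $\overline{\Pi}^{(4)}_{j,k}$ really do fail to commute, which is why that code needs the gentle-operator conversion and cannot be handled as you suggest; your proposed decoder is closer to that case than to the superposition decoder.)

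There is a second, quieter gap. You say the packing lemma produces ``channel-free bounds on the number of codewords confusable with an arbitrary reference sequence,'' but you never say how those combinatorial counts turn into operator inequalities on ${\cal H}_Y^{\otimes n}$. The mechanism in the paper is Lemma \ref{XP7}: averaging the uniform distribution on the packing-lemma codebook $\hat{\cal M}_{UX,n}$ over the stabilizer group $S_{\bu,\bx}$ is dominated (up to $e^{O(n^{3/4})}$) by the i.i.d.~measure $P_{UX}^n$ or $P_{X|U}^n$, and each decoder projection is invariant under that group. This is precisely what converts the deterministic packing-lemma codebook into something that behaves like a random code in every summand of the error decomposition, so that the cross terms $\Tr W^{(n)}_{\psi_n(j,k)}\Pi^{(3)}(j',k')$ and $\Tr W^{(n)}_{\psi_n(j,k)}\Pi^{(1)}(j,k')$ can be capped by $r_A$ and $r_B$. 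Without this reduction step you have identified the right ingredients but not the argument that stitches them into \eqref{Ex1} and \eqref{Ex2}.
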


The decoder POVM to achieve the performance presented in the above theorem is constructed by considering the irreducible decomposition. 
The use of irreducible decomposition can be considered as a quantum version of Type method, which is summarized in \cite{Group2}.
The optimization of the respective exponents can be done as follows.
The optimization of the exponents of Receiver $Y$ is done as
\begin{align}
&
\max_{r_A,r_B}
\min \Big(
\min \big(\max_s s(I_{1-s}(U;Y)-R_A-r_A),r_A\big),
\min \big(\max_s s(I_{1-s}(X;Y|U)-R_B-r_B),r_B\big)
\Big) \nonumber \\
=&
\min \Big(
\max_{r_A}
\min \big(\max_s s(I_{1-s}(U;Y)-R_A-r_A),r_A\big),
\max_{r_B}
\min \big(\max_s s(I_{1-s}(X;Y|U)-R_B-r_B),r_B\big)
\Big) \nonumber \\
= &
\min \Big(
\max_{0 \le s\le 1}\frac{s(I_{1-s}(U;Y) -R_A)}{1+s},
\max_{0 \le s\le 1}\frac{s(I_{1-s}(X;Y|U) -R_B)}{1+s}
\Big) ,
\end{align}
where the maximum is achieved when
\begin{align}
r_A=&\max_{0 \le s\le 1}\frac{s(I_{1-s}(U;Y) -R_A)}{1+s} \\
r_B=&\max_{0 \le s\le 1}\frac{s(I_{1-s}(X;Y|U) -R_B)}{1+s} .
\end{align}
The optimization of the exponent of Receiver $Z$ is done as
\begin{align}
&
\max_{r_A}
\min \big(\max_s s(I_{1-s}(U;Z)-R_A-r_A),r_A\big)
= 
\max_{0 \le s\le 1}\frac{s(I_{1-s}(U;Z) -R_A)}{1+s},
\end{align}
where the maximum is achieved when
\begin{align}
r_A=&\max_{0 \le s\le 1}\frac{s(I_{1-s}(U;Z) -R_A)}{1+s} .
\end{align}

However, the optimum choice of $r_A$ for the exponent of Receiver $Y$ is different from that of 
Receiver $Z$.
Further, the optimum choice depends on the choice of channel.
However, when $R_A+r_A < \min(I(U;Y)_{P_{UX}},I(U;Z)_{P_{UX}})$
and $R_B +r_A < I(X;Y|U)_{P_{UX}}$, both exponents are strictly positive.
Therefore, we have the following corollary.

\begin{corollary}\Label{Cor1}
For any $P_{UX}$, there exists a sequence of codes with the rate pair $(R_A,R_B)$
with arbitrary small positive parameters $r_A$ and $r_B$
to satisfy the following conditions.
When a channel $(\{W_x\}_{x \in {\cal X}},\{W_{Z,x}\}_{x \in {\cal X}})$
satisfies $R_A+r_A < \min(I(U;Y)_{P_{UX}},I(U;Z)_{P_{UX}})$
and $R_B +r_A < I(X;Y|U)_{P_{UX}}$,
the decoding error probabilities of both receivers exponentially go to zero.
\end{corollary}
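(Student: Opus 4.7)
The plan is to derive Corollary \ref{Cor1} as a direct continuity consequence of Theorem \ref{T2}: I would show that under the strict inequalities in the hypothesis, each of the three exponents appearing on the right-hand sides of \eqref{Ex1} and \eqref{Ex2} is strictly positive, which by Theorem \ref{T2} immediately yields exponential decay of both decoding error probabilities.

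First, I would record the continuity statement that underlies everything. From the Gallager-type formula \eqref{Sibson} and from the lemma immediately preceding this corollary, the maps $s \mapsto I_{1-s}(U;Y)_{P_{UX}}$, $s \mapsto I_{1-s}(U;Z)_{P_{UX}}$, and $s \mapsto I_{1-s}(X;Y|U)_{P_{UX}}$ are continuous on $[0,1)$, and their limits as $s \to 0^{+}$ coincide with the ordinary (conditional) mutual informations $I(U;Y)_{P_{UX}}$, $I(U;Z)_{P_{UX}}$, and $I(X;Y|U)_{P_{UX}}$. This is immediate from the explicit closed forms, because the functional calculus $A \mapsto A^{\alpha}$ is continuous in $\alpha > 0$ on the (compact) set of sub-states involved, and $\alpha \to 1$ recovers the standard von Neumann entropic expressions.

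Next, for any expression of the form $\max_{s} s\bigl(I_{1-s}(\cdot) - R - r\bigr)$ in which $R + r$ is strictly less than the limiting mutual information at $s=0$, the above continuity supplies some $s_{0} \in (0,1)$ with $I_{1-s_{0}}(\cdot) - R - r > 0$; hence the maximum is bounded below by $s_{0}\bigl(I_{1-s_{0}}(\cdot) - R - r\bigr) > 0$. Applying this observation to each of the three $\max_{s}$-terms appearing in \eqref{Ex1} and \eqref{Ex2} under the hypotheses $R_A + r_A < \min(I(U;Y)_{P_{UX}}, I(U;Z)_{P_{UX}})$ and $R_B + r_A < I(X;Y|U)_{P_{UX}}$, and using $r_A, r_B > 0$ so that the subsequent inner minima $\min(\,\cdot\,, r_A)$ and $\min(\,\cdot\,, r_B)$ remain strictly positive, I obtain that the lower bounds in \eqref{Ex1} and \eqref{Ex2} are both strictly positive. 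Taking the outer minimum in \eqref{Ex1} preserves positivity, so Theorem \ref{T2} delivers a sequence of codes with the desired exponentially vanishing error probabilities for both receivers.

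I do not foresee any real obstacle. The entire argument is a compactness/continuity exercise on the explicit Gallager-type formulas already present in the excerpt; no additional coding-theoretic input is required beyond Theorem \ref{T2}, so the corollary is essentially a repackaging of that theorem on the open region where the exponent bounds are nontrivial.
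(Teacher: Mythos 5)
Your proposal is correct and follows essentially the same route as the paper: the paper derives Corollary~\ref{Cor1} by the one-line observation preceding it (that under the stated strict inequalities both right-hand sides of \eqref{Ex1} and \eqref{Ex2} are strictly positive), and your continuity-at-$s\to 0^{+}$ argument is precisely the standard way to justify why each $\max_{s} s\,(I_{1-s}(\cdot)-R-r)$ term is strictly positive when the ordinary mutual information strictly exceeds $R+r$. No new coding-theoretic input is needed beyond Theorem~\ref{T2}, so this is a faithful reconstruction of the paper's reasoning.
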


The code given in the above theorem can be considered as a universal superposition code.

\subsection{Compound channel}
Next, to rigorously handle the optimization of the worst case, 
we consider a compound channel model ${\cal W}:=\{
(\{W_{x,\theta}\}_{x \in {\cal X}},\{W_{Z,x,\theta}\}_{x \in {\cal X}})\}_{\theta \in \Theta}$  with a parametric space 
$\Theta$.
In this model, we do not know what $\theta \in \Theta$ is the true parameter.
Hence, we need to consider the worst case.
That is, a rate pair $(R_A,R_B)$ is achievable under the channel model ${\cal W}$ when there exists a sequence of codes with the transmission rate pair $(R_A,R_B)$ such that the decoding error probabilities of both receivers are goes to zero
when the true channel is any element of the channel model ${\cal W}$.
We denote the capacity region of the compound channel model ${\cal W}$
by ${\cal C}_{\cal W}$, i.e.,
\begin{align}
{\cal C}_{\cal W}:= \cl {\{ (R_A,R_B)| (R_A,R_B) \hbox{ is achievable under the channel model }
{\cal W}.\}}.
\end{align}

\begin{theorem}\Label{T2-5}
The following equation holds;
\begin{align}
{\cal C}_{\cal W}
=&\cl \bigcup_{P_{UX}} \Big\{(R_A,R_B)\Big| R_A \le  \min_\theta \min(I(U;Y)_{P_{UX},\theta},I(U;Z)_{P_{UX},\theta}), 
R_B \le \min_\theta I(X;Y|U)_{P_{UX},\theta} \Big)\Big\}_{P_{UX}} .
\end{align}
\end{theorem}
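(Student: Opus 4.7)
The plan is to establish the two inclusions separately, using Corollary \ref{Cor1} for the direct part and the single-letter converse of Theorem \ref{T1} (proved in Section \ref{S4}) for the converse. The key structural point is that a compound-achievable code is, by definition, a single sequence of codes that works simultaneously for every $\theta\in\Theta$, so the code-induced auxiliary distribution $P_{UX}$ is the same for all $\theta$.

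For the direct part, fix $P_{UX}$ and any interior rate pair $(R_A,R_B)$ with $R_A < \min_{\theta}\min(I(U;Y)_{P_{UX},\theta},I(U;Z)_{P_{UX},\theta})$ and $R_B < \min_{\theta}I(X;Y|U)_{P_{UX},\theta}$. Choose positive $r_A,r_B$ small enough that the strict inequalities $R_A+r_A<\min_{\theta}\min(I(U;Y)_{P_{UX},\theta},I(U;Z)_{P_{UX},\theta})$ and $R_B+r_A<\min_{\theta}I(X;Y|U)_{P_{UX},\theta}$ still hold. Corollary \ref{Cor1}, applied to this $P_{UX}$ and $(R_A,R_B,r_A,r_B)$, supplies a single sequence of codes whose decoding error probabilities for both receivers vanish for every channel in $\mathcal{W}$, because the corollary's hypothesis is met for each individual $\theta$. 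Taking the union over $P_{UX}$ and closing yields the inclusion $\supseteq$.

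For the converse, suppose $(R_A,R_B)$ is achievable under $\mathcal{W}$ via a sequence $\{\Psi_n\}$ with $\epsilon_Y(\Psi_n;W^{(n)}_{\theta})\to 0$ and $\epsilon_Z(\Psi_n;W^{(n)}_{\theta})\to 0$ for every $\theta$. The converse of Theorem \ref{T1} in Section \ref{S4} constructs, from the encoder and a time-sharing/auxiliary variable, a distribution $P_{UX}^{(n)}$ depending only on the code such that, applied to the channel indexed by $\theta$,
\begin{align*}
R_A &\le \min\bigl(I(U;Y)_{P_{UX}^{(n)},\theta},\,I(U;Z)_{P_{UX}^{(n)},\theta}\bigr)+o(1),\\
R_B &\le I(X;Y|U)_{P_{UX}^{(n)},\theta}+o(1).
\end{align*}
Since the same $P_{UX}^{(n)}$ appears in the bound for every $\theta$, the $\inf_\theta$ may be brought inside. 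Using a Carath\'eodory-type cardinality bound on $|\mathcal{U}|$, we pass to a subsequence along which $P_{UX}^{(n)}$ converges to some $P_{UX}^\ast$; continuity of the classical-quantum mutual informations in the input distribution then gives $R_A\le\min_\theta\min(I(U;Y)_{P_{UX}^\ast,\theta},I(U;Z)_{P_{UX}^\ast,\theta})$ and $R_B\le\min_\theta I(X;Y|U)_{P_{UX}^\ast,\theta}$, placing $(R_A,R_B)$ in the closed right-hand side.

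The main obstacle is the converse, specifically verifying that the auxiliary variable $U$ produced in the converse of Theorem \ref{T1} is built purely from the encoder and a time-sharing index, so that $P_{UX}^{(n)}$ is genuinely $\theta$-independent and the $\min_\theta$ may be pulled inside the bound. Once this is checked from the construction in Section \ref{S4}, the remaining steps (Carath\'eodory cardinality bound to make $\mathcal{U}$ finite, compactness to extract a limit $P_{UX}^\ast$, and continuity of the quantum conditional mutual information in the state) are standard; the direct part then requires no additional work beyond invoking Corollary \ref{Cor1}.
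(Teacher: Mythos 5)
Your direct part and the opening of your converse are the paper's own argument: Corollary~\ref{Cor1} handles achievability, and for the converse you rerun the proof of Theorem~\ref{T1} to obtain a code-induced distribution $P_n$ (built from $M_{A,n}$ and the time-sharing index, hence channel-independent) and pull $\min_\theta$ inside the Fano bounds. Up to this point you are following Subsection~\ref{S4-2} exactly.

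The gap is in the limiting step you add afterwards. You attempt to manufacture a single limiting $P_{UX}^{\ast}$ by combining a Carath\'eodory-type cardinality bound on $|\mathcal{U}|$ with compactness. For the compound problem this fails: the Fenchel--Eggleston/Carath\'eodory reduction of $|\mathcal{U}|$ is obtained by counting the scalar quantities that the reduced distribution must jointly preserve, and here you would have to preserve (or not decrease) $I(U;Y)_\theta$, $I(U;Z)_\theta$, and $I(X;Y|U)_\theta$ for every $\theta\in\Theta$ simultaneously, so the resulting bound scales with $|\Theta|$ and yields nothing finite when $\Theta$ is continuous. The step is also unnecessary. Because the target region is $\cl\bigcup_{P_{UX}}\bigcap_{\theta}\{\cdots\}$, the Fano argument already places a $\delta_n$-perturbed rate pair inside the slice indexed by $P_n$ (hence inside the union) for each $n$, and the outer closure absorbs the limit without any need to produce a convergent sequence of auxiliary distributions. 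That is precisely what the paper's proof does: it stops at the $\liminf$ bounds and invokes the closure, never extracting a limiting $P_{UX}^{\ast}$. If you drop the Carath\'eodory/compactness detour and conclude directly from the closure, your converse coincides with the paper's.
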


\section{Classical-Quantum multiple access channel}\Label{S3}
\subsection{Universal code with joint decoding for corner points}
This section states our results for classical-quantum multiple access channel (cq-MAC) with two input systems ${\cal A}$ and ${\cal B}$, in which
the output state on ${\cal H}_Y$ is given as
$W_{a,b}$ dependently of $a \in {\cal A}$ and $a \in {\cal B}$.
The aim of classical-quantum multiple access channel is transmission of two kinds of messages to 
the quantum receiver $Y$.
One message is sent from Sender $A$ and the other message is sent from Sender $B$.


An encoder is a pair of maps 
$\psi_{A,n}$ from $\hat{M}_{A,n}:=\{1, \ldots, \sM_{A,n} \}$ to ${\cal A}^n$ and 
$\psi_{B,n}$ from $\hat{M}_{B,n}:=\{1, \ldots, \sM_{B,n} \}$ to ${\cal B}^n$.
A decoder with joint decoding
is given by a POVM $D^n:=\{D_{j,k}^n\}_{(j,k)\in \hat{M}_{A,n}\times \hat{M}_{B,n}}$
on ${\cal H}_Y^{\otimes n}$.
Then, the triplet $(\psi_{A,n},\psi_{B,n},D^n)$
is called a code with joint decoding, and is denoted by $\Psi_n$.
In the following, it is simplified to a code. 
The message sizes $\sM_{A,n}$ and $\sM_{B,n}$ are written as $|\Psi_n|_A $ and $|\Psi_n|_B$, respectively.
The average decoding error probability is given as
\begin{align}
\epsilon(\Psi_n;W^{(n)}):=\sum_{(j,k)\in\hat{M}_{A,n}\times \hat{M}_{B,n}}
\frac{1}{\sM_{A,n} \sM_{B,n}}
\Tr W_{\psi_{A,n}(j),\psi_{B,n}(k)}^{(n)}
(I- D_{j,k}^n).
\end{align}

The transmission rates from $A$ and $B$ are written as $R_A$ and $R_B$.
The rate pair $(R_A,R_B)$ is called achievable when there exists a sequence of codes
$\{\Psi_n\}$ such that 
$R_A= \lim_{n\to \infty} \frac{1}{n}\log |\Psi_n|_A$,
$R_B= \lim_{n\to \infty} \frac{1}{n}\log |\Psi_n|_B$,
and $\epsilon(\Psi_n;W^{(n)})\to 0$.
The closure of the set of achievable rate pairs $(R_A,R_B)$ is called 
the capacity region, and is denoted by ${\cal C}$.
Winter \cite{Winter} showed that  
\begin{align}
{\cal C}=\cl \bigcup_{P_A \times P_B}
\Big\{ (R_A,R_B) \Big|
R_A \le I(A;Y|B)_{P_A \times P_B},
R_B \le I(B;Y|A)_{P_A \times P_B},
R_A+ R_B \le I(AB;Y)_{P_A \times P_B}
\Big\}.
\Label{Winter}
\end{align}

Next, we consider universal codes for cq-MAC.
To reuse the derivation of our universal codes for 
classical-quantum broadcast channels with degraded message sets,
we focus on universal codes to achieve only the external points
$(I(A;Y),I(B;Y|A))$ and $(I(A;Y|B),I(B;Y))$.
These values depend on the choice of the distributions $P_A,P_B$ and the cq-MAC.
Consider the case when we have two choices of the cq-MAC, $W_{a,b|0}$ and $W_{a,b|1}$.
Then, the mutual information and the conditional mutual information 
is denoted by $I(A;Y)_i,I(B;Y|A)_i$ for $i=0,1$ to express the dependence of the channel.
When we optimize the rate $I(A;Y)$ under a constraint for another rate $I(B;Y|A)$, we need to consider the following problem for a given $R_B>0$;
\begin{align}
\max_{P_A,P_B} \{ \min(I(A;Y)_0,I(A;Y)_1)| I(B;Y|A)_0,I(B;Y|A)_1 \ge R_B\}.\Label{Hi1}
\end{align}
Here, we consider only the product distribution $P_A\times P_B$.
However, it is possible to choose probabilistic mixture for this choice.
That is, alternatively, we consider the maximization;
\begin{align}
\max_{P_{A-T-B}} \{ \min(I(A;Y|T)_0,I(A;Y|T)_1)| I(B;Y|AT)_0,I(B;Y|AT)_1 \ge R_B\},
\Label{Hi2}
\end{align}
where 
the joint distribution on $A,T,B$ satisfies the Markov chain condition $A-T-B$.
Clearly, \eqref{Hi1} $\le$ \eqref{Hi2}.
We have examples for the gap between \eqref{Hi1} and \eqref{Hi2}.
As shown in Section \ref{S7},
we construct universal codes to achieve \eqref{Hi2}.

\if0
\begin{align}
W_{a,1|0}=\delta_a,~
W_{a,0|0}=P_{mix},~
W_{1,b|1}=\delta_b,~
W_{0,b|1}=P_{mix}.
\end{align}
\fi

\begin{theorem}\Label{T3}
For any distribution $P_{A-T-B} $ with Markov condition $A-T-B$, 
there exists a sequence of codes $\{\Psi_n\}$ with the rate pair $(R_A,R_B)$ and arbitrary small
 positive parameters $r_A$ and $r_B$
to satisfy the following conditions.
For any channel $W=\{W_{a,b}\}_{a\in {\cal A},b\in {\cal B}}$,
the exponent of the average decoding error probability $\epsilon(\Psi_n;W^{(n)})$ 
is not smaller than
\begin{align}
\min \Big(& \min \big(\max_s s(I_{1-s}(A;Y|T)_{P_{A-T-B}}-R_A-r_A),r_A\big),\nonumber \\
&\min \big(\max_s s(I_{1-s}(B;Y|AT)_{P_{A-T-B}}-R_B-r_B),r_B\big) \Big). 
\Label{Ex3}
\end{align}
That is, when $R_A+r_A< I(A;Y|T)_{P_{A-T-B}}$ and $R_B +r_B< I(B;Y|AT)_{P_{A-T-B}}$,
the average decoding error probability $\epsilon(\Psi_n;W^{(n)})$ exponentially goes to zero.
\end{theorem}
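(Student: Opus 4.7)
The plan is to parallel the proof of Theorem~\ref{T2} by exploiting the structural analogy between c-q BCD superposition coding and the corner-point case of c-q MAC. In this analogy, the time-sharing variable $T$ plays the role of the BCD cloud center $U$, while the pair of inputs $(A,B)$ replaces $X$, with the crucial restriction that $A$ and $B$ are conditionally independent given $T$ (the Markov condition $A-T-B$). This conditional independence is essential because Senders $A$ and $B$ cannot coordinate their codewords except through a shared time-sharing sequence $\bt$ known to both. Under this correspondence, the two exponents in \eqref{Ex3} correspond term-by-term to the two exponents in \eqref{Ex1} of Theorem~\ref{T2}, with $(U,X,Y)$ replaced by $(T,A,Y)$ and $(T,A,B,Y)$ respectively.

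For the encoder, I would first fix a single $\bt$ of type close to $P_T$, known to both senders as part of the code design. Conditional on $\bt$, I would apply the generalized packing lemma of Liu and Hughes \cite{Liu-Hughes}, used in the style of \cite{Ha1} rather than in its original usage, to deterministically build Sender $A$'s codebook $\{\psi_{A,n}(j)\}_{j=1}^{\sM_{A,n}}\subset{\cal A}^n$ of sequences jointly typical with $\bt$ for $P_{AT}$, together with joint-type separation guarantees; independently, I would do the same for Sender $B$ to obtain $\{\psi_{B,n}(k)\}_{k=1}^{\sM_{B,n}}\subset{\cal B}^n$ jointly typical with $\bt$ for $P_{BT}$. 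Because the construction is applied in the style of \cite{Ha1}, the codebooks depend only on $(P_{AT},P_{BT})$ and the block length $n$, not on the channel $W=\{W_{a,b}\}$, which is the source of universality.

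For the decoder, I would construct the joint POVM via irreducible decomposition using Schur--Weyl duality combined with the method of types, exactly as in the proof of Theorem~\ref{T2}. For each candidate pair $(j,k)$, the decoder projects onto the universal type projector on ${\cal H}_Y^{\otimes n}$ corresponding to the conditional type of the output given $(\bt,\psi_{A,n}(j),\psi_{B,n}(k))$. The error analysis then decomposes into two stages: the probability that the true output fails to land in the correct type subspace is bounded via the Schur dimension polynomial, contributing the $r_A$ and $r_B$ slack terms; the probability of confusion with an incorrect codeword is controlled by a Petz--R\'{e}nyi Gallager-style bound, yielding $\max_s s(I_{1-s}(A;Y|T)-R_A-r_A)$ for the $A$-error and $\max_s s(I_{1-s}(B;Y|AT)-R_B-r_B)$ for the $B$-error; the overall exponent is the minimum across stages, giving \eqref{Ex3}.

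The main obstacle I anticipate is the non-commutativity of the type projectors used in the two-stage joint decoding---an issue that the authors themselves flag in the discussion motivating separate decoding in Section~\ref{S7V}. In the corner-point setting targeted here, however, this obstacle is manageable: the two projections are naturally ordered (first $(A,T)$-conditional, then $(A,B,T)$-conditional), and because $\bt$ is a fixed deterministic sequence, the $(A,T)$-conditional universal projectors are compatible with the common Schur isotypic subspaces, so a successive projection argument---or, if additional care is needed, a gentle operator lemma bound along the lines used later in the paper---yields an additive error bound that is absorbed into the minimum in \eqref{Ex3}. This is precisely why joint decoding suffices for corner points even though a direct extension fails for general interior points of the capacity region, motivating the separate decoding construction in Section~\ref{S7V}.
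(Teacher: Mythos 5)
Your high-level architecture matches the paper's proof of Theorem~\ref{T3}: fix a deterministic time-sharing sequence $\bt$ of type $P_T$ as in \eqref{Defmt}, apply the Liu--Hughes generalized packing lemma (Proposition~\ref{lB-2W}) to build the two deterministic codebooks channel-independently in the style of \cite{Ha1}, and construct the joint decoder from Schur--Weyl universal projectors using the exact correspondence $(U,X)\leftrightarrow(T,(A,B))$ with Theorem~\ref{T2}. That is the paper's route.

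However, two of the technical details are off in ways that would matter if you tried to push the proof through. First, you have the two halves of the error analysis swapped. In the paper the Petz--R\'{e}nyi/Gallager exponents $\max_s s(I_{1-s}(A;Y|T)-R_A-r_A)$ and $\max_s s(I_{1-s}(B;Y|AT)-R_B-r_B)$ arise from bounding the probability that the \emph{true} codeword's output is rejected by its own typicality projector (the terms $\Tr W^{(n)}(I-\overline{\Pi}^{(1)}_{j,k})$ and $\Tr W^{(n)}(I-\overline{\Pi}^{(2)}_{j})$, estimates \eqref{KJ2C} and \eqref{2nd-E2}), which combine the Schur dimension polynomial bound \eqref{LN9} with a H\"older/R\'enyi argument; whereas the slacks $r_A$, $r_B$, $r_A+r_B$ arise from bounding the probability that an \emph{incorrect} codeword is accepted, controlled by the packing-lemma inequalities of Lemma~\ref{YP7W} (estimates \eqref{3rd-E2}--\eqref{2nd-UK}). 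You attribute these oppositely. Second, and more consequentially, the commutativity of the decoding projectors at corner points is not something that needs a successive-projection argument or a gentle-operator patch: the three operators $\bigotimes_t\rho_{\psi_{A,n,t}(j),\psi_{B,n,t}(k)}$, $\bigotimes_t\rho_{\psi_{A,n,t}(j)}$, and $\bigotimes_t\rho_{\Univ,m_t}$ commute \emph{exactly} because of the nested-block identity $[\rho_{\Univ,m_1}\otimes\rho_{\Univ,m_2},\rho_{\Univ,m_1+m_2}]=0$ (\cite[(6.40)]{Group2}), so $\overline{\Pi}^{(1)}$, $\overline{\Pi}^{(2)}$, $\overline{\Pi}^{(3)}$ commute and the single pretty-good measurement decoder works without loss. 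If you were to instead invoke the gentle operator lemma here, the $\sqrt{\cdot}$ would halve part of the exponent and you would not recover \eqref{Ex3}; the gentle lemma is used in this paper only in Subsection~\ref{S3-C} to lift separate decoding to joint decoding at \emph{general} interior points (Corollary~\ref{T3-J-C}), where the corner-point commutativity structure is unavailable.
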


Theorem \ref{T3} can be shown in a similar way as Theorem \ref{T2} when 
$T$ takes a single value.
Given a product distribution $P_A\times P_B$, 
when $U=A$, $X=(A,B)$, and $P_{UX}(a,a',b)=P_A(a)\delta_{a,a'}P_{B}(b) $,
we have $I(U;Y)=I(A;Y)$ and $I(X;Y|U)=I(AB;Y|A)=I(B;Y|A)$,
that is, the rate pair of the superposition code equals the rate pair of the multiple access code.
The case with a general $T$ needs more complicated treatment.
This correspondence plays an essential role in our proof of Theorem \ref{T3}. 

A general point of the capacity region can be achieved by applying time sharing to two corner points achieved by 
Theorem \ref{T3}.
Since the decoding error probability in Theorem \ref{T3} goes to zero exponentially,
the constructed code by the time sharing also has an exponentially small decoding error probability.
Therefore, we have the following corollary.

\begin{corollary}\Label{Cor2}
For any distributions $P_{A-T-B}^0 $ and $P_{A-T-B}^1 $ with Markov condition $A-T-B$, 
there exists a sequence of codes $\{\Psi_n\}$ with the rate pair $(\lambda R_{A,0}+(1-\lambda)R_{A,1},
\lambda R_{B,0}+(1-\lambda) R_{B,1})$ 
with $\lambda \in [0,1]$ and an arbitrary small positive parameter $\epsilon >0$
to satisfy the following conditions.
When a channel $\{W_{a,b}\}_{a\in {\cal A},b\in {\cal B}}$
satisfies 
the conditions 
$R_{A,0} +\epsilon\le I(A;Y|T)_{P_{A-T-B}^0 }$,
$R_{A,1} +\epsilon\le I(A;Y|BT)_{P_{A-T-B}^1 }$,
$R_{B,0} +\epsilon\le I(B;Y|AT)_{P_{A-T-B}^0 }$, and
$R_{B,1} +\epsilon\le I(B;Y|T)_{P_{A-T-B}^1 }$
the average decoding error probability $\epsilon(\Psi_n;W^{(n)})$ exponentially goes to zero.
\end{corollary}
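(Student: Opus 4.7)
The plan is pure time sharing between the two corner points provided by Theorem \ref{T3}. Since Theorem \ref{T3} is symmetric in the roles of the two senders, applying it directly to $P^0_{A-T-B}$ yields the corner $(R_{A,0},R_{B,0})$ with $R_{A,0}<I(A;Y|T)_{P^0}$ and $R_{B,0}<I(B;Y|AT)_{P^0}$, and applying it after relabeling $A \leftrightarrow B$ (so that $B$ plays the role of the ``first-decoded'' sender) to $P^1_{A-T-B}$ yields the opposite corner with $R_{A,1}<I(A;Y|BT)_{P^1}$ and $R_{B,1}<I(B;Y|T)_{P^1}$. The four hypotheses of the corollary are exactly the conditions needed to invoke Theorem \ref{T3} at these two corners with slack $\epsilon>0$, which then guarantees an exponentially small error probability in each block.

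Concretely, for each block length $n$ I would partition the $n$ channel uses into two consecutive segments of lengths $n_0 := \lfloor \lambda n \rfloor$ and $n_1 := n - n_0$, so that $n_0/n \to \lambda$. By Theorem \ref{T3} there exists a sequence of codes $\Psi_{n_0}^{(0)}$ with encoders $(\psi_{A,n_0}^{(0)},\psi_{B,n_0}^{(0)})$, rate pair $(R_{A,0},R_{B,0})$, and (for sufficiently small auxiliary $r_A,r_B$ chosen so that $R_{A,0}+r_A<I(A;Y|T)_{P^0}$ and $R_{B,0}+r_B<I(B;Y|AT)_{P^0}$) exponent strictly positive in the bound \eqref{Ex3}. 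An analogous sequence $\Psi_{n_1}^{(1)}$ is obtained from the $A\leftrightarrow B$ symmetric version of Theorem \ref{T3} applied to $P^1_{A-T-B}$.

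The composite code $\Psi_n$ is defined by splitting each message as $j=(j_0,j_1)\in \hat M_{A,n_0}^{(0)}\times \hat M_{A,n_1}^{(1)}$ and $k=(k_0,k_1)\in \hat M_{B,n_0}^{(0)}\times \hat M_{B,n_1}^{(1)}$, encoding as
\begin{align}
\psi_{A,n}(j_0,j_1) := \bigl(\psi_{A,n_0}^{(0)}(j_0),\,\psi_{A,n_1}^{(1)}(j_1)\bigr), \quad
\psi_{B,n}(k_0,k_1) := \bigl(\psi_{B,n_0}^{(0)}(k_0),\,\psi_{B,n_1}^{(1)}(k_1)\bigr),
\end{align}
and decoding by the tensor-product POVM $D_{(j_0,k_0),(j_1,k_1)}^n := D^{n_0}_{j_0,k_0}\otimes D^{n_1}_{j_1,k_1}$. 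Because the channel $W^{(n)} = W^{(n_0)}\otimes W^{(n_1)}$ factorizes across the two segments, the error event for $\Psi_n$ is contained in the union of the error events of $\Psi_{n_0}^{(0)}$ and $\Psi_{n_1}^{(1)}$, and hence
\begin{align}
\epsilon(\Psi_n;W^{(n)}) \le \epsilon(\Psi_{n_0}^{(0)};W^{(n_0)}) + \epsilon(\Psi_{n_1}^{(1)};W^{(n_1)}),
\end{align}
which is exponentially small in $n$ by Theorem \ref{T3}. Finally, the total rate of $\Psi_n$ is $(n_0 R_{A,0}+n_1 R_{A,1})/n \to \lambda R_{A,0}+(1-\lambda)R_{A,1}$ and similarly for $B$.

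I do not expect any serious obstacle. The only mildly non-routine point is justifying the ``$A\leftrightarrow B$ symmetric version'' of Theorem \ref{T3}, but this follows immediately by applying the theorem verbatim to the relabeled channel $\tilde W_{b,a} := W_{a,b}$ and the distribution obtained by swapping the first and third components of $P^1_{A-T-B}$ (the Markov condition $A-T-B$ is invariant under this swap). All other ingredients — the union bound on error probabilities across independent blocks, and the convergence of rates under time sharing — are standard and carry over directly to the c-q setting since the decoders act on distinct tensor factors and therefore commute.
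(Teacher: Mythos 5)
Your proposal is correct and takes essentially the same approach as the paper: the paper proves Corollary~\ref{Cor2} by exactly this time-sharing argument between the two corner points supplied by Theorem~\ref{T3}, remarking that the composite code inherits an exponential error bound since each block does. Your filling-in of the details — the $A\leftrightarrow B$ relabeling to get the second corner, the split into segments of lengths $\lfloor\lambda n\rfloor$ and $n-\lfloor\lambda n\rfloor$, the tensor-product decoder, and the factorization inequality $1-ab\le(1-a)+(1-b)$ — is exactly what the paper leaves implicit.
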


\subsection{Universal code with separate decoding}
When the rate region is not a corner point,
we construct only a universal code with separate decoding, which is defined as follows.
Given 
an encoder, a pair of maps 
$\psi_{A,n}$ from $\hat{M}_{A,n}:=\{1, \ldots, \sM_{A,n} \}$ to ${\cal A}^n$ and 
$\psi_{B,n}$ from $\hat{M}_{B,n}:=\{1, \ldots, \sM_{B,n} \}$ to ${\cal B}^n$,
a decoder with separate decoding
is given a pair of POVMs 
$D^{A,n}:=\{D_{j}^{A,n}\}_{j\in \hat{M}_{A,n}}$
on ${\cal H}_Y^{\otimes n}$
and $D^{B,n}:=\{D_{k}^{B,n}\}_{k\in \hat{M}_{B,n}}$
on ${\cal H}_Y^{\otimes n}$.
Then, the quadruple $(\psi_{A,n},\psi_{B,n},D^{A,n},D^{B,n})$
is called a code with separate decoding, and is denoted by $\Psi_{S,n}$.
The message sizes $\sM_{A,n}$ and $\sM_{B,n}$ are written as $|\Psi_{S,n}|_A $ and $|\Psi_{S,n}|_B$, respectively.
The respective average decoding error probabilities are given as
\begin{align}
\epsilon_A(\Psi_{S,n};W^{(n)})&:=\sum_{(j,k)\in\hat{M}_{A,n}\times \hat{M}_{B,n}}
\frac{1}{\sM_{A,n} \sM_{B,n}}
\Tr W_{\psi_{A,n}(j),\psi_{B,n}(k)}^{(n)}
(I- D_{j}^{A,n}) \\
\epsilon_B(\Psi_{S,n};W^{(n)})&:=\sum_{(j,k)\in\hat{M}_{A,n}\times \hat{M}_{B,n}}
\frac{1}{\sM_{A,n} \sM_{B,n}}
\Tr W_{\psi_{A,n}(j),\psi_{B,n}(k)}^{(n)}
(I- D_{k}^{B,n}) .
\end{align}
Then, we consider their maximum as
\begin{align}
\epsilon(\Psi_{S,n};W^{(n)}):=
\max(\epsilon_A(\Psi_{S,n};W^{(n)}),\epsilon_B(\Psi_{S,n};W^{(n)})).
\end{align}
A rate pair $(R_A,R_B)$ is called achievable with separation decoding
when there exists a sequence of codes of separation decoding
$\{\Psi_{S,n}\}$ such that 
$R_A= \lim_{n\to \infty} \frac{1}{n}\log |\Psi_n|_A$,
$R_B= \lim_{n\to \infty} \frac{1}{n}\log |\Psi_n|_B$,
and $\epsilon(\Psi_{S,n};W^{(n)})\to 0$.
The closure of the set of achievable rate pairs with separation decoding
$(R_A,R_B)$ is called 
the capacity region with separation decoding, and is denoted by ${\cal C}_S$.
The definition implies the relation ${\cal C}\subset {\cal C}_S$.
As shown in Subsection \ref{S3-C}, we have the opposite relation, i.e., we have
 \begin{align}
{\cal C}_S={\cal C}.\Label{SMF}
\end{align}
Also, as shown in Section \ref{S7V},
we have a separate decoding version of universal codes as follows.

\begin{theorem}\Label{T3-J}
For any distribution $P_{A-T-B} $ with Markov condition $A-T-B$, 
there exists a sequence of codes $\{\Psi_{S,n}\}$ of separate decoding with the rate pair $(R_A,R_B)$ with 
arbitrary small
positive parameters $r_A$ and $r_B$
to satisfy the following conditions.
For any channel $W=\{W_{a,b}\}_{a\in {\cal A},b\in {\cal B}}$, we have
\begin{align}
\lim_{n\to \infty}\frac{-1}{n}\log \epsilon_A(\Psi_{S,n};W^{(n)}) \ge & 
\min\Big( \max_s s(I_{1-s}(A;Y|BT)_{P_{A-T-B}}-R_A-r_A),r_A,  \nonumber \\
&\quad \max_s s(I_{1-s}(AB;Y|T)_{P_{A-T-B}}-R_A-R_B-r_A-r_B),r_A+r_B \Big)
\Label{Ex3A}\\
\lim_{n\to \infty}\frac{-1}{n}\log \epsilon_B(\Psi_{S,n};W^{(n)}) \ge & 
\min \Big( \max_s s(I_{1-s}(B;Y|AT)_{P_{A-T-B}}-R_B-r_B),r_B, \nonumber \\
&\quad \max_s s(I_{1-s}(AB;Y|T)_{P_{A-T-B}}-R_A-R_B-r_A-r_B),r_A+r_B \Big)
\Label{Ex3B}.
\end{align}
That is, when $R_A+r_A< I(A;Y|BT)_{P_{A-T-B}}$, 
$R_B +r_B< I(B;Y|AT)_{P_{A-T-B}}$, and 
$R_A +r_A+R_B +r_B< I(AB;Y|T)_{P_{A-T-B}}$,
the error probability $\epsilon(\Psi_{S,n};W^{(n)})$ exponentially  goes to zero.
\end{theorem}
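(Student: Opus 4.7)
The plan is to construct the encoder via the generalized packing lemma of Liu and Hughes \cite{Liu-Hughes} and then build two separate decoders via a Schur-duality method-of-types construction in the spirit of \cite{Ha1}, with the crucial observation that for separate decoding the error events for each receiver's output split cleanly into an ``individual-rate'' piece and a ``joint-rate'' piece.

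First, I would fix a type $P^{(n)}_{A-T-B}$ approximating $P_{A-T-B}$ and invoke the generalized packing lemma to produce a fixed time-sharing sequence $\bt$ of type $P_T$ together with deterministic codebooks $\{\ba_j\}_{j=1}^{\sM_{A,n}}$ and $\{\bb_k\}_{k=1}^{\sM_{B,n}}$ such that each $\ba_j$ has conditional type $P_{A|T}$ given $\bt$, each $\bb_k$ has conditional type $P_{B|T}$ given $\bt$, and the joint packing property holds uniformly over the joint empirical types of $(\ba_j,\ba_{j'},\bb_k,\bb_{k'},\bt)$. This construction is channel-independent and is the same combinatorial object used for the encoder in the corner-point code of Theorem \ref{T3}.

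Next, I would design the two decoders. For decoder $A$ the only error events are $\hat j\neq j$ with $\hat k=k$ (``individual'' error) and $\hat j\neq j$ with $\hat k\neq k$ (``joint'' error), while $\hat j=j$ with $\hat k\neq k$ is a success; the first event is governed by the rate $I(A;Y|BT)$ because the wrong codeword $\ba_{\hat j}$ competes with $\ba_j$ against output states that share the same $\bb_k$, and the second by the joint rate $I(AB;Y|T)$ at the sum rate $R_A+R_B$. I would realize $D^{A,n}$ as the marginal $D_j^{A,n}:=\sum_k \tilde D_{j,k}^n$ of a joint POVM whose entries $\tilde D_{j,k}^n$ are built from two families of universal Schur--Weyl isotypic projectors on ${\cal H}_Y^{\otimes n}$: one keyed to the conditional empirical type $P_{Y|ABT}$, which supplies the $I_{1-s}(A;Y|BT)$ bound, and one keyed to $P_{Y|T}$, which supplies the $I_{1-s}(AB;Y|T)$ bound. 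Decoder $D^{B,n}$ is defined symmetrically. Bounding each error event separately by the Hayashi-style Petz-R\'enyi estimate conditionally on $\bt$, and absorbing the polynomial type-counting factors into the redundancies $r_A$, $r_B$ and $r_A+r_B$, yields the exponents in \eqref{Ex3A} and \eqref{Ex3B}.

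The main obstacle is the non-commutativity of projectors belonging to different conditional types, which is precisely the reason the paper flags that joint decoding beyond corner points is delicate. Separate decoding dissolves this obstruction: because the two error events that contribute to $\epsilon_A$ are disjoint, I do not need a single POVM element to bound both simultaneously, and the marginal structure $D_j^{A,n}=\sum_k \tilde D_{j,k}^n$ lets me analyze each event against its own dedicated family of Schur--Weyl projectors, exploiting the orthogonality of distinct isotypic components. The proof then reduces to two essentially independent applications of the universal decoder argument used for Theorem \ref{T2}, one at the conditional rate $I_{1-s}(A;Y|BT)$ and one at the joint rate $I_{1-s}(AB;Y|T)$, with the symmetric treatment handling $\epsilon_B$.
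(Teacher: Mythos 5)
Your proposal follows essentially the same route as the paper's Section \ref{S7V}: the encoder comes from the Liu--Hughes packing lemma (Proposition \ref{lB-2W}), each separate decoder is a square-root measurement built from a marginal of a joint family of commuting Schur--Weyl projectors (the paper's $\overline{\Pi}^{(1)}$ or $\overline{\Pi}^{(4)}$ paired with $\overline{\Pi}^{(3)}$), and the two error events you identify---wrong $j$ with correct $k$, governed by $I(A;Y|BT)$, versus wrong $j$ with wrong $k$, governed by $I(AB;Y|T)$ at rate $R_A+R_B$---are exactly the two terms the paper bounds via \eqref{KJ2C}, \eqref{MLP}, \eqref{3rd-E2}, and \eqref{4th-E2}. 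You also correctly pinpoint the reason separate decoding dissolves the non-commutativity obstruction ($\overline{\Pi}^{(1)}$ and $\overline{\Pi}^{(4)}$ do not commute, but each commutes with $\overline{\Pi}^{(3)}$), which is the heart of the paper's construction.
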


\subsection{Universal code with joint decoding for general points}\Label{S3-C}
We construct universal code with joint decoding for general points
from universal code with separate decoding for general points.
We choose 
a code with separate decoding $\Psi_{S,n}:=
(\psi_{A,n},\psi_{B,n},D^{A,n},D^{B,n})$, where
$\psi_{A,n}$ is a map from $\hat{M}_{A,n}:=\{1, \ldots, \sM_{A,n} \}$ to ${\cal A}^n$,
and
$\psi_{B,n}$ is a map from $\hat{M}_{B,n}:=\{1, \ldots, \sM_{B,n} \}$ to ${\cal B}^n$,
the decoder is given a pair of POVMs 
$D^{A,n}:=\{D_{j}^{A,n}\}_{j\in \hat{M}_{A,n}}$
on ${\cal H}_Y^{\otimes n}$
and $D^{B,n}:=\{D_{k}^{B,n}\}_{k\in \hat{M}_{B,n}}$
on ${\cal H}_Y^{\otimes n}$.
Now, we construct the decoder with joint decoding 
as a POVM $D^n=\{D_{j,k}^n\}_{(j,k)\in \hat{M}_{A,n}\times \hat{M}_{B,n}}$
on ${\cal H}_Y^{\otimes n}$ by
\begin{align}
D_{j,k}^n:= (D_{k}^{B,n})^{1/2} D_{j}^{A,n} (D_{k}^{B,n})^{1/2}.
\end{align}
Since
\begin{align}
& \sum_{j,k }D_{j,k}^n
= \sum_{j,k }(D_{k}^{B,n})^{1/2} D_{j}^{A,n} (D_{k}^{B,n})^{1/2} \nonumber \\
= &\sum_{k }(D_{k}^{B,n})^{1/2} (\sum_j D_{j}^{A,n}) (D_{k}^{B,n})^{1/2}
= \sum_{k }(D_{k}^{B,n})^{1/2} I (D_{k}^{B,n})^{1/2}
=I,
\end{align}
$\{D_{j,k}^n\}_{(j,k)\in \hat{M}_{A,n}\times \hat{M}_{B,n}}$
satisfies the condition for POVM.
This code with joint decoding is denoted by $\Psi$.
The average decoding error probability $\epsilon(\Psi_n;W^{(n)})$ is evaluated as
\begin{align}
\epsilon(\Psi_n;W^{(n)})
=& \sum_{(j,k)\in\hat{M}_{A,n}\times \hat{M}_{B,n}}
\frac{1}{\sM_{A,n} \sM_{B,n}}
\Tr W_{\psi_{A,n}(j),\psi_{B,n}(k)}^{(n)}
(I- (D_{k}^{B,n})^{1/2} D_{j}^{A,n} (D_{k}^{B,n})^{1/2}) \nonumber \\
= &\sum_{(j,k)\in\hat{M}_{A,n}\times \hat{M}_{B,n}}
\frac{1}{\sM_{A,n} \sM_{B,n}}
\Tr W_{\psi_{A,n}(j),\psi_{B,n}(k)}^{(n)}
(I- D_{k}^{B,n}) \nonumber \\
&+\sum_{(j,k)\in\hat{M}_{A,n}\times \hat{M}_{B,n}}
\frac{1}{\sM_{A,n} \sM_{B,n}}
\Tr W_{\psi_{A,n}(j),\psi_{B,n}(k)}^{(n)}
(D_{k}^{B,n}- (D_{k}^{B,n})^{1/2} D_{j}^{A,n} (D_{k}^{B,n})^{1/2}) \nonumber \\
= &
\epsilon_B(\Psi_{S,n};W^{(n)})
+\sum_{(j,k)\in\hat{M}_{A,n}\times \hat{M}_{B,n}}
\frac{1}{\sM_{A,n} \sM_{B,n}}
\Tr (D_{k}^{B,n})^{1/2} W_{\psi_{A,n}(j),\psi_{B,n}(k)}^{(n)} (D_{k}^{B,n})^{1/2}
(I-  D_{j}^{A,n} )\nonumber  \\
\le &
\epsilon_B(\Psi_{S,n};W^{(n)})
+\sum_{(j,k)\in\hat{M}_{A,n}\times \hat{M}_{B,n}}
\frac{1}{\sM_{A,n} \sM_{B,n}}
\Tr W_{\psi_{A,n}(j),\psi_{B,n}(k)}^{(n)} 
(I-  D_{j}^{A,n} )\nonumber  \\
&+\sum_{(j,k)\in\hat{M}_{A,n}\times \hat{M}_{B,n}}
\frac{1}{\sM_{A,n} \sM_{B,n}}
\big\|
(D_{k}^{B,n})^{1/2} W_{\psi_{A,n}(j),\psi_{B,n}(k)}^{(n)} (D_{k}^{B,n})^{1/2}
-W_{\psi_{A,n}(j),\psi_{B,n}(k)}^{(n)}\big\|_1 \nonumber \\
\stackrel{(a)}{\le} &
\epsilon_B(\Psi_{S,n};W^{(n)})
+\epsilon_A(\Psi_{S,n};W^{(n)})
\nonumber \\
&+2\sum_{(j,k)\in\hat{M}_{A,n}\times \hat{M}_{B,n}}
\frac{1}{\sM_{A,n} \sM_{B,n}}
\Big( W_{\psi_{A,n}(j),\psi_{B,n}(k)}^{(n)} (I-D_{k}^{B,n})\Big)^{1/2}\nonumber \\
\stackrel{(b)}{\le} &
\epsilon_A(\Psi_{S,n};W^{(n)})+\epsilon_B(\Psi_{S,n};W^{(n)})
\nonumber \\
&+2\Big(\sum_{(j,k)\in\hat{M}_{A,n}\times \hat{M}_{B,n}}
\frac{1}{\sM_{A,n} \sM_{B,n}}
 W_{\psi_{A,n}(j),\psi_{B,n}(k)}^{(n)} (I-D_{k}^{B,n})\Big)^{1/2}\nonumber \\
= &
\epsilon_A(\Psi_{S,n};W^{(n)})
+\epsilon_B(\Psi_{S,n};W^{(n)})
+2(\epsilon_B(\Psi_{S,n};W^{(n)}))^{1/2} \Label{MGY}.
\end{align}
where
$(b)$ follows from the concavity of $x \mapsto \sqrt{x}$, and
$(a)$ follows from 
gentle operator lemma \cite[Lemma 9]{Win}, where the coefficient $2$ is given in
\cite[Appendix C]{O-G} \cite[Eqs. (9.65) and (9.66)]{Springer}.

Therefore, if 
$\epsilon(\Psi_{S,n};W^{(n)})$ goes to zero, 
$\epsilon(\Psi_{n};W^{(n)})$ also goes to zero.
Hence, we ave the relation ${\cal C}\supset {\cal C}_S$, which implies \eqref{SMF}.
Also, as a corollary of Theorem \ref{T3-J}, we obtain the following.

\begin{corollary}\Label{T3-J-C}
For any distribution $P_{A-T-B} $ with Markov condition $A-T-B$, 
there exists a sequence of codes $\{\Psi_{S}\}$ of joint decoding
with the rate pair $(R_A,R_B)$ with 
arbitrary small
positive parameters $r_A$ and $r_B$
to satisfy the following conditions.
For any channel $W=\{W_{a,b}\}_{a\in {\cal A},b\in {\cal B}}$, we have
\begin{align}
\lim_{n\to \infty}\frac{-1}{n}\log \epsilon(\Psi_{S};W^{(n)}) \ge & 
\min\Big( \max_s s(I_{1-s}(A;Y|BT)_{P_{A-T-B}}-R_A-r_A),r_A,  \nonumber \\
&\quad \max_s s(I_{1-s}(AB;Y|T)_{P_{A-T-B}}-R_A-R_B-r_A-r_B),r_A+r_B , \nonumber \\
&\quad \frac{1}{2} 
\max_s s(I_{1-s}(B;Y|AT)_{P_{A-T-B}}-R_B-r_B),\frac{r_B}{2} , \nonumber \\
&\quad \frac{1}{2}  \max_s s(I_{1-s}(AB;Y|T)_{P_{A-T-B}}-R_A-R_B-r_A-r_B),
\frac{r_A+r_B}{2} \Big)
\Label{Ex3BC}.
\end{align}
That is, when $R_A+r_A< I(A;Y|BT)_{P_{A-T-B}}$, 
$R_B +r_B< I(B;Y|AT)_{P_{A-T-B}}$, and 
$R_A +r_A+R_B +r_B< I(AB;Y|T)_{P_{A-T-B}}$,
the error probability $\epsilon(\Psi_{n};W^{(n)})$ exponentially  goes to zero.
\end{corollary}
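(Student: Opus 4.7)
The plan is to derive Corollary \ref{T3-J-C} as a direct consequence of Theorem \ref{T3-J} together with the separate-to-joint decoder conversion and the bound \eqref{MGY} that have already been established in the excerpt. Nothing new needs to be constructed; the statement is essentially a bookkeeping exercise on the exponential decay rates.

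First, for the given distribution $P_{A-T-B}$ and rate pair $(R_A,R_B)$ with parameters $r_A,r_B>0$, apply Theorem \ref{T3-J} to obtain a sequence of separate-decoding codes $\{\Psi_{S,n}\}$ with transmission rates $(R_A,R_B)$ whose error probabilities $\epsilon_A(\Psi_{S,n};W^{(n)})$ and $\epsilon_B(\Psi_{S,n};W^{(n)})$ satisfy the exponential lower bounds \eqref{Ex3A} and \eqref{Ex3B} for every channel $W$. Denote those lower bounds by $E_A$ and $E_B$ respectively.

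Next, form the joint decoder $D_{j,k}^n := (D_k^{B,n})^{1/2} D_j^{A,n} (D_k^{B,n})^{1/2}$ from the pair $(D^{A,n},D^{B,n})$ as already described right above \eqref{MGY}; this yields a joint-decoding code $\Psi_n$ with the same transmission rates $(R_A,R_B)$. Plug the separate-decoding code into the bound \eqref{MGY}, which asserts
\begin{align}
\epsilon(\Psi_n;W^{(n)})
\le \epsilon_A(\Psi_{S,n};W^{(n)})+\epsilon_B(\Psi_{S,n};W^{(n)})+2\bigl(\epsilon_B(\Psi_{S,n};W^{(n)})\bigr)^{1/2}.
\end{align}

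Finally, take $\frac{-1}{n}\log$ and pass to $\liminf$. For a sum of exponentially decaying quantities, the exponent of the sum is the minimum of the individual exponents, and the exponent of $(\epsilon_B)^{1/2}$ is half that of $\epsilon_B$. Hence
\begin{align}
\lim_{n\to\infty}\frac{-1}{n}\log \epsilon(\Psi_n;W^{(n)})
\ge \min\Bigl(E_A,\,E_B,\,\tfrac{1}{2}E_B\Bigr)
=\min\Bigl(E_A,\,\tfrac{1}{2}E_B\Bigr),
\end{align}
since $E_B\ge 0$. Substituting the explicit expressions for $E_A$ from \eqref{Ex3A} and for $\tfrac{1}{2}E_B$ obtained by halving every term in \eqref{Ex3B} (using $\min_i \tfrac{1}{2}a_i=\tfrac{1}{2}\min_i a_i$) produces exactly the four-line minimum displayed in \eqref{Ex3BC}. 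The positivity statement in the last sentence of the corollary then follows because each of the four arguments of the minimum is strictly positive whenever $R_A+r_A<I(A;Y|BT)$, $R_B+r_B<I(B;Y|AT)$ and $R_A+r_A+R_B+r_B<I(AB;Y|T)$, by choosing $s$ slightly positive in each Petz R\'enyi expression and invoking the continuity $I_{1-s}\to I$ as $s\to 0$.

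There is essentially no obstacle here beyond a careful tracking of the minima; the only mildly delicate point is that the factor $\tfrac{1}{2}$ coming from the gentle-operator/square-root loss halves all four arguments associated with $\epsilon_B$, so the resulting min has potentially eight entries but simplifies because the unhalved $E_B$ terms are dominated by their halved counterparts. The real content of the corollary lies in Theorem \ref{T3-J} and in the derivation of \eqref{MGY}, both of which are already in place.
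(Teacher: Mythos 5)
Your proposal is correct and follows essentially the same argument the paper itself uses: the paper establishes \eqref{MGY} in Subsection~\ref{S3-C} and then states Corollary~\ref{T3-J-C} as an immediate consequence of Theorem~\ref{T3-J} via that very conversion; your computation $\min(E_A,E_B,\tfrac{1}{2}E_B)=\min(E_A,\tfrac{1}{2}E_B)$ and the observation that the exponent of a finite sum is the minimum of the individual exponents are exactly the routine bookkeeping the paper leaves implicit.
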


\if0
We consider a example of c-q MAC based on the coherent state.
Define the coherent state $|\alpha\rangle :=e^{-|\alpha|^2/2}
\sum_{n=0}^\infty \frac{\alpha^n}{n!} |n\rangle$.
Assume that ${\cal A}={\cal B}=\FF_2$.
Define $W_{a,b|\alpha}:= |\alpha((-1)^a+(-1)^b)\rangle\langle \alpha((-1)^a+(-1)^b)|$.
We decompose the coherent state as follows.
\begin{align}
|\pm\alpha\rangle
=
\sqrt{a_{e|\alpha}} |even_{\alpha}\rangle
+\pm  \sqrt{a_{o|\alpha}} |odd_{\alpha}\rangle,
\end{align}
where
\begin{align}
|odd_{\alpha}\rangle&:=a_{o|\alpha}^{-1/2} e^{-|\alpha|^2/2}\sum_{m=0}^{\infty} \frac{\alpha^{2m+1}}{\sqrt{(2m+1)!}}
|2m+1\rangle \\
|even_{\alpha}\rangle&:=a_{e|\alpha}^{-1/2} e^{-|\alpha|^2/2}\sum_{m=0}^{\infty} \frac{\alpha^{2m}}{\sqrt{(2m)!}}
|2m\rangle 
\end{align}
and
$a_{o|\alpha}:={e^{-|\alpha|^2}\sum_{m=0}^{\infty} \frac{|\alpha|^{2(2m+1)}}{(2m+1)!}}=(1-e^{-|2\alpha|^2})/2$,
$a_{e|\alpha}:={e^{-|\alpha|^2}\sum_{m=0}^{\infty} \frac{|\alpha|^{2(2m)}}{(2m)!}}=
(1+e^{-|2\alpha|^2})/2$.

Then, we define the basis
$|\pm;\alpha\rangle := 
\frac{1}{\sqrt{2}}(|even_{\alpha}\rangle\pm |odd_{\alpha}\rangle)$.
The measurement $\{|+;\alpha\rangle,|-;\alpha\rangle\}$ 
distinguishes the two states
$|(-1)^{x}\alpha\rangle$ with $x=0,1$ with 
the minimum error probability;
\begin{align}
\epsilon(\alpha):= (1-\sqrt{1-e^{-4 \eta |\alpha|^2}})/2
\end{align}

\begin{align}
a
\end{align}

 \begin{lemma}
 Assume $|\langle v_1|v_2\rangle|^2=t$.
\begin{align}
H((1-\epsilon) |v_1\rangle \langle v_1|
+\epsilon |v_2\rangle \langle v_2|)
=h(\frac{1-\sqrt{1-4(1-t)\epsilon(1-\epsilon)}
}{2})
\end{align} 
 \end{lemma}
\fi
\subsection{
Classical-quantum compound MAC}\Label{S7-5}
When the channel is not known,
we need to address classical-quantum compound MAC.
That is, to rigorously handle the optimization of the worst case in classical-quantum compound MAC, 
we consider a compound channel model ${\cal W}^{MAC}:=\{W_{a,b,\theta}\}_{a \in {\cal A},b \in {\cal B}}$  with a parametric space 
$\Theta$.
In this model, we do not know what $\theta \in \Theta$ is the true parameter.
Hence, we need to consider the worst case.
That is, a rate pair $(R_A,R_B)$ is achievable under the channel model ${\cal W}^{MAC}$ when there exists a sequence of codes 
$\{\Psi_n\}$
with the transmission rate pair $(R_A,R_B)$ such that the decoding error probability $\epsilon(\Psi_n,W_\theta^{(n)}) $ goes to zero
for any channel $W_\theta \in {\cal W}^{MAC}$.
The closure of the set of achievable rate pairs 
under the channel model ${\cal W}^{MAC}$ is called
the capacity region of the compound channel model ${\cal W}^{MAC}$,
and is denoted by ${\cal C}_{{\cal W}^{MAC}}$, i.e.,
\begin{align}
{\cal C}_{{\cal W}^{MAC}}:= \cl  {\{ (R_A,R_B)| (R_A,R_B) \hbox{ is achievable under the channel model }
{\cal W}^{MAC}.\}}
\end{align}

The above concept can be extended to the case with separate decoding.
A rate pair $(R_A,R_B)$ is achievable with separate decoding under the channel model ${\cal W}^{MAC}$ when there exists a sequence of codes 
$\{\Psi_{S,n}\}$
with separate decoding and the transmission rate pair $(R_A,R_B)$ such that the decoding error probability 
$\epsilon(\Psi_{S,n},W_\theta^{(n)}) $
goes to zero
for any channel $W_\theta \in {\cal W}^{MAC}$.
The closure of the set of achievable rate pairs 
with separate decoding under the channel model ${\cal W}^{MAC}$ is called
the capacity region with separate decoding of the compound channel model ${\cal W}^{MAC}$,
and is denoted by ${\cal C}_{S,{\cal W}^{MAC}}$, i.e.,
\begin{align}
{\cal C}_{S,{\cal W}^{MAC}}:= \cl  {\{ (R_A,R_B)| (R_A,R_B) \hbox{ is achievable with separate decoding under the channel model }
{\cal W}^{MAC}.\}}
\end{align}

\if0
Now, we focus on the minimax problem
$\max_{P_A\times P_B} \min_{\theta}I(AB;Y|T)_{P_{A}\times P_B,\theta}$.
The set $\{P_A\times P_B\}$ is not closed for the convex full.
Hence, we introduce a joint distribution $P_{A-T-B}$ with Markov chain $A-T-B$,
we have the inequality
\begin{align}
\max_{P_A\times P_B} \min_{\theta}I(AB;Y)_{P_{A}\times P_B,\theta}
\le \max_{P_{A-T-B}} \min_{\theta}I(AB;Y|T)_{P_{A-T-B},\theta}.
\end{align}
The equality in the above inequality does not hold in general
because 
a different product distribution $P_A\times P_B$
realizes the minimum of $I(AB;Y)_{P_{A}\times P_B,\theta}$ 
for a different parameter $\theta$. 
When we extend the range of the distribution to $\{P_{A-T-B}\}$,
a good joint distribution $P_{A-T-B}$ commonly
realizes the minimum of $I(AB;Y)_{P_{A-T-B},\theta}$ for several different parameters $\theta$. 
\fi
Then, we obtain the following single-letterized form of the capacity region of 
a c-q compound MAC;
\begin{theorem}\Label{T4-5}
The relations
\begin{align}
&{\cal C}_{S,{\cal W}^{MAC}}={\cal C}_{{\cal W}^{MAC}}
= 
\hat{\cal C}_{{\cal W}^{MAC}}
\Label{MF}
\end{align}
hold, where
\begin{align*}
&\hat{\cal C}_{{\cal W}^{MAC}}
\nonumber \\
:= &
\cl \bigcup_{P_{A-T-B}}
\Big\{ (R_A,R_B) \Big|
R_A \le \min_\theta I(A;Y|BT)_{P_{A-T-B},\theta},\nonumber \\
&\hspace{10ex}R_B \le \min_\theta I(B;Y|AT)_{P_{A-T-B},\theta},\nonumber \\
&\hspace{10ex} R_A+ R_B \le \min_\theta I(AB;Y|T)_{P_{A-T-B},\theta}
\Big\},
\end{align*}
where $P_{A-T-B}$ is an arbitrary joint distribution on ${\cal A}\times {\cal B}\times {\cal T}$ to satisfy Markov
condition $A-T-B$.
\end{theorem}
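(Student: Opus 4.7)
The plan is to establish \eqref{MF} by three assertions:
(i) $\hat{\cal C}_{{\cal W}^{MAC}}\subseteq {\cal C}_{S,{\cal W}^{MAC}}$ (universal achievability with separate decoding);
(ii) ${\cal C}_{S,{\cal W}^{MAC}}\subseteq {\cal C}_{{\cal W}^{MAC}}$ (conversion of separate to joint decoding); and
(iii) the converse ${\cal C}_{{\cal W}^{MAC}}\subseteq \hat{\cal C}_{{\cal W}^{MAC}}$. The remaining inclusion ${\cal C}_{{\cal W}^{MAC}}\subseteq {\cal C}_{S,{\cal W}^{MAC}}$ is immediate by marginalizing a joint POVM to $D_j^{A,n}:=\sum_k D_{j,k}^n$ and $D_k^{B,n}:=\sum_j D_{j,k}^n$, whose error probabilities are bounded pointwise in $\theta$ by the joint error.

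For (i), fix $P_{A-T-B}$ and $(R_A,R_B)$ strictly inside the corresponding region, and pick positive $r_A,r_B$ so that $I(A;Y|BT)_{P_{A-T-B},\theta}-R_A-r_A$, $I(B;Y|AT)_{P_{A-T-B},\theta}-R_B-r_B$ and $I(AB;Y|T)_{P_{A-T-B},\theta}-R_A-R_B-r_A-r_B$ are positive for every $\theta\in\Theta$. Since the code of Theorem \ref{T3-J} depends only on $P_{A-T-B}$ and not on the channel, its exponents \eqref{Ex3A}--\eqref{Ex3B} are positive \emph{uniformly} in $\theta$, giving vanishing worst-case error. For (ii), apply the gentle-operator conversion \eqref{MGY} channel-by-channel: if $\epsilon(\Psi_{S,n};W_\theta^{(n)})\to 0$ uniformly in $\theta$, then so does $\epsilon(\Psi_n;W_\theta^{(n)})$ for the converted joint-decoding code.

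For the converse (iii), fix a sequence of codes $\{\Psi_n\}$ with worst-case error $\epsilon_n\to 0$ and rates $R_A,R_B$. With $M_A,M_B$ uniform and independent, let $A_1,\dots,A_n$ and $B_1,\dots,B_n$ be the coded symbols, introduce a time-sharing variable $T$ uniform on $\{1,\dots,n\}$, and set $(A,B):=(A_T,B_T)$. Because the two encoders act on disjoint messages, the induced distribution $P_{A-T-B}$ automatically satisfies $A-T-B$. For each fixed $\theta$, quantum Fano-type inequalities together with the chain-rule single-letterization used in \cite{Winter,Ahlswede-Cai} yield
\[
R_A \le I(A;Y|BT)_{P_{A-T-B},\theta}+\epsilon_n',\quad R_B \le I(B;Y|AT)_{P_{A-T-B},\theta}+\epsilon_n',
\]
\[
R_A+R_B \le I(AB;Y|T)_{P_{A-T-B},\theta}+\epsilon_n'.
\]
Crucially, $P_{A-T-B}$ is determined solely by the code, hence is \emph{independent of} $\theta$, so we may take $\min_\theta$ of the right-hand sides and then let $n\to\infty$; a compactness/closure argument on the probability simplex over ${\cal A}\times{\cal T}\times{\cal B}$ (enlarging ${\cal T}$ to absorb the $n$-dependence) places $(R_A,R_B)$ in $\hat{\cal C}_{{\cal W}^{MAC}}$.

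The main obstacle is precisely the converse step (iii): producing a \emph{single} $P_{A-T-B}$ that simultaneously controls every $\theta$. A naive pointwise converse produces a $\theta$-dependent distribution, and swapping min and max is not allowed because $\hat{\cal C}_{{\cal W}^{MAC}}$ is a union rather than an intersection over input distributions. The time-sharing construction resolves this by extracting the distribution from the code itself, which is trivially uniform in $\theta$, and the Markov structure $A-T-B$ comes for free from the independence of the two encoders. The remaining technicality is passing from the $n$-dependent alphabet ${\cal T}=\{1,\dots,n\}$ to a fixed finite alphabet in the limit, which is handled by a standard compactness argument using the continuity of the conditional mutual informations in the finite-dimensional channel space.
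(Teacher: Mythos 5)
Your proposal matches the paper's own proof in both structure and ingredients: (i) Theorem~\ref{T3-J} gives the universal separate-decoding code achieving $\hat{\cal C}_{{\cal W}^{MAC}}$; (ii) the gentle-operator bound \eqref{MGY}, applied channel-by-channel, gives ${\cal C}_{S,{\cal W}^{MAC}}\subset{\cal C}_{{\cal W}^{MAC}}$; and (iii) the converse in Section~\ref{SS6} is exactly your time-sharing argument, where the key observation that the induced $P_n$ is $\theta$-independent (allowing $\min_\theta$ to be taken with a single distribution) and that $A-T-B$ holds automatically from independence of the two encoders are precisely what the paper uses. The only cosmetic difference is that you explicitly enumerate the trivial inclusion ${\cal C}_{{\cal W}^{MAC}}\subset{\cal C}_{S,{\cal W}^{MAC}}$ by marginalization, which the paper notes in Section~\ref{S3-C} but does not strictly need once the chain $\hat{\cal C}\subset{\cal C}_S\subset{\cal C}\subset\hat{\cal C}$ is closed; and your closing remark about reducing the growing alphabet ${\cal T}=\{1,\dots,n\}$ is not actually necessary, since $\hat{\cal C}_{{\cal W}^{MAC}}$ is defined as a closure over all finite $\cal T$, so each $P_n$ already lies in the union.
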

The converse part relation of Theorem \ref{T4-5}; 
\begin{align}
{\cal C}_{{\cal W}^{MAC}}
\subset \hat{\cal C}_{{\cal W}^{MAC}}
\Label{MFC}
\end{align}
will be shown in Section \ref{SS6}.
Also, \eqref{MGY} implies 
${\cal C}_{S,{\cal W}^{MAC}} \subset {\cal C}_{{\cal W}^{MAC}}$.
Finally, Theorem \ref{T3-J} implies 
${\cal C}_{{\cal W}^{MAC}}\supset \hat{\cal C}_{{\cal W}^{MAC}}$.
Therefore, we obtain Theorem \ref{T4-5}.

One might consider that the direct application of Theorem \ref{T3} could derive 
Theorem \ref{T4-5} without use of \eqref{MGY}.
However, the direct application of Theorem \ref{T3} without use of \eqref{MGY}, i.e., Corollary \ref{Cor2} implies only the following weak version of the direct part.

\begin{corollary}\Label{Cor3}
We define two regions
\begin{align}
{\cal C}_{{\cal W}^{MAC}}^1
:= &
\cl \bigcup_{P_{A-T-B}}
\Big\{ (\min_\theta I(A;Y|T)_{P_{A-T-B},\theta}, \min_\theta I(B;Y|AT)_{P_{A-T-B},\theta})
\Big\} \\
{\cal C}_{{\cal W}^{MAC}}^2
:= &
\cl \bigcup_{P_{A-T-B}}
\Big\{ (\min_\theta I(A;Y|BT)_{P_{A-T-B},\theta}, \min_\theta I(B;Y|T)_{P_{A-T-B},\theta})
\Big\},
\end{align}
where $P_{A-T-B}$ is an arbitrary joint distribution on ${\cal A}\times {\cal B}\times {\cal T}$ to satisfy Markov
condition $A-T-B$.
Then, we have the following inclusion relation for capacity region;
\begin{align}
{\cal C}_{{\cal W}^{MAC}}
\supset &
\cl ({\cal C}_{{\cal W}^{MAC}}^1\cup {\cal C}_{{\cal W}^{MAC}}^2).
\Label{INC}
\end{align}
\end{corollary}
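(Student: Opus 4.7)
The plan is to derive \eqref{INC} as a direct corollary of Theorem \ref{T3} and its symmetric analogue, by exploiting the fact that the encoder and decoder produced by Theorem \ref{T3} depend only on the chosen distribution $P_{A-T-B}$ and the rate pair, not on the channel itself. The key observation is that any rate pair strictly below the componentwise minimum of the relevant mutual information over $\theta \in \Theta$ is achievable simultaneously for the entire family ${\cal W}^{MAC}$, because Theorem \ref{T3} provides a single universal code with channel-independent exponent guarantee.

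First I would fix $P_{A-T-B}$ satisfying $A-T-B$ and pick any rates with $R_A + r_A < \min_\theta I(A;Y|T)_{P_{A-T-B},\theta}$ and $R_B + r_B < \min_\theta I(B;Y|AT)_{P_{A-T-B},\theta}$ for some $r_A, r_B > 0$. Applying Theorem \ref{T3} produces a code sequence $\{\Psi_n\}$ whose exponent \eqref{Ex3}, evaluated at any $W_\theta \in {\cal W}^{MAC}$, is strictly positive, so $\epsilon(\Psi_n; W_\theta^{(n)}) \to 0$ for every $\theta$. Hence $(R_A, R_B) \in {\cal C}_{{\cal W}^{MAC}}$. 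Taking $r_A, r_B \downarrow 0$ and then the closed convex hull over $P_{A-T-B}$ yields ${\cal C}_{{\cal W}^{MAC}}^1 \subset {\cal C}_{{\cal W}^{MAC}}$. Swapping the roles of $A$ and $B$ in Theorem \ref{T3} (which corresponds to the other corner point of the single-letter region) gives ${\cal C}_{{\cal W}^{MAC}}^2 \subset {\cal C}_{{\cal W}^{MAC}}$ by the same argument. Finally, convex combinations of points from these two regions are achieved by time sharing, exactly as in the proof of Corollary \ref{Cor2}, applied with the uniform-in-$\theta$ rate conditions; since both component codes have exponentially vanishing error for every $\theta$, so does the time-shared code. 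Taking the closure of the union then delivers the claimed inclusion \eqref{INC}.

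The only subtle point, which I do not expect to be a real obstacle, is to verify that the componentwise minimum over $\theta$ of an exponentially decaying family is still exponentially decaying. For finite $\Theta$ this is immediate from a union bound; in general it follows from the continuity of Petz's R\'{e}nyi conditional mutual informations in the channel parameter, which makes the exponent in \eqref{Ex3} uniformly positive on the compact closure of the channel family. No additional ingredient beyond Theorem \ref{T3} and elementary time sharing is needed, which is consistent with the remark preceding the corollary that this is only the weak version of the direct part, to be strengthened afterwards via Theorem \ref{T3-J} and the gentle-operator conversion \eqref{MGY}.
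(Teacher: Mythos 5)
Your proposal is correct and follows essentially the same route the paper intends: the paper treats Corollary \ref{Cor3} as an immediate consequence of Theorem \ref{T3} (packaged through time sharing in Corollary \ref{Cor2}), which is exactly what you do --- instantiate the channel-independent code from Theorem \ref{T3} for rate pairs below the $\theta$-minimized informations, observe it achieves every $\theta$ simultaneously, argue the symmetric corner by swapping $A$ and $B$, and time-share. One minor simplification you could have used: the paper's achievability definition for the compound model only requires $\epsilon(\Psi_n;W_\theta^{(n)})\to 0$ pointwise in $\theta$ (not uniformly), so your final paragraph about uniform positivity of the exponent over a compact closure of the channel family, while true, is not needed for the stated inclusion.
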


In fact, there is an example when the above inclusion relation is strict.
That is, 
Eq. \eqref{Ex3BC} and Theorem \ref{T3-J}
are essential for the direct part of 
Theorem \ref{T4-5}.
In the following, we introduce a measure to to check whether this inclusion relation is strict.
Corollary \ref{Cor3} shows only the following rate
 is achievable for $R_A+R_B$;
\begin{align}
R_1({\cal W}^{MAC}):=&
\max_{R_A,R_B}\{R_A+R_B| (R_A, R_B) \in 
\cl ({\cal C}_{{\cal W}^{MAC}}^1\cup {\cal C}_{{\cal W}^{MAC}}^2)
\} \nonumber\\
=&
\max\Big(\max_{P_{A-T-B}}
\min_\theta I(A;Y|T)_{P_{A-T-B},\theta}+\min_\theta I(B;Y|AT)_{P_{A-T-B},\theta}, \nonumber \\
&\quad\max_{P_{A-T-B}}
\min_\theta I(A;Y|BT)_{P_{A-T-B},\theta}+\min_\theta I(B;Y|T)_{P_{A-T-B},\theta}\Big).
\Label{GP1}
\end{align}
However, 
the combination of 
Eq. \eqref{Ex3BC} and Theorem \ref{T3-J} shows the achievability of 
the following value;
\begin{align}
& R_2({\cal W}^{MAC}):=
\max_{R_A,R_B}\{R_A+R_B| (R_A, R_B) \in 
{\cal C}_{{\cal W}^{MAC}}\} \nonumber\\
=&\max_{P_{A-T-B}}
\min \Big(
 \min_\theta I(AB;Y|T)_{P_{A-T-B},\theta},
 \min_\theta I(A;Y|BT)_{P_{A-T-B},\theta}+
 \min_\theta I(B;Y|AT)_{P_{A-T-B},\theta}\Big),
\Label{GP2L}
\end{align}
which is different from the following value;
\begin{align}
R_3({\cal W}^{MAC}):=
\max_{P_{A-T-B}} \min_\theta I(AB;Y|T)_{P_{A-T-B},\theta}.
\Label{GP2}
\end{align}
That is, we have the inequalities
\begin{align}
R_3({\cal W}^{MAC})\ge
R_2({\cal W}^{MAC})\ge
R_1({\cal W}^{MAC})
\Label{GP2M}.
\end{align}
As seen in examples in Section \ref{S3-5}, these three quantities are different values.
Such an example for the gap between $R_1({\cal W}^{MAC})$ and $R_2({\cal W}^{MAC})$
shows that the above inclusion relation \eqref{INC} is strict.
In addition, the difference between 
$R_3({\cal W}^{MAC})$ and 
$R_2({\cal W}^{MAC})$
shows the importance of 
$ \min_\theta I(A;Y|BT)_{P_{A-T-B},\theta}+
 \min_\theta I(B;Y|AT)_{P_{A-T-B},\theta}$.

Finally, we compare our single-letterized formula for the capacity region of cq compound channel
with the existing formula \cite[Theorem 3]{BJS2}. 
To state their result, we introduce the following notation.
When the channel is given as $n$ times use of the MAC $W_\theta$, 
and 
the two input systems $A$ and $B$ are subject to distributions $P_A$ and $P_B$ on ${\cal A}^n$ and ${\cal B}^n$
independently, 
the mutual information between $A$ ($B$) and the output quantum system $Y$ is written as
$I(A;Y)_{P_A \times P_B,W^{(n)}_\theta}$
($I(B;Y)_{P_A \times P_B,W^{(n)}_\theta}$).
By using this notation, the following relation can be shown. 
\begin{lemma}[\protect{\cite[Theorem 3]{BJS2}}]\Label{LL6}
The following relation holds;
\begin{align}
{\cal C}_{{\cal W}^{MAC}}
= &
\cl \bigcup_{n=1}^{\infty}
\bigcup_{P_{A}, P_B}
\bigcap_{\theta \in \Theta}
\Big\{ \Big(\frac{1}{n}I(A;Y)_{P_A \times P_B,W^{(n)}_\theta},\frac{1}{n}I(B;Y)_{P_A\times P_B,W^{(n)}_\theta}\Big)
\Big \},\Label{MFL}
\end{align}
\end{lemma}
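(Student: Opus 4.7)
The plan is to leverage the single-letter formula ${\cal C}_{{\cal W}^{MAC}}=\hat{\cal C}_{{\cal W}^{MAC}}$ already established in Theorem \ref{T4-5}, and show that the multi-letter right-hand side of \eqref{MFL} coincides with $\hat{\cal C}_{{\cal W}^{MAC}}$. Both directions can then be deduced by a time-sharing/de-time-sharing argument between the single-letter picture (with auxiliary variable $T$) and the $n$-block picture (with product distributions on ${\cal A}^n\times {\cal B}^n$).

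For the inclusion of the multi-letter region into $\hat{\cal C}_{{\cal W}^{MAC}}$, I would fix $n$ and product distributions $P_A,P_B$ on ${\cal A}^n,{\cal B}^n$, then introduce a time-sharing variable $T$ uniform on $\{1,\ldots,n\}$ with $A\vert T=i$ and $B\vert T=i$ distributed as the $i$-th marginals of $P_A,P_B$. Independence of $P_A,P_B$ enforces the Markov condition $A-T-B$. Standard chain-rule manipulations, together with the chain-rule upper bound $\frac{1}{n}I(A^n;Y^n)_{W^{(n)}_\theta}\le I(A;Y\vert T)_{P_{A-T-B},\theta}$ and its analogue for $B$, place each per-$\theta$ rate pair arising from $(n,P_A,P_B)$ inside the corresponding single-letter constraints of $\hat{\cal C}_{{\cal W}^{MAC}}$.

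For the reverse inclusion, any point $(R_A,R_B)\in \hat{\cal C}_{{\cal W}^{MAC}}$ is attained by some joint distribution $P_{A-T-B}$ with $T$ supported on a finite set. Approximating $T$ by a uniform distribution on $\{1,\ldots,n\}$ for sufficiently large $n$ and choosing $P_A,P_B$ to be the products whose $i$-th marginals match the conditional marginals of $P_{A\vert T=i}$ and $P_{B\vert T=i}$ respectively, one reproduces the same mutual-information values at block length $n$ up to arbitrarily small error. Closing the union over $n$ then absorbs the approximation.

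The principal obstacle is the intersection over $\theta\in\Theta$: when $\Theta$ is infinite, a fixed product $(P_A,P_B)$ need not simultaneously saturate $I(A;Y)_{W^{(n)}_\theta}$ and $I(B;Y)_{W^{(n)}_\theta}$ across all $\theta$, so one has to verify that $\bigcap_\theta$ commutes appropriately with $\bigcup_n\bigcup_{P_A,P_B}$ and the closure. This is addressed by a compactness argument on the simplex of distributions on ${\cal A}^n\times {\cal B}^n$, together with superadditivity of the block regions under concatenation $n_1+n_2\mapsto n_1+n_2$, which lets one replace the outer $\limsup_n$ by $\sup_n$ and ensures that the multi-letter union is genuinely monotone.
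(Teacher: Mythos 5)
The paper does not prove Lemma~\ref{LL6} at all: it is stated as a citation of~\cite[Theorem~3]{BJS2}, and the authors only \emph{compare} it with their single-letter formula (noting that the cited result ``contains a limiting expression'' while theirs does not). There is therefore no ``paper's own proof'' to compare against, and your proposal is an attempt to derive from scratch a result that the authors deliberately treat as a black box.

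On the merits, your forward inclusion has a genuine gap. The per-letterization step you invoke, namely $\frac{1}{n}I(A^n;Y^n)_{W_\theta^{(n)}}\le I(A;Y\mid T)_{P_{A\text{-}T\text{-}B},\theta}$ with $T$ uniform on $\{1,\dots,n\}$ and $(A,B)\mid T{=}i$ the $i$-th marginals, is simply false for a MAC when $P_B$ correlates coordinates across time. Take the modulo-two adder $Y = A\oplus B$, $n=2$, $P_A$ i.i.d.\ uniform on $\FF_2^2$, and $P_B$ uniform on $\{(0,0),(1,1)\}$. Then $H(Y^2\mid A^2)=H(B_1)=1$ and $H(Y^2)=2$, so $\tfrac12 I(A^2;Y^2)=\tfrac12$; but each coordinate-wise marginal $B_i$ is uniform and independent of $A_i$, so $I(A;Y\mid T)=0$. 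The culprit is that $H(Y_i\mid A^n Y^{i-1})$ need not dominate $H(Y_i\mid A_i)$: conditioning on past outputs changes the conditional law of $B_i$ away from its marginal, unlike the single-user case. A correct per-letterization has to pass through $I(A^n;Y^n)\le I(A^n;Y^n\mid B^n)$ and $I(A^n;Y^n)+I(B^n;Y^n)\le I(A^nB^n;Y^n)$ before applying chain rules; this changes the quantities you land on, so the argument cannot be repaired by a cosmetic fix.

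The reverse inclusion has the mirror-image problem: with $P_A$, $P_B$ chosen as time-sharing i.i.d.\ products matching the conditional marginals of $P_{A\mid T}$, $P_{B\mid T}$, the pair $(\tfrac1n I(A^n;Y^n),\tfrac1n I(B^n;Y^n))$ equals $(I(A;Y\mid T),I(B;Y\mid T))$, which is generically strictly inside the pentagon of $\hat{\cal C}_{{\cal W}^{MAC}}$ (the corners are at $(I(A;Y\mid BT),I(B;Y\mid T))$ and $(I(A;Y\mid T),I(B;Y\mid AT))$). Reaching the corners requires $n$-block input distributions that are correlated within the $A$-block and/or $B$-block, as the same modulo-two example shows. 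So neither inclusion follows from time-sharing with product-per-coordinate $P_A,P_B$, and the proposal needs a substantially different mechanism — one reason the authors simply cite~\cite{BJS2} rather than rederiving it.
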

The above preceding result \cite[Theorem 3]{BJS2} contains 
a limiting expression while our obtained formula does not contain such limiting expression.

\if0
Assume ${\cal W}= {W_1,W_2}$ and for a pair of random inputs $(A,B)$, 
(denote by $Y_i=Y_{W_i}$ for $i=1,2$) 
Hence, we have
$$I(A;Y_1)=I(B;Y_1)=2, I(B;Y_1|A)=I(A;Y_1|B)=4, I(A;Y_2)=5, I(B;Y_2|A)=3, I(A;Y_2|B)=7, I(B;Y_2)=1.$$ 
Then $min I(A;Y_W|B)=4, min I(B;Y_W|A)=3, min I(AB;Y_W)=6$. 
Then ${\cal R} (A,B)$ is the pentagon with vertices $(0,0), (0,3), (3,3), (4,2), (4,0), P_1=(2, 3), P_2=(4,1)$.  

\subsection{Finite set}
We consider the case when the set $\Theta$ is a finite set.
In this case, the opposite direction of Theorem \ref{T4-5}
can be shown in the same way as \cite{D-D} as follows.

We choose a rate pair $(R_A,R_B)$ in the RHS of \eqref{MF}.
\fi

\section{Examples}\Label{S3-5}

\subsection{Classical example 1}\Label{S93A}
To see the two types of gaps, the gap between \eqref{Hi1} and \eqref{Hi2}
and the gap between 
$R_1({\cal W}^{MAC})$ and 
$R_2({\cal W}^{MAC})$,
we consider a compound channel model ${\cal W}^{MAC}_1 $
composed of two classical MACs with ${\cal A}={\cal B}=\FF_2$ as follows.
We define the output variable $Y_\theta$ for $\theta=0,1$ as
\begin{align}
Y_0=A \oplus B \in \FF_2, \quad
Y_1=&(A \oplus Z_A, B \oplus Z_B)  \in \FF_2^2,
\end{align}
where $Z_A$ and $Z_B$ are independent variables 
and $P_{Z_A}(1)=P_{Z_B}(1)=p_0$ such that $h(p_0)=\frac{1}{2} $. 
The first MAC is called the sum modulo-2 multiple-access channel (S2MAC) \cite{Poltyrev}.
Using the parameters $p:= P_B(1)$ and  $q:= P_A(1)$,
the mutual information and the conditional mutual information are calculated as 
\begin{align}
I(B;Y_0|A)&= h(p),\quad  I(A;Y_0)= h(pq+(1-p)(1-q))- h(p) \\
I(A;Y_0|B)&= h(q),\quad  I(B;Y_0)= h(pq+(1-p)(1-q))- h(q) \\
I(B;Y_1|A)&= I(B;Y_1)= h(p p_0+(1-p)(1- p_0))-  \frac{1}{2} \\
I(A;Y_1|B)&= I(A;Y_1)= h(q p_0+(1-q)(1- p_0)) -\frac{1}{2} .
\end{align}
Based on the above formulas, the quantities \eqref{Hi1} and \eqref{Hi2} are calculated as follows.
We have
\begin{align}
&\max_{P_A,P_B} \{ \min(I(A;Y_0),I(A;Y_1))| I(B;Y_0|A),I(B;Y_1|A) \ge R\} 
=
 1- h(p_R) 
\Label{GJT} \\
&\max_{P_{A-T-B}} \{ \min(I(A;Y_0|T),I(A;Y_1|T))| I(B;Y_0|AT),I(B;Y_1|AT) \ge R\} \nonumber \\
=&\left\{
\begin{array}{ll}
0 & \hbox{ when } R \ge 1/2 \\
1-2R & \hbox{ when } \frac{1}{4} \le R \le 1/2 \\
\frac{1}{2} & \hbox{ otherwise},
\end{array}
\right.
\Label{GJT1}
\end{align}
where $p_R \in [0,1/2] $ satisfies $h(p_R p_0+(1-p_R)(1- p_0))-  \frac{1}{2}=R$.
These two quantities are numerically plotted in Fig. \ref{FF2}.
The quantities \eqref{GP1}, \eqref{GP2L}, and \eqref{GP2} are calculated as follows.
The relations
\begin{align}
R_1({\cal W}^{MAC}_1)=\frac{3}{4}, \quad
R_2({\cal W}^{MAC}_1)=R_3({\cal W}^{MAC}_1)=1\Label{OY0}
\end{align}
hold. 
\if0
That is, we have
\begin{align}
& \max_{P_{A-T-B}}\min \Big(
 \min_\theta I(AB;Y|T)_{P_{A-T-B},\theta},
 \min_\theta I(A;Y|BT)_{P_{A-T-B},\theta}+
 \min_\theta I(B;Y|AT)_{P_{A-T-B},\theta}\Big) \nonumber \\
=& \max_{P_{A-T-B}}
 \min_\theta I(AB;Y|T)_{P_{A-T-B},\theta}
 =1.
\Label{GJT2} \\
\end{align}
\fi
In addition, we have
\begin{align}
{\cal C}_{{\cal W}_1^{MAC}}
=&
\Big\{(R_A,R_B)\Big| 
R_A\le \frac{1}{2}, R_B\le \frac{1}{2}
\Big\} \Label{OY1}\\
\cl ({\cal C}_{{\cal W}_1^{MAC}}^1\cup {\cal C}_{{\cal W}_1^{MAC}}^2)
=&
\Big\{(R_A,R_B)\Big| R_A+R_B\le \frac{3}{4}, 
R_A\le \frac{1}{2}, R_B\le \frac{1}{2} \Big\}.\Label{OY2}
\end{align}

Eq. \eqref{OY0} shows that
$R_2({\cal W}^{MAC}_1)$ is strictly larger than 
$R_1({\cal W}^{MAC}_1)$.
More precisely, 
Eqs. \eqref{OY1}  and \eqref{OY2} show that
the region ${\cal C}_{{\cal W}_1^{MAC}}$ 
is strictly larger than 
$\cl ({\cal C}_{{\cal W}_1^{MAC}}^1\cup {\cal C}_{{\cal W}_1^{MAC}}^2)$, as plotted in Fig. \ref{Fregion}.

\begin{figure}[htpb]
  \centering
    \begin{tabular}{cc}
 
 
      \begin{minipage}{0.48\hsize}
        \centering
 \includegraphics[width=0.8\linewidth]{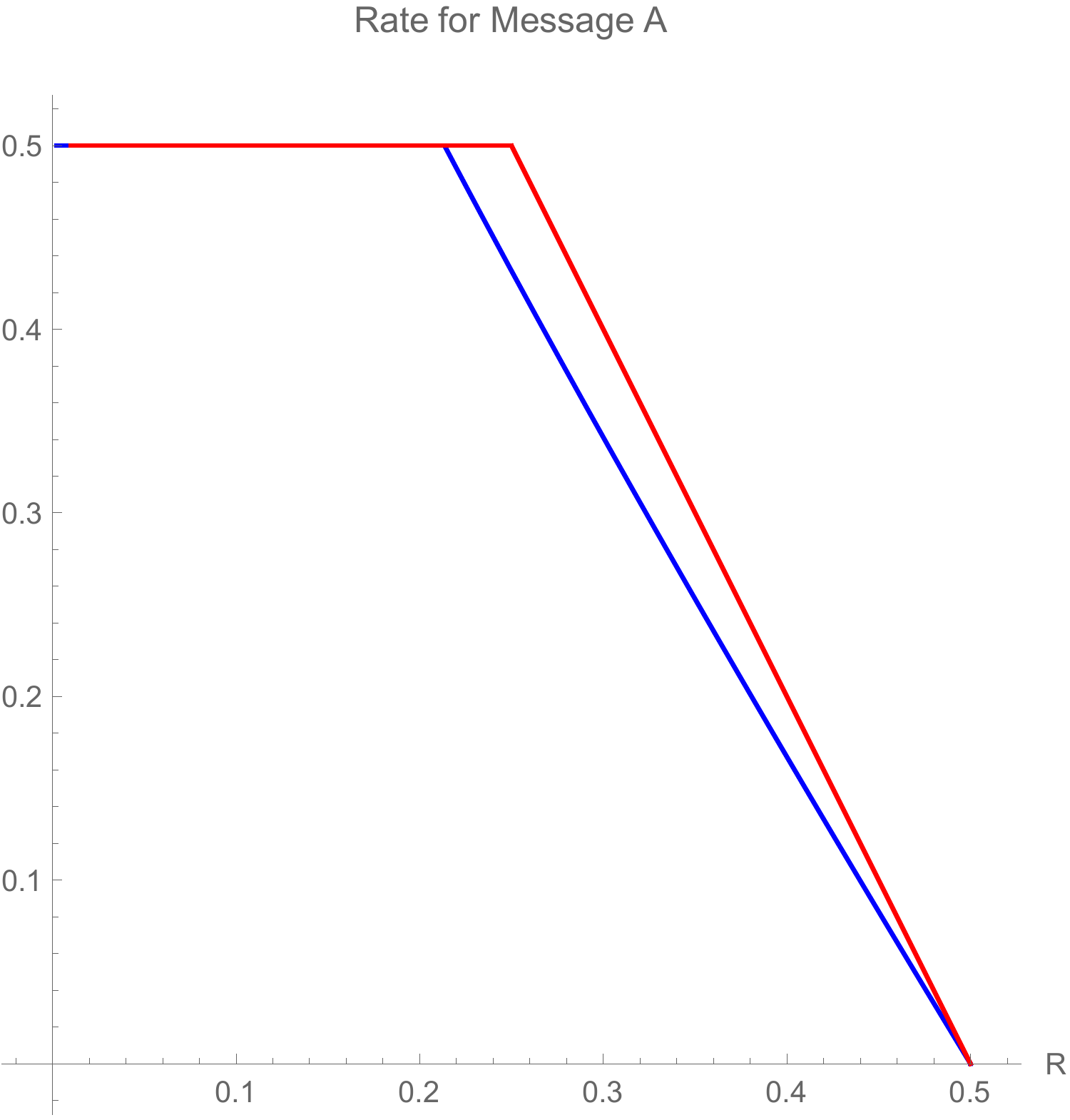}
 \caption{Numerical comparison between Eqs. \eqref{GJT1} and \eqref{GJT}.
Upper red line expresses Eq. \eqref{GJT1}.
Lower blue line expresses Eq. \eqref{GJT}.
}
\Label{FF2}
      \end{minipage}&
 
 
      \begin{minipage}{0.48\hsize}
        \centering
  \includegraphics[width=0.8\linewidth]{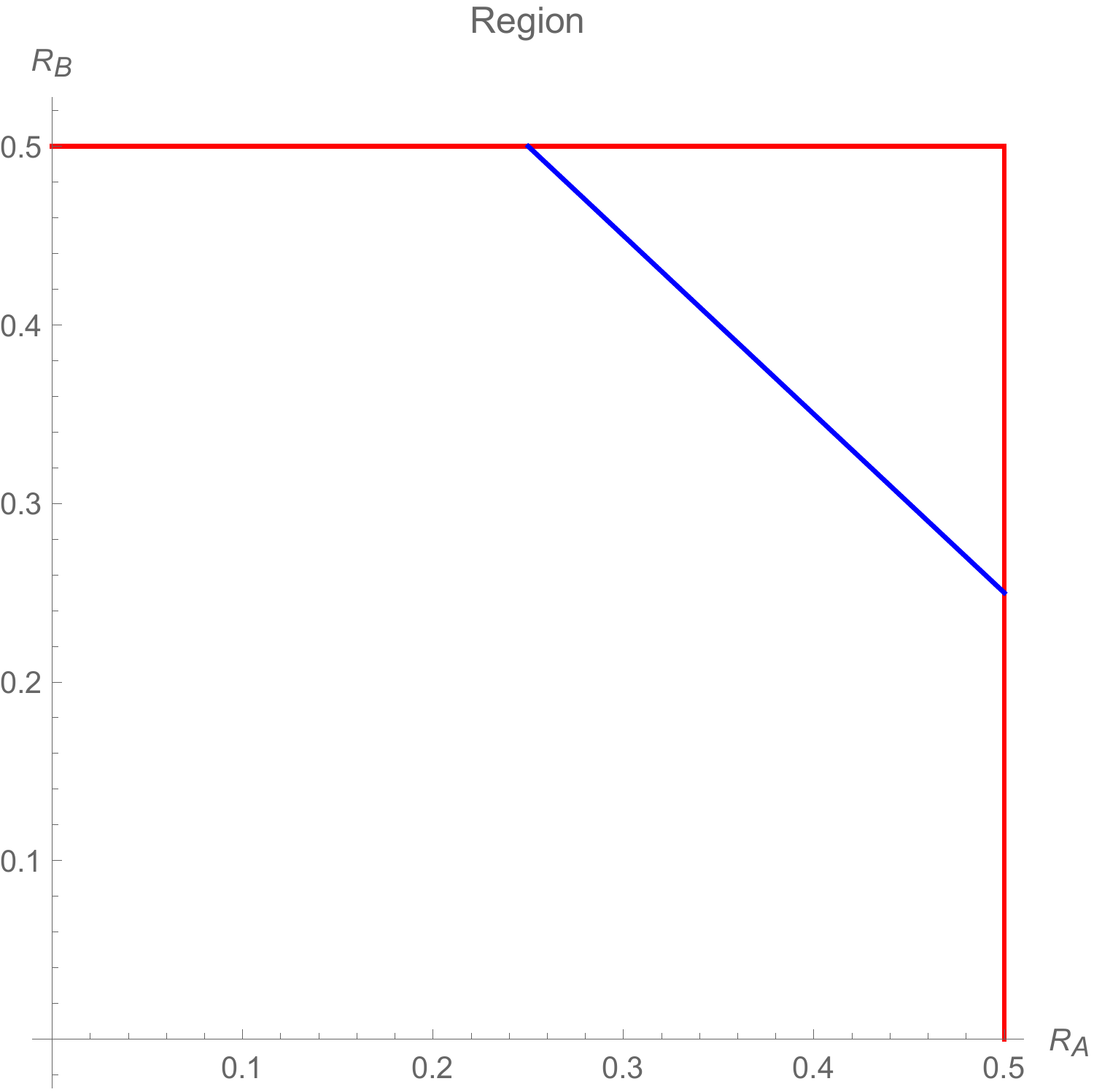}
\caption{Comparison of two regions
${\cal C}_{{\cal W}_1^{MAC}}$ 
and $\cl ({\cal C}_{{\cal W}_1^{MAC}}^1\cup {\cal C}_{{\cal W}_1^{MAC}}^2)$.
The former is strictly larger than the latter.
The red line expresses the boundary of 
${\cal C}_{{\cal W}_1^{MAC}}$. 
The blue line expresses the boundary of 
$\cl ({\cal C}_{{\cal W}_1^{MAC}}^1\cup {\cal C}_{{\cal W}_1^{MAC}}^2)$.}
\Label{Fregion}
      \end{minipage} 
\end{tabular}
\end{figure}

\begin{derivationof}{Eqs. \eqref{GJT} -- \eqref{OY2}}
The following derivations are partially based on a numerical calculation.
When $T$ is singleton,
we have
\begin{align}
\min(I(A;Y_0),I(A;Y_1))=&\min(1-h(p),\frac{1}{2}) ,\\
\min(I(B;Y_0|A),I(B;Y_1|A))
=&
\min(h(p),h(p p_0+(1-p)(1- p_0))-  \frac{1}{2}) \nonumber\\
=&
h(p p_0+(1-p)(1- p_0))-  \frac{1}{2},
\end{align}
where the final equation follows from the inequality
\begin{align}
h(p p_0+(1-p)(1- p_0))\le h(p)+h(p_0).
\end{align}

We define the function $f_1(R)$ as follows.
We choose $p_R$ such that $h(p p_0+(1-p)(1- p_0))-  \frac{1}{2}=R$.
Then, we set $f_1(R)=1-h(p_R)$.
LHS of \eqref{GJT} equals $\min(f_1(R),1/2)$. Hence, we obtain \eqref{GJT}.

We have $f_1(0)=1$ and $f_1(1/2)=0$
Since the function $\frac{f_1(R)-f_1(0)}{R}$ is monotonically increasing, as shown by Fig. \ref{FF3},
we have
\begin{align}
f_1(R)\le (1-2R)f_1(0)+ 2R f_1(1/2) = 1-2R.
\end{align}
for $R \in [0,1/2]$.
Thus, when $\sum_t P_T(t) h(p_t p_0+(1-p_t)(1- p_0))-  \frac{1}{2}=R$, we have
\begin{align}
\sum_t P_T(t) (1-h(p_t)) \le 1-2R.
\end{align}
Then, we have
\begin{align}
&\max_{P_T(t) }\Big\{
\min( \sum_t P_T(t) (1- h(p_t)), \frac{1}{2})
\Big|\sum_t P_T(t) (h(p_t p_0+(1-p_t)(1- p_0))-  \frac{1}{2} ) \le R
\Big\} \nonumber\\
=&
\min\Big(1-2R,
\frac{1}{2}\Big).
\end{align}
Hence, we obtain \eqref{GJT1}.

Eq. \eqref{GJT1} shows
\begin{align}
\max_{P_{A-T-B}}\min_\theta I(A;Y|T)_{P_{A-T-B},\theta}+\min_\theta I(B;Y|AT)_{P_{A-T-B},\theta}=
\frac{3}{4}. \Label{NLP}
\end{align}
Since this model is symmetric with respect to the exchange of $A$ and $B$,
Eq. \eqref{NLP} yields
\begin{align}
&\max \Big( \max_{P_{A-T-B}}\min_\theta I(A;Y|T)_{P_{A-T-B},\theta}+\min_\theta I(B;Y|AT)_{P_{A-T-B},\theta}, \nonumber \\
&\quad  \max_{P_{A-T-B}}\min_\theta I(A;Y|BT)_{P_{A-T-B},\theta}+\min_\theta I(B;Y|T)_{P_{A-T-B},\theta}
\Big)
\nonumber  \\
=& \frac{3}{4},
\Label{GJT3}
\end{align}
implies Eq. the first equation of \eqref{OY0}.

Considering the convex full of the region defined by \eqref{NLP} and its transposed region, 
we obtain \eqref{OY2}.

Then, we have
\begin{align}
\max_{p,q}\min_{\theta \in \Theta}I(AB;Y)_\theta
=&
\max_{p,q}\min
\Big(h(p p_0+(1-p)(1- p_0))-  \frac{1}{2} 
+h(q p_0+(1-q)(1- p_0)) -\frac{1}{2} ,\nonumber \\
& \quad h(q p+(1-q)(1- p))\Big)
= 1 \\
\max_{p,q}\min_{\theta \in \Theta}I(A;Y|B)_\theta
=&
\max_{p,q}\min
\Big(h(q p_0+(1-q)(1- p_0)) -\frac{1}{2} ,h(q)\Big)
= 1/2 \\
\max_{p,q}\min_{\theta \in \Theta}I(B;Y|A)_\theta
=&
\max_{p,q}\min
\Big(h(p p_0+(1-p)(1- p_0))-  \frac{1}{2} ,h(p) \Big)
= 1/2 .
\end{align}
The above maximum is attained when $p=q=1/2$.
Hence, we obtain the remaining equations in \eqref{OY0} and \eqref{OY1}.
\end{derivationof}

\subsection{Classical example 2}
\Label{S91}
We consider a compound channel model ${\cal W}^{MAC}_2 $ of two classical MAC
with ${\cal A}={\cal B}=\FF_2$ by defining the output variable $Y_i$ for $i=0,1$ as follows.
\begin{align}
Y_1=&
\left\{
\begin{array}{ll}
1 & \hbox{ when } A=B=1 \\
0 & \hbox{ otherwise.}
\end{array}
\right.
\\
Y_0=&
\left\{
\begin{array}{ll}
0 & \hbox{ when } A=B=0\\
1 & \hbox{ otherwise.}
\end{array}
\right.
\end{align}
Using the parameters $p:= P_B(1)$ and  $q:= P_A(1)$,
we have
\begin{align}
I(B;Y_1|A)&= q h(p),\quad  I(A;Y_0)= h(pq)-q h(p) \\
I(B;Y_0|A)&=(1- q) h(p),\quad  I(A;Y_1)= h((1-p)(1-q)) -(1-q) h(p) ,
\end{align}
where $h(p)$ is the binary entropy function. 
The quantities \eqref{Hi1} and \eqref{Hi2} are calculated as follows.
We have
\begin{align}
&\max_{P_A,P_B} \{ \min(I(A;Y_0),I(A;Y_1))| I(B;Y_0|A),I(B;Y_1|A) \ge R\} \nonumber \\
=&
\left\{
\begin{array}{ll}
h(1/4)-1/2 & \hbox{ when } R \le 1/2 \\
0 & \hbox{ otherwise}
\end{array}
\right.\Label{LGH}\\
&\max_{P_{A-T-B}} \{ \min(I(A;Y_0|T),I(A;Y_1|T))| I(B;Y_0|AT),I(B;Y_1|AT) \ge R\} \nonumber \\
\ge &\max_{q} \frac{h(\bar{p}_R q)+h((1-\bar{p}_R)(1-q))}{2} -R,\Label{LL2P}
\end{align}
where $\bar{p}_R$ chosen as $\frac{h(\bar{p}_R)}{2}  = R$.
These two quantities are numerically plotted in Fig. \ref{FF4}.
The quantities \eqref{GP1}, \eqref{GP2L}, and \eqref{GP2} are calculated as follows.
We have
\begin{align}
&{\cal C}_{{\cal W}_2^{MAC}}
=
\cl ({\cal C}_{{\cal W}_2^{MAC}}^1\cup {\cal C}_{{\cal W}_2^{MAC}}^2) \nonumber\\
=&
\bigcup_{p,q}
\Big\{(R_A,R_B)\Big| R_A+R_B\le \frac{H(p q)+H((1-p)(1- q))}{2}  , 
R_A\le \frac{h(q)}{2}, R_B\le \frac{h(p)}{2} \Big\}.\Label{OY4}
\end{align}
The relations
\begin{align}
R_1({\cal W}^{MAC}_2)=
R_2({\cal W}^{MAC}_2)=R_3({\cal W}^{MAC}_2)=\max_{p,q} 
\frac{H(p q)+H((1-p)(1- q))}{2} 
\Label{OY6}
\end{align}
hold. 
\if0
That is, we have
\begin{align}
&\max_{P_{A-T-B}} \min (I(AB;Y|T)_{P_{A-T-B},0}, (AB;Y|T)_{P_{A-T-B},1}) \nonumber \\
=& 
\max \Big(\max_{P_{A-T-B}}\min_\theta I(A;Y|T)_{P_{A-T-B},\theta}+\min_\theta I(B;Y|AT)_{P_{A-T-B},\theta}, \nonumber \\
&\quad \max_{P_{A-T-B}}\min_\theta I(A;Y|BT)_{P_{A-T-B},\theta}+\min_\theta I(B;Y|T)_{P_{A-T-B},\theta}
\Big)\nonumber  \\
= &\max_{p,q} 
\frac{H(p q)+H((1-p)(1- q))}{2}  .
\Label{MV1}
\end{align}
\fi
Hence, this example has no gap among
Eq. \eqref{GP2}, Eq. \eqref{GP2L}, and Eq. \eqref{GP1}.

\begin{figure}[htpb]
  \centering
    \begin{tabular}{cc}
 
       \begin{minipage}{0.48\hsize}
        \centering
  \includegraphics[width=0.8\linewidth]{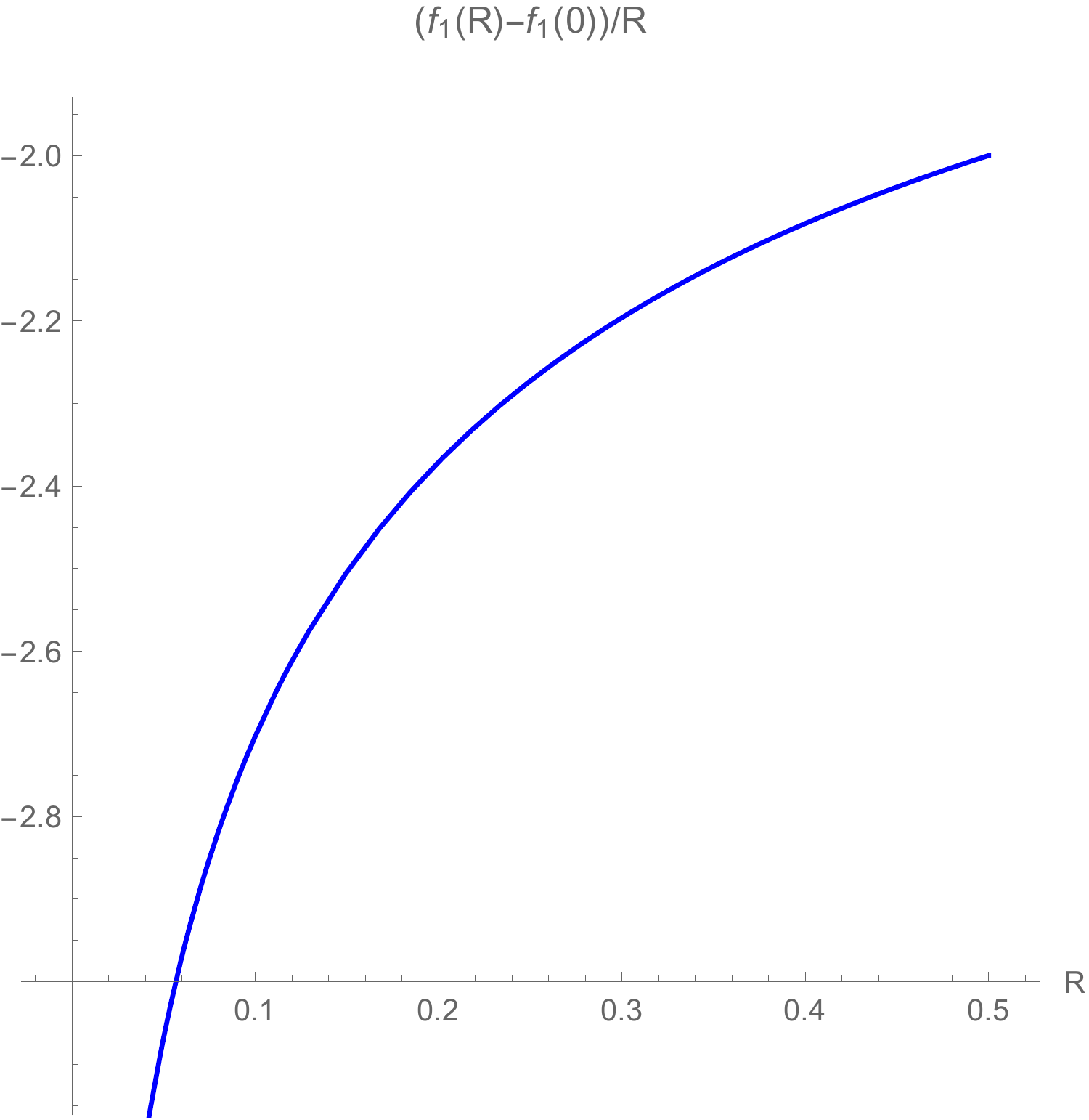}
\caption{Numerical verification for 
convexity of $f_1(R)$.
This graph shows that the function $\frac{f_1(R)-f_1(0)}{R}$
is monotonically increasing.}
\Label{FF3}
      \end{minipage} &

 
      \begin{minipage}{0.48\hsize}
        \centering
 \includegraphics[width=0.8\linewidth]{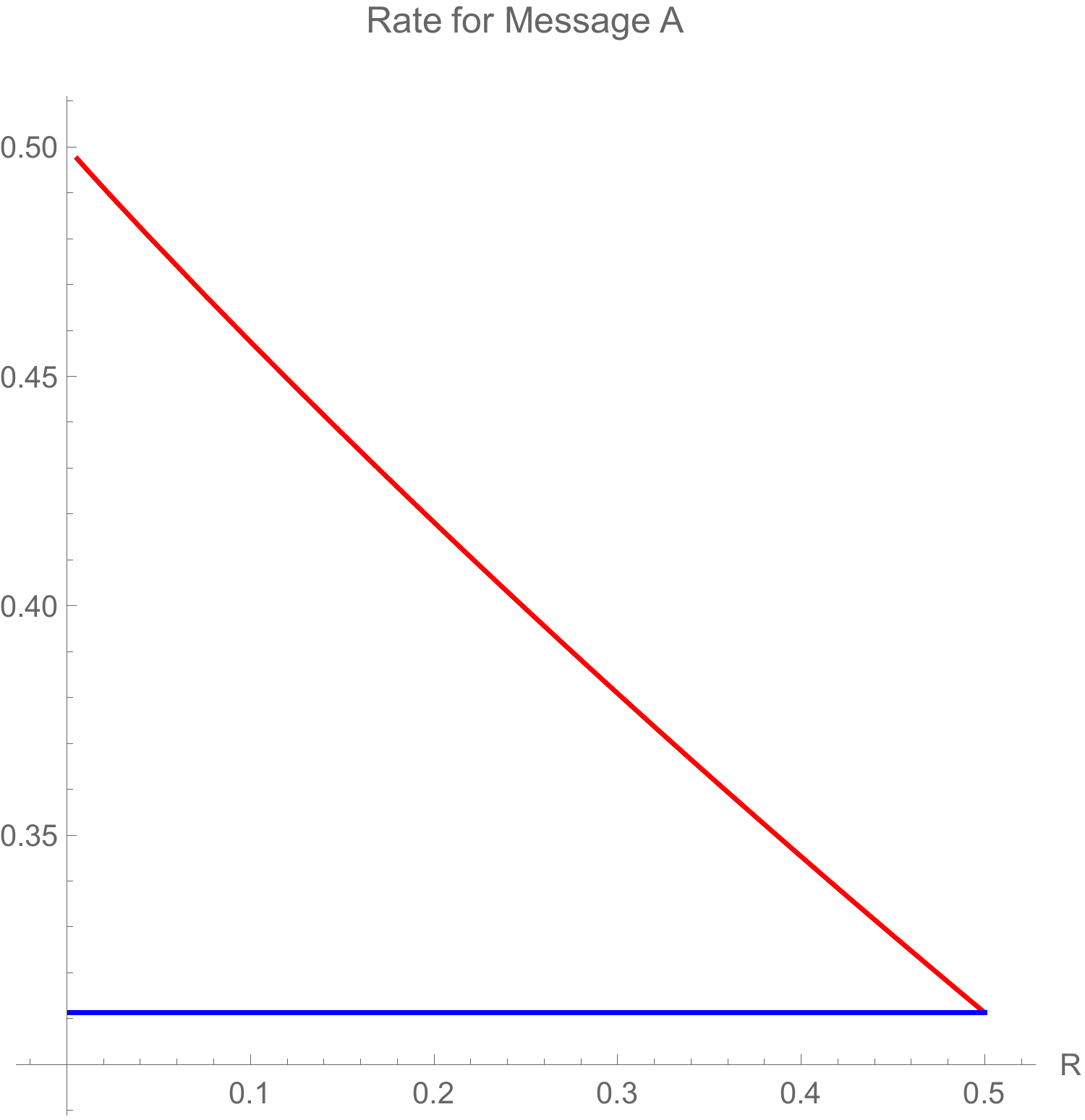}
 \caption{Numerical comparison between Eqs. \eqref{LGH} and \eqref{LL2P}.
Upper red line expresses Eq. \eqref{LL2P}.
Lower blue line expresses Eq. \eqref{LGH}.
}
\Label{FF4}
      \end{minipage}
\end{tabular}
\end{figure}
 
To derive Eqs. \eqref{LGH} -- \eqref{OY6}, we prepare the following statement.
\begin{statement}\Label{LM3}
The inequality 
\begin{align}
\min(h(pq)-q h(p),h((1-p)(1-q)) -(1-q) h(p))
\le h(1/4)-1/2
\end{align}
holds. The equality holds only when $p=q=1/2$.
\end{statement}
Statement \ref{LM3} is numerically shown by the numerical plot given in Fig. \ref{PO1}.

\begin{derivationof}{Eqs. \eqref{LGH} -- \eqref{OY6}}
We have
\begin{align}
&\max_{P_A,P_B} \{ \min(I(A;Y_0),I(A;Y_1))| I(B;Y_0|A),I(B;Y_1|A) \ge R\} \nonumber \\
=&
\max_{p,q}
\{\min(h(pq)-q h(p),h((1-p)(1-q)) -(1-q) h(p))|
\min(qh(p),(1-q)h(p)) \ge R\}.
\end{align}
The maximum $\max_{p,q}\min(qh(p),(1-q)h(p))$ is $1/2$, and it is attained only when $p=q=1/2$.
Hence, combining Lemma \ref{LM3}, we obtain \eqref{LGH}.

Let $\bar{P}_{A-T-B}$ be an arbitrary distribution on ${\cal A}\times {\cal T}\times {\cal B}$.
We define $(p_t,q_t)$ as
$p_t:=\bar{P}_{B|T}(1|t)$ and $q_t:=\bar{P}_{A|T}(1|t)$.
Then, we define the symmetrized distribution ${P}_{A-(T,J)-B}$ with ${\cal J}=\FF_2$ as follows.
\begin{align}
&{P}_{T,J}(t,0)={P}_{T,J}(t,1)=\bar{P}_{T}(t)/2 \nonumber\\
&{P}_{B|T,J}(1|t,0)=p_t, \quad 
{P}_{B|T,J}(1|t,1)=1-p_t \nonumber\\
&{P}_{A|T,J}(1|t,0)=q_t, \quad
{P}_{A|T,J}(1|t,1)=1-q_t .
\end{align}
Then, we have
\begin{align}
& \min(I(A;Y_0|T)_{P_{A-T-B}},I(A;Y_1|T)_{P_{A-T-B}}) \nonumber\\
=&\min( \sum_t P_T(t) h(p_tq_t)-q_t h(p_t), \sum_t P_T(t)h((1-p_t)(1-q_t)) -(1-q_t) h(p_t)) \nonumber\\
\le &\sum_t P_T(t) \frac{h(p_tq_t)-q_t h(p_t)+h((1-p_t)(1-q_t)) -(1-q_t) h(p_t)}{2} \nonumber\\
= &\sum_t P_T(t) \frac{h(p_tq_t)+h((1-p_t)(1-q_t)) - h(p_t))}{2} \nonumber\\
=& \min(I(A;Y_0|TJ)_{{P}_{A-(T,J)-B}},I(A;Y_1|JT)_{{P}_{A-(T,J)-B}}). \Label{FP1}
\end{align}
Similarly, we have
\begin{align}
& \min(I(B;Y_0|AT)_{P_{A-T-B}},I(B;Y_1|AT)_{P_{A-T-B}}) \nonumber\\
\le 
& \min(I(B;Y_0|ATJ)_{{P}_{A-(T,J)-B}},I(B;Y_1|AJT)_{{P}_{A-(T,J)-B}}) \nonumber\\
=&\sum_t P_T(t) \frac{h(p_t)}{2} \Label{FP2}\\
& \min(I(AB;Y_0|T)_{P_{A-T-B}},I(AB;Y_1|T)_{P_{A-T-B}}) \nonumber\\
\le
 & \min(I(AB;Y_0|TJ)_{{P}_{A-(T,J)-B}},I(AB;Y_1|JT)_{{P}_{A-(T,J)-B}}) \nonumber\\
= &\sum_t P_T(t) \frac{h(p_tq_t)+h((1-p_t)(1-q_t))}{2} .\Label{FP5}
\end{align}
Also, we have the same relations by exchanging $A$ and $B$ as
\begin{align}
& \min(I(A;Y_0|T)_{P_{A-T-B}},I(A;Y_1|T)_{P_{A-T-B}}) \nonumber\\
\le & \min(I(A;Y_0|TJ)_{{P}_{A-(T,J)-B}},I(A;Y_1|JT)_{{P}_{A-(T,J)-B}})\nonumber\\
= &\sum_t P_T(t) \frac{h(p_tq_t)+h((1-p_t)(1-q_t)) - h(q_t))}{2} \Label{FP3}\\
& \min(I(A;Y_0|BT)_{P_{A-T-B}},I(A;Y_1|BT)_{P_{A-T-B}}) \nonumber\\
\le 
& \min(I(A;Y_0|BTJ)_{{P}_{A-(T,J)-B}},I(A;Y_1|BJT)_{{P}_{A-(T,J)-B}}) \nonumber\\
=&\sum_t P_T(t) \frac{h(q_t)}{2} \Label{FP4}.
\end{align}
Due to these relations, we can restrict the joint distribution $P_{A-T-B}$
to the symmetrized distribution ${P}_{A-(T,J)-B}$.

As a simple case, we consider the case $T$ is singleton.
That is, we focus on $P_{AJB}=P_{A-J-B}$ as follows. 
$P_J$ is the uniform distribution.
\begin{align}
P_{A|J}(1|0)&=q,P_{A|J}(1|1)=1-q,\\
P_{B|J}(1|0)&=p,P_{B|J}(1|1)=1-p.
\end{align}
Considering the above joint distribution, we have
\begin{align}
&\max_{P_{A-T-B}} \{ \min(I(A;Y_0|T),I(A;Y_1|T))| I(B;Y_0|AT),I(B;Y_1|AT) \ge R\} \nonumber \\
\ge &\max_{p,q} 
\Big\{ \frac{(h(pq)-q h(p))+(h((1-p)(1-q)) -(1-q) h(p))}{2}\Big| \frac{q h(p)+(1-q) h(p)}{2}  \ge R\Big\} \nonumber \\
= &\max_{p,q} \Big\{ \frac{h(pq)+h((1-p)(1-q))}{2} -\frac{h(p)}{2}\Big| \frac{h(p)}{2}  \ge R\Big\} .
\end{align}
Choosing $\bar{p}_R$ as $\frac{h(\bar{p}_R)}{2}  = R$,
\if0
We define the function $f_2(R)$ as $\max_q\frac{h(\bar{p}_R q)+h((1-\bar{p}_R)(1-q))}{2} -R$.
Since the function $\frac{f_2(R)-f_2(0)}{R}$ is monotonically decreasing, as shown by Fig. \ref{FF5},
$f_2(R)$ is concave.
When $\sum_t P_T(t) h( p_t)/2=R\in [0,0.5]$, we have
\begin{align}
f_2(R)\ge \sum_t P_T(t) f(h( p_t)/2).
\end{align}
Hence,\fi
 we obtain \eqref{LL2P}.

Next, we show the remaining equations \eqref{OY4} and \eqref{OY6}.
The relations \eqref{FP1}, \eqref{FP2}, \eqref{FP3}, and \eqref{FP4} imply the relation
\begin{align}
&\Big\{(R_A,R_B)\Big| R_A+R_B\le \frac{H(p q)+H((1-p)(1- q))}{2}  , 
R_A\le \frac{h(q)}{2}, R_B\le \frac{h(p)}{2} \Big\} \nonumber\\
\subset & \cl ({\cal C}_{{\cal W}_2^{MAC}}^1\cup {\cal C}_{{\cal W}_2^{MAC}}^2)
\end{align}
for any pair of $(p,q)$.
Also, the relation \eqref{FP2}, \eqref{FP5}, and \eqref{FP5} implies 
\begin{align}
&{\cal C}_{{\cal W}_2^{MAC}}
\nonumber \\
\subset &
\bigcup_{p,q}
\Big\{(R_A,R_B)\Big| R_A+R_B\le \frac{H(p q)+H((1-p)(1- q))}{2}  , 
R_A\le \frac{h(q)}{2}, R_B\le \frac{h(p)}{2} \Big\}.
\end{align}
Hence, we obtain \eqref{OY4}.

The inequality 
$R_3({\cal W}^{MAC}_2) \le \max_{p,q} 
\frac{H(p q)+H((1-p)(1- q))}{2} $
holds as
\begin{align}
&\max_{P_{A-T-B}} \min (I(AB;Y|T)_{P_{A-T-B},0}, (AB;Y|T)_{P_{A-T-B},1}) \nonumber \\
= &\max_{(P_T, p_t,q_t)} 
\min \Big(   \sum_t P_T(t) h(p_t q_t), \sum_t P_T(t) h((1-p_t)(1- q_t))  \Big)\nonumber \\
= &\max_{(P_T, p_t,q_t)} 
 \sum_t P_T(t) \frac{h(p_t q_t)+h((1-p_t)(1- q_t))}{2}  \nonumber \\
= &\max_{p,q} 
\frac{h(p q)+h((1-p)(1- q))}{2}  .
\end{align}
Since 
$R_1({\cal W}^{MAC}_2)=R_2({\cal W}^{MAC}_2)=
\frac{H(p q)+H((1-p)(1- q))}{2} $ follows from \eqref{OY4},
combining \eqref{GP2M}, we obtain \eqref{OY6}.
\end{derivationof}

\begin{figure}[htpb]
  \centering
    \begin{tabular}{cc}
 
 
      \begin{minipage}{0.48\hsize}
        \centering
 \includegraphics[width=0.8\linewidth]{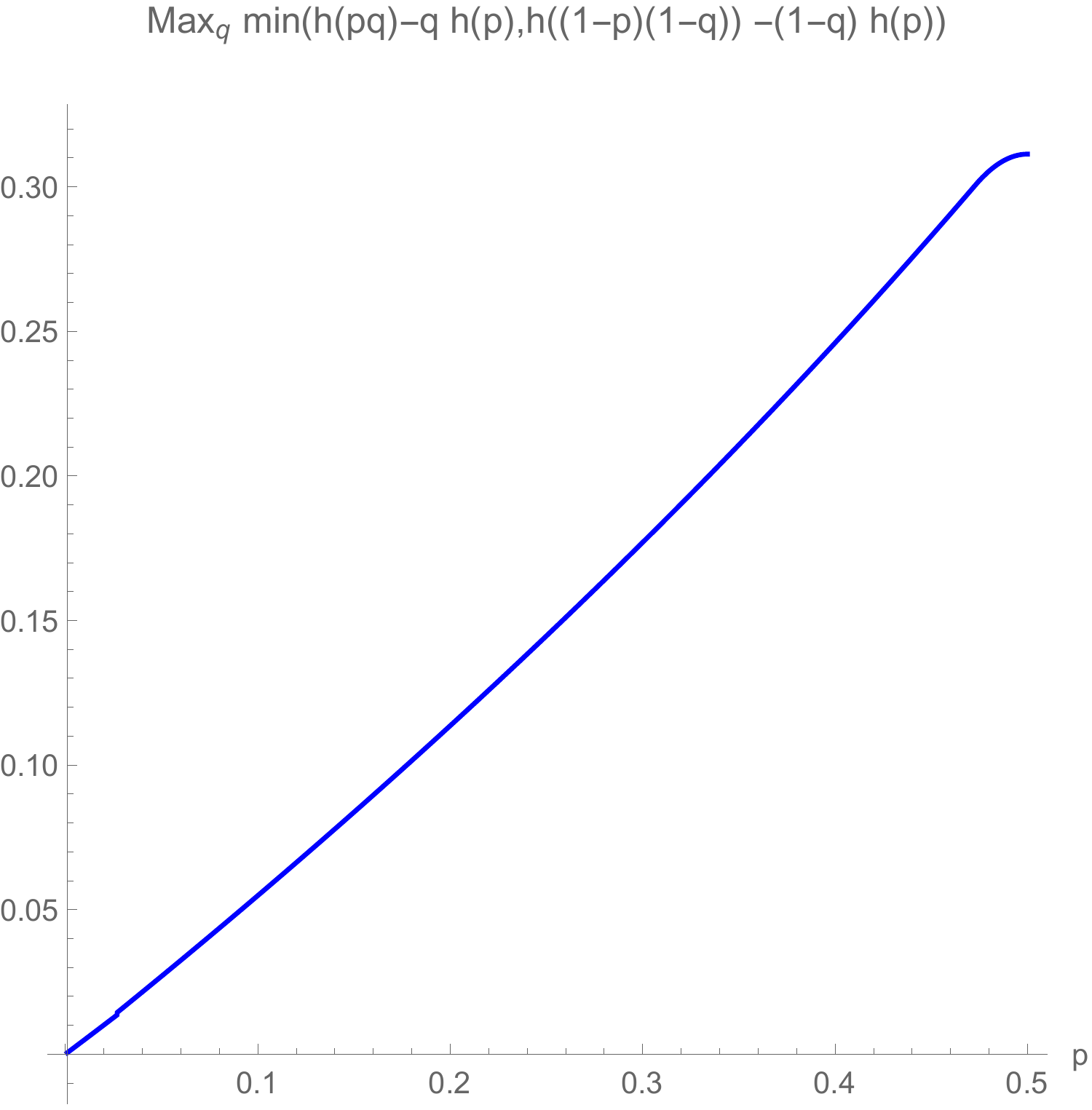}
 \caption{Numerical verification of Lemma \ref{LM3}.
This graph shows 
$\max_q \min(h(pq)-q h(p),h((1-p)(1-q)) -(1-q) h(p))$
as a function of $p$.}
\Label{PO1}
      \end{minipage} &

\if0 
      \begin{minipage}{0.48\hsize}
        \centering
  \includegraphics[width=0.7\linewidth]{NN1.pdf}
\caption{Numerical verification for 
concavity of $f_2(R)$.
This graph shows that the function $\frac{f_2(R)-f_2(0)}{R}$
is monotonically decreasing.}
\Label{FF5}
      \end{minipage}
\fi
 \end{tabular}
\end{figure}

\subsection{Quantum examples}\Label{S92}
In this subsection, modifying the families of classical MACs given in Subsections \ref{S93A} and \ref{S91}, 
we show the existence of quantum examples to have gaps similar to classical examples given in Subsections \ref{S93A} and \ref{S91}.
First, we convert the classical system $\FF_2$ in each output system to the qubit system spanned by
$\{|0\rangle, |1\rangle\}$.
For example, the output system with $\theta=1$ of example ${\cal W}_1^{MAC}$
is converted to a two-qubit system.
We define the vector $
|\phi\rangle:= \cos \phi |0\rangle+ \sin \phi |1\rangle$.
The output information $0$ in the output system
is converted to the state $| \phi \rangle$,
and 
the output information $1$ in the output system
is converted to the state $|1\rangle$.

All the mutual information and all the conditional mutual information are
continuous for $\phi$.
Hence, when $\phi$ is close to zero,
these information quantities are close to the values in the above classical examples.
That is, the quantum example has the gaps presented in the above classical examples.
This example shows the importance of our universal code for c-q MAC as well as the codes given in Lemma \ref{LL6}.

\if0
Since \eqref{Hi2} is strictly larger than \eqref{Hi1} in the above classical case,
\eqref{Hi2} is still strictly larger than \eqref{Hi1} 
when $\theta$ is close to zero.
That is, 
there is an example such that
\eqref{Hi2} is still strictly larger than \eqref{Hi1}
and the density matrices are not commutative with each other. 

we define
two classical-quantum MAC with ${\cal A}={\cal B}=\FF_2$.
We define $W_{a,b,j}$ for $a,b,j \in \FF_2$ as follows.
\begin{align}
W_{a,b,j}=
\left\{
\begin{array}{cl}
|1\rangle \langle 1| & \hbox{ when } A=B=j \\
|0\rangle \langle 0| & \hbox{ when } A=B=j\oplus 1 \\
|\frac{\pi j}{2}+\theta\rangle \langle \frac{\pi j}{2}+\theta|  & \hbox{ otherwise.}
\end{array}
\right.
\end{align}
where $
|\theta\rangle:= \cos \theta |0\rangle+
 \sin \theta |1\rangle$.
The mutual information $I(A;Y)_j$ and the conditional mutual information $I(B;Y|A)_j$
are continuous functions of $\theta$.
\fi

\subsection{Quantum example for gap between $R_2({\cal W}^{MAC})$ and 
$R_3({\cal W}^{MAC})$}\Label{S93}
The examples in Subsections \ref{S93A} and \ref{S91} have no 
gap between $R_2({\cal W}^{MAC})$ and $R_3({\cal W}^{MAC})$.
To find an example for such a gap, 
we consider a c-q channel 
$\{W_x\}_{x \in {\cal X}}$ on the quantum system ${\cal H}_Y$.
For ${\cal A}={\cal B}={\cal X}$, 
we define two classical-quantum MACs as $W_{a,b,0}:=W_a$ and  $W_{a,b,1}:=W_b$.
Then, for any joint distribution $P_{A-T-B}$,
we have
\begin{align}
I(A;Y|T)_{P_{A-T-B},1} &=I(A;Y|BT)_{P_{A-T-B},1}=0, \\
I(B;Y|T)_{P_{A-T-B},0} &=I(B;Y|AT)_{P_{A-T-B},0}=0,
\end{align}
Hence, $R_2({\cal W}^{MAC})$ is zero.
\begin{align}
&\max_{P_{A-T-B}} \min (I(AB;Y|T)_{P_{A-T-B},0}, (AB;Y|T)_{P_{A-T-B},1}) \nonumber \\
= &\max_{P_A=P_B=P_X} \min (I(AB;Y|T)_{P_A\times P_B,0}, (AB;Y|T)_{P_A\times P_B,1}) \nonumber \\
= &\max_{P_X} I(X,Y)_{P_X},
\end{align}
where $I(X,Y)_{P_X}$ is the mutual information for 
the c-q channel $\{W_x\}_{x \in {\cal X}}$ on the quantum system ${\cal H}_Y$.
Hence, $R_3({\cal W}^{MAC})$ is strictly larger than zero unless the capacity of $W_x$ is zero.
That is,  
this example has a gap between $R_3({\cal W}^{MAC})$ and 
$R_2({\cal W}^{MAC})$.

\section{Proofs of Theorems \ref{T1} and \ref{T2-5}}\Label{S4}
\subsection{Proof of Theorem \ref{T1}}\Label{S4-1}
This section shows Theorem \ref{T1}.
Hayden, and Devetak \cite{YHD} showed the relation
\begin{align}
{\cal C}
\supset&
\cl
\bigcup_{P_{UX}} \Big\{(R_A,R_B)\Big| R_A \le \min(I(U;Y)_{P_{UX}},I(U;Z)_{P_{UX}}), R_B \le I(X;Y|U)_{P_{UX}} \Big)\Big\}_{P_{UX}} ,
\end{align}
which can be also proven by Corollary \ref{Cor1}.
When a rate pair $(R_A,R_B)$ is achievable,
$(R_A-r,R_B+r)$ is also achievable with an arbitrary $r \in [0,R_A]$ 
by converting a part of common message with rate $r$ into a private message.
Hence, we have the relations 
\begin{align}
&
\cl
\bigcup_{P_{UX}} \Big\{(R_A,R_B)\Big| R_A \le \min(I(U;Y)_{P_{UX}},I(U;Z)_{P_{UX}}), R_B \le I(X;Y|U)_{P_{UX}} \Big)\Big\}_{P_{UX}} 
\nonumber \\
\subset
&\cl
\bigcup_{P_{UX}} \Big\{(R_A,R_B)\Big| R_A \le \min(I(U;Y)_{P_{UX}},I(U;Z)_{P_{UX}}), R_A+R_B I(UX;Y)_{P_{UX}} \Big)\Big\}_{P_{UX}}\subset {\cal C}.
\end{align}
Therefore, it is sufficient to show the relation
\begin{align}
{\cal C}
\subset 
\cl
\bigcup_{P_{UX}} \Big\{(R_A,R_B)\Big| R_A \le \min(I(U;Y)_{P_{UX}},I(U;Z)_{P_{UX}}), R_B \le I(X;Y|U)_{P_{UX}} \Big)\Big\}_{P_{UX}} .
\end{align}

For this aim, we focus on a sequence of codes $\{\Psi_{n}\}$
with a transmission rate pair $(R_A,R_B)$, where
the encoder $\phi_{n}$ of $\Psi_{n}$ maps $(M_{A,n},M_{B,n})$ to $X^n$.
Then, we find that
\begin{align}
& I(M_{B,n};Y^n)
\stackrel{(a)}{\le}  I(M_{B,n};Y^n|M_{A,n} )
\stackrel{(b)}{\le}  I(X^n;Y^n|M_{A,n} ) \nonumber \\
=& \sum_{i=1}^n  I(X^n;Y_i| Y^{i-1}M_{A,n})
= \sum_{i=1}^n  I(X_i;Y_i| Y^{i-1}M_{A,n})
\stackrel{(c)}{\le} \sum_{i=1}^n  I(X_i;Y_i| M_{A,n}),
\end{align}
where each step can be shown as follows.
Step $(a)$ holds because $M_{B,n}$ is independent of $M_{A,n}$.
Step $(b)$ follows from the Markov chain 
$M_{B,n}-X^n-Y^n$ when $M_{A,n}$ is fixed.
Step $(c)$ follows from the Markov chain 
$Y^{i-1}-X_i-Y_i$ when $M_{A,n}$ is fixed.

Also, we find that
\begin{align}
& I(M_{A,n};Y^n)= \sum_{i=1}^n  I(M_{A,n};Y_i| Y^{i-1})
\stackrel{(a)}{\le} \sum_{i=1}^n  I(M_{A,n};Y_i),
\end{align}
where $(a)$ follows from the Markov chain $Y^{i-1}-M_{A,n}-Y_i $. 
Similarly, we have
\begin{align}
& I(M_{A,n};Z^n)
\le \sum_{i=1}^n  I(M_{A,n};Z_i),
\end{align}

Now, we introduce a new variable 
$I_n$ subject to the uniform distribution on $\{1, \ldots, n\}$.
We also define the conditional distribution 
$P_{UX|I_n}(u,x|i):=
P_{M_{A,n}X_i}(u,x)$.
That is, when $I_n=i$, $U$ and $X$ take the value $M_{A,n}$ and $X_i$.
Also, we define the variable $U_n:=(U I_n)$, and denote the joint distribution 
for $X$ and $U_n$ by $P_n$.
Hence, 
\begin{align}
&\frac{1}{n} I(M_{B,n};Y^n) \le I(X;Y| U I_n) = I(X;Y| U_n)_{P_n} \\
&\frac{1}{n} I(M_{A,n};Y^n) \le I(U;Y| I_n) \le I(U I_n;Y)= I(U_n;Y)_{P_n}\\ 
&\frac{1}{n} I(M_{A,n};Z^n) \le I(U;Y| I_n)\le I(U I_n;Z)= I(U_n;Z)_{P_n}.
\end{align}
Combining Fano's inequality, we can show that
\begin{align}
R_A &\le \liminf_{n \to \infty }\min (I(U_n;Y)_{P_n},I(U_n;Z)_{P_n}) \\
R_B&\le \liminf_{n \to \infty } I(X;Y|U_n)_{P_n}.
\end{align}
The above relation shows that the capacity region ${\cal C}$ is contained in the following set.
\begin{align}
\cl
\bigcup_{P_{UX}} \Big\{(R_A,R_B)\Big| R_A \le \min(I(U;Y)_{P_{UX}},I(U;Z)_{P_{UX}}), R_B \le I(X;Y|U)_{P_{UX}} \Big)\Big\}_{P_{UX}} .
\end{align}

\subsection{Proof of Theorem \ref{T2-5}}\Label{S4-2}
Since Corollary \ref{Cor1} shows the relation
\begin{align}
{\cal C}_{\cal W}
\supset &\cl
\bigcup_{P_{UX}} \Big\{(R_A,R_B)\Big| R_A \le  \min_\theta \min(I(U;Y)_{P_{UX},\theta},I(U;Z)_{P_{UX},\theta}), 
R_B \le \min_\theta I(X;Y|U)_{P_{UX},\theta} \Big)\Big\}_{P_{UX}} ,
\end{align}
it is sufficient to show the opposite relation.

We apply the discussion of the above subsection.
Then,
our choice of $P_n$ does not depend on the channel parameter $\theta$. 
Hence, we have 
\begin{align}
{\cal C}_{\cal W}
\subset &\cl
\bigcup_{P_{UX}}
\bigcap_{\theta\in \Theta}\Big\{(R_A,R_B)\Big| R_A \le  \min(I(U;Y)_{P_{UX},\theta},I(U;Z)_{P_{UX},\theta}), 
R_B \le I(X;Y|U)_{P_{UX},\theta} \Big)\Big\}_{P_{UX}} \nonumber \\
=&\cl
\bigcup_{P_{UX}}\Big\{(R_A,R_B)\Big| R_A \le  \min_\theta \min(I(U;Y)_{P_{UX},\theta},I(U;Z)_{P_{UX},\theta}), 
R_B \le \min_\theta I(X;Y|U)_{P_{UX},\theta} \Big)\Big\}_{P_{UX}} .
\end{align}

\section{Proof of Converse part of Theorem \ref{T4-5}}\Label{SS6}
This section shows Eq. \eqref{MFC}, i.e., the converse part of Theorem \ref{T4-5}.
For this aim, we focus on a sequence of codes $\{\Psi_{n}\}$
with a transmission rate pair $(R_A,R_B)$, where the respective encoders of
$\Psi_{n}$ map $M_{A,n}$ and $M_{B,n}$ to $A^n$ and $B^n$, respectively.
Then, for any $\theta\in \Theta$, we find that
\begin{align}
& I(M_{B,n};Y^n)
\stackrel{(a)}{\le}  I(M_{B,n};Y^n|M_{A,n} )_\theta
=  I(B^n;Y^n|A^n )_\theta \nonumber \\
=& \sum_{i=1}^n  I(B^n;Y_i| Y^{i-1} A^n)_\theta
= \sum_{i=1}^n  I(B_i;Y_i| Y^{i-1}A^n)_\theta
\stackrel{(b)}{\le} \sum_{i=1}^n  I(B_i;Y_i| A^n)_\theta,
= \sum_{i=1}^n  I(B_i;Y_i| A_i)_\theta,
\end{align}
where each step can be shown as follows.
Step $(a)$ holds because $M_{B,n}$ is independent of $M_{A,n}$.
Step $(b)$ follows from the Markov chain 
$Y^{i-1}-B_i-Y_i$ when $A_n$ is fixed.
Similarly, we have
\begin{align}
 I(M_{A,n};Y^n)_\theta
\le \sum_{i=1}^n  I(A_i;Y_i| B_i)_\theta.
\end{align}

Also, we find that
\begin{align}
&
 I(M_{A,n} M_{B,n};Y^n)_\theta= 
 I(A^n,B^n ;Y^n)_\theta= 
\sum_{i=1}^n  I(A^n,B^n;Y_i| Y^{i-1})_\theta
\stackrel{(a)}{\le} \sum_{i=1}^n  I(A^n,B^n;Y_i)_\theta,
\stackrel{(b)}{=} \sum_{i=1}^n  I(A_i,B_i;Y_i),
\end{align}
where $(a)$ 
follows from the Markov chain $Y^{i-1}-(A^n,B^n)-Y_i $. 
$(b)$
follows from the Markov chain $A^n,B^n-(A_,b_i)-Y_i $. 

Now, we introduce a new variable 
$U_n$ subject to the uniform distribution on $\{1, \ldots, n\}$.
We also define the conditional distribution 
$P_{AB|I_n}(a,b|i):=
P_{A_i B_i}(a,b)$.
That is, when $I_n=i$, $A$ and $B$ take the value $A_i$ and $B_i$.
Also, we denote the joint distribution 
for $A,B$ and $U_n$ by $P_n$.
Hence, 
\begin{align}
&\frac{1}{n} I(M_{B,n};Y^n) \le I(B;Y| A U_n)_{P_n,\theta} \\
&\frac{1}{n} I(M_{A,n};Y^n) \le I(A;Y| B U_n)_{P_n,\theta}\\ 
&\frac{1}{n} I(M_{A,n} M_{A,n};Y^n) \le I(AB;Y| U_n)_{P_n,\theta}.
\end{align}
Combining Fano's inequality, we can show that
\begin{align}
R_A &\le \liminf_{n \to \infty }I(B;Y| A U_n)_{P_n,\theta} \\
R_B&\le \liminf_{n \to \infty } I(A;Y| B U_n)_{P_n,\theta}\\
R_A+R_B&\le \liminf_{n \to \infty } I(AB;Y|  U_n)_{P_n,\theta}.
\end{align}
Since the above inequalities hold for any $\theta \in \Theta$, we have
\begin{align}
R_A &\le \min_\theta \liminf_{n \to \infty }I(B;Y| A U_n)_{P_n,\theta} \\
R_B&\le \min_\theta \liminf_{n \to \infty } I(A;Y| B U_n)_{P_n,\theta}\\
R_A+R_B&\le \min_\theta  \liminf_{n \to \infty } I(AB;Y|  U_n)_{P_n,\theta}.
\end{align}
The above relation shows 
\begin{align}
{\cal C}_{{\cal W}^{MAC}}
\subset &
\cl \bigcup_{P_{A-T-B}}
\Big\{ (R_A,R_B) \Big|
R_A \le \min_\theta I(A;Y|BU)_{P_{A-T-B},\theta},
R_B \le \min_\theta I(B;Y|AU)_{P_{A-T-B},\theta},\nonumber\\
&\hspace{10ex} R_A+ R_B \le \min_\theta I(AB;Y|U)_{P_{A-T-B},\theta}
\Big\},
\end{align}
which implies \eqref{MFC}.

\if0
Direct part:
It is sufficient to show the universal achievability of the region 
\begin{align}
&\Big\{ (R_A,R_B) \Big|
R_A \le \min_\theta I(A;Y|BU)_{P_{A-T-B},\theta},
R_B \le \min_\theta I(B;Y|AU)_{P_{A-T-B},\theta},\\
&\hspace{10ex} R_A+ R_B \le \min_\theta I(AB;Y|U)_{P_{A-T-B},\theta}
\Big\}
\end{align}
for any joint distribution $P_{A-T-B}$.
We fix $P_{A-T-B}$.
Considering the time sharing, 
it is sufficient to show the universal  achievability of the following rate pairs
\begin{align}
(\min_\theta I(A;Y|BU)_{P_{A-T-B},\theta},
\min_\theta I(AB;Y|U)_{P_{A-T-B},\theta}-\min_\theta I(A;Y|BU)_{P_{A-T-B},\theta}), \\
(\min_\theta I(AB;Y|U)_{P_{A-T-B},\theta}-\min_\theta I(B;Y|AU)_{P_{A-T-B},\theta},
\min_\theta I(B;Y|AU)_{P_{A-T-B},\theta}).
\end{align}
In the following, we discuss only the 
universal  achievability of the second rate pair.
Since
\begin{align}
&\min_\theta I(AB;Y|U)_{P_{A-T-B},\theta}-\min_\theta I(B;Y|AU)_{P_{A-T-B},\theta} \\
\le&
\min_\theta I(AB;Y|U)_{P_{A-T-B},\theta}- I(B;Y|AU)_{P_{A-T-B},\theta}
=
\min_\theta I(A;Y|U)_{P_{A-T-B},\theta}.
\end{align}
Therefore, it is sufficient to show
the universal  achievability of the rate pair\par
\noindent$(\min_\theta I(A;Y|U)_{P_{A-T-B},\theta},\min_\theta I(B;Y|AU)_{P_{A-T-B},\theta}) $.
Now, we take the method of time sharing again.
\fi

\section{Method of types}\Label{S5}
The aim of this section is 
the derivation of simple consequences of generalized packing lemmas by \cite{Korner-Sgarro,Liu-Hughes} 
as the preparation of our proofs of Theorems \ref{T2}, \ref{T3}, and \ref{T3-J}.
Subsection \ref{S5-1} reviews the existing results for the method of types given in
\cite{CK},\cite[Section 4]{Ha1}, \cite[Chapter 6]{Group2}, \cite[Section IV]{Ha1}.
Then, the remaining two subsections give extensions of the above contents to 
the cases with superposition codes and MAC codes by using 
generalized packing lemmas by \cite{Korner-Sgarro,Liu-Hughes}.
These contents take an essential role for our universal construction of codes of both settings.

\subsection{Single terminal}\Label{S5-1}
In this subsection, we prepare the notations for the method of types
and reviews the existing result on this topic.
Before starting this discussion, we introduce one notation.
Given a distribution $P_U$ on ${\cal U}$
and a conditional distribution $P_{X|U}$ on ${\cal X}$ with condition in ${\cal U}$,
we define the joint distribution $P_{X|U}\cdot P_U$ on ${\cal U}\times {\cal X}$ as
\begin{align}
P_{X|U}\cdot P_U(x,u):= P_{X|U}(x|u) P_U(u).
\end{align}
For any subset $\Omega \subset {\cal X}$,
we define the uniform distribution $P_{\Unif,\Omega}$ on $ \Omega$ as
\begin{align*}
P_{\Unif,\Omega}(x):= 
\left\{
\begin{array}{cc}
\frac{1}{|\Omega|} & x \in \Omega \\
0 & x \notin \Omega.
\end{array}
\right.
\end{align*}
Also, we denote the cardinality, i.e., the number of elements, of the set ${\cal X}$ by $d_X$. 

The content of this subsection follows the content of 
\cite[Section 4]{Ha1}, \cite[Chapter 6]{Group2}, \cite[Section IV]{Ha1}.
The remaining subsections of this section are two types of extensions of this content
by using the results by \cite[Lemma]{Korner-Sgarro} and \cite[Section IV]{Ha1}.
The key point of this section is to provide a subset to satisfy the following property by using the method of types.
In information theory, we usually employ the random coding method.
However, to construct a deterministic universal code
unlike the existing papers \cite{BJS1,BJS2},
we need to avoid such random construction of the encoder
because a code whose decoding error probability is less than the average might depend on the true channel.
To resolve this problem, we employ the packing lemma of the method of types and its two types of generalizations.

First, we prepare notations for the method of types.
Given an element $\bx \in {\cal X}^n$ and an element $x\in {\cal X}$, 
we define the subset ${\cal N}(\bx,x) := \{ i| x_i=x\} $, 
the integer $n(\bx,x):=| {\cal N}(\bx,x) |$,
and the empirical distribution 
$T_Y(\bx):= (\frac{n_1}{n},\ldots, \frac{n_{d_X}}{n})$, which is called a type, where 
$n(\bx,x)$ is simplified to $n_x$.
The set of types is denoted by $T_n({\cal X})$.
For ${P} \in T_n({\cal X})$, a subset of ${\cal X}^n$ is defined by:
\begin{align*}
&T_{P}^n({\cal X})
:= \{\bx \in {\cal X}^n| 
T_Y(\bx)=P\}.
\end{align*}
We simplify $T_{P}^n({\cal X})$ to $T_P$ when we do not need to identify $n$ and ${\cal X}$.
Since $ \frac{e^{nH(P)}}{|T_P|} \le (1+n)^{d_X}$,
the uniform distribution $P_{\Unif,T_P}$ on the subset $T_P$ satisfies
\begin{align}
 P_{\Unif,T_P}(\bx)  \le |T_n({\cal X})|   P^{n}(\bx)
\le (1+n)^{d_X}  P^{n}(\bx)
\Label{eq3-X}.
\end{align}
As a generalization, 
for a type $P_{UX}\in T_n({\cal U}\times{\cal X})$
and $\bu \in T_{P_U}^n({\cal U})$,
we define
\begin{align*}
&T_{P_{X|U}}^n({\cal X}|\bu)
:= \{\bx \in {\cal X}^n| 
T_Y(\bx,\bu)=P_{UX}\}.
\end{align*}
The occurring probability of $\bx \in  T_P$ under the distribution $P^n$ is characterized as
\begin{align}
 P^{n}(\bx)=e^{\sum_{i=1}^d n_i \log P(i) } =e^{-nH(P)} 
\Label{eq3-2}.
\end{align}

\if0
Then, we define the constant $c_{n,P}$ by 
\begin{align}
\frac{1}{c_{n,P}} P_{\Unif,T_P}(\bx) =  P^{n}(\bx)
=e^{\sum_{i=1}^d n_i \log P(i) } =e^{-nH(P)} 
\Label{eq3-2}.
\end{align}
for $\bx \in T_{P}$.
So, the constant $c_{n,P}$ is bounded as
\begin{align}
c_{n,P} \le |T_n({\cal X})| \le (1+n)^{d_X}
\Label{eq3}.
\end{align}
\fi

Given another finite set ${\cal T}$, the sequence of types ${\bV}=({v}_1, \ldots, {v}_d)\in 
T_{n_1}({\cal T})\times \cdots \times T_{n_d}({\cal T})$ 
is called a conditional type for $\bx$ and can be regarded as a conditional distribution
when the type of $\bx$ is $(\frac{n_1}{n}, \ldots, \frac{n_d}{n})$
\cite{CK}.
We denote the set of conditional types for $\bx$ 
by $V(\bx,{\cal T})$, i.e.,
\begin{align}
V(\bx,{\cal T}):= \{\bV|
{\bV} \cdot (T_Y(\bx)) \in T^n({\cal X}\times {\cal T}) \}.
\end{align}
A conditional type $\bV \in V(\bx,{\cal X})$ is called identical when
$\bV(x|x')=\delta_{x,x'}$.
This concept is generalized to the case when the input system is composed of two system 
${\cal U}$ and ${\cal X}$.
For an element $(\bu,\bx) \in ({\cal U}\times {\cal X})^n$,
a conditional type $\bV \in V((\bu,\bx),{\cal X})$ is called identical when
$\bV(u,x|x')=\delta_{x,x'}$.
For any conditional type $\bV\in V(\bx,{\cal T})$,
we define the subset of ${\cal T}^n$:
\begin{align*}
 T_{{\bV}}(\bx) 
:= \left\{\bt \in {\cal T}^n\left|
T_Y(\bx, \bt)
= {\bV} \cdot (T_Y (\bx))
\right.\right\}.
\end{align*}
For a type $P \in T_n({\cal X})$ and an element $u_o \in {\cal U}$, 
we define a type $P\times u_o \in T_n({\cal X}\times {\cal U})$
as $P\times u_o(x,u):= P(x)\delta_{u_o,u} $.
For a conditional type $\bV \in V(x,{\cal T})$ and an element $u_o \in {\cal U}$, 
we define a conditional type $\bV\times u_o \in V(x,{\cal T}\times {\cal U})$
as $\bV\times u_o(t,u|x):= \bV(t|x)\delta_{u_o,u} $.

Then, 
the previous studies \cite[Section 4]{Ha1}, \cite[Chapter 6]{Group2}, \cite[Section IV]{Ha1}
stated a modification of 
Csisz\'{a}r-K\"{o}rner's packing lemma \cite[Lemma 10.1]{CK} as follows.

\if0
According to Csisz\'{a}r and K\"{o}rner\cite{CK},
the proposed code is constructed as follows.
The main point of this section is to establish
that 
Csisz\'{a}r-K\"{o}rner's packing lemma \cite[Lemma 10.1]{CK}
provides a code whose performance is essentially equivalent to
the average performance of random coding in the sense of (\ref{8L}).
\fi

\begin{proposition}\Label{l1}
For a positive number $R>0$, there exists a sufficiently large integer $N$ satisfying the following.
For any integer $n \ge N$ and any type ${P} \in T_n({\cal X})$ satisfying $R< H(P)$,
there exist $\mathsf{M}_n:=e^{n R-n^{3/4}}$ distinct elements 
\begin{align*}
\hat{\cal M}_n:=\{ \bx(1),\ldots, \bx(\mathsf{M}_n)\} \subset T_P
\end{align*}
such that 
the inequality
\begin{align}
|T_{{\bV}}(\bx) \cap (\hat{\cal M}_n\setminus \{\bx\})|
\le |T_{{\bV}}(\bx)| 
e^{-n(H(P)-R)} 
\Label{20}
\end{align}
holds for every $\bx \in \hat{\cal M}_n \subset T_{{P}}$ and 
every conditional type ${\bV} \in V(\bx,{\cal X})$.
\hfill $\square$\end{proposition}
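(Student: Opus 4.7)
The plan is to combine the probabilistic method with a pruning step: first sample more codewords than needed from $T_P$, then discard those whose type-neighborhoods are too crowded. Concretely, I would draw $N := 2\mathsf{M}_n$ codewords $\bm{X}(1),\ldots,\bm{X}(N)$ independently and uniformly from $T_P$. Because $R<H(P)$ and $|T_P| \ge (1+n)^{-d_X} e^{nH(P)}$, we have $N \ll |T_P|$, so with overwhelming probability the samples are distinct, and I condition on this event (or equivalently sample without replacement, with only a negligible correction).

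Next, I would fix an index $i$, a sequence $\bx \in T_P$, and a conditional type $\bV \in V(\bx,{\cal X})$. Since $T_{\bV}(\bx) \subset T_{P\bV}$, the intersection $T_{\bV}(\bx)\cap T_P$ is empty unless $P\bV = P$, so the desired inequality is trivial in the non-stationary case. In the stationary case, uniform sampling gives
\begin{align*}
\mathbb{E}\Big[\,|T_{\bV}(\bx) \cap \{\bm{X}(j)\}_{j\neq i}|\;\Big|\;\bm{X}(i)=\bx\,\Big] \;=\; (N-1)\,\frac{|T_{\bV}(\bx)|}{|T_P|},
\end{align*}
and Markov's inequality bounds the probability that this count exceeds the target $|T_{\bV}(\bx)|\,e^{-n(H(P)-R)}$ by $(N-1)\,e^{n(H(P)-R)}/|T_P| \le 2(1+n)^{d_X}\,e^{-n^{3/4}}$, where the definition $\mathsf{M}_n = e^{nR-n^{3/4}}$ is used.

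Then I would take a union bound over $\bV \in V(\bx,{\cal X})$. Since the set of conditional types has cardinality polynomial in $n$ (at most $(1+n)^{d_X^2}$), the probability that $\bm{X}(i)$ is \emph{bad} in the sense of violating the required bound for some $\bV$ is at most a polynomial in $n$ times $e^{-n^{3/4}}$. Consequently the expected number of bad indices is strictly less than $\mathsf{M}_n = N/2$ for all sufficiently large $n$; hence there exists a realization of $(\bm{X}(1),\ldots,\bm{X}(N))$ with fewer than $\mathsf{M}_n$ bad indices. Discarding those bad codewords and keeping any $\mathsf{M}_n$ of the remainder delivers the desired set $\hat{\cal M}_n$.

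The only real obstacle is arithmetic bookkeeping: one must verify that the subexponential slack $e^{-n^{3/4}}$ built into the code size strictly dominates the combined polynomial cost of enumerating conditional types and the polynomial defect $(1+n)^{d_X}$ in the lower bound on $|T_P|$. Since $n^{3/4}$ grows faster than any $O(\log n)$ term, the argument closes uniformly in the type $P$ for all $n$ larger than a threshold $N$ depending only on $d_X$ and $R$, which is precisely the dependence asserted in the statement.
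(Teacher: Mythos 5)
Your proof is correct in substance and uses the same random-coding-plus-expurgation mechanism that underlies the Csisz\'{a}r--K\"{o}rner packing lemma. Note, however, that the paper itself does not reprove this proposition from first principles: immediately after stating it, the paper remarks that the proposition is shown in \cite[Appendix C]{Conti}, which in turn applies Csisz\'{a}r and K\"{o}rner \cite[Lemma 10.1]{CK} with suitable substitutions. Your argument replaces that external black box by a direct derivation, but the probabilistic technique inside is essentially the same one that proves the cited packing lemma.

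One claim needs repair. Drawing $N = 2\mathsf{M}_n$ samples i.i.d.\ uniformly from $T_P$ does \emph{not} produce distinct sequences with overwhelming probability when $R > H(P)/2$: the birthday collision probability scales like $N^2/|T_P| \approx e^{n(2R-H(P))}$, which diverges in that regime even though $R<H(P)$. The fix is exactly what you mention in the parenthetical: commit to sampling without replacement, in which case distinctness is automatic and the expected-count computation changes only by an immaterial factor. Alternatively, observe that if $\bm{X}(j) = \bm{X}(i)$ for some $j\neq i$ then index $i$ is automatically ``bad'' for the identical conditional type once the neighbour count is taken over indices rather than over distinct sequences, because the target value $e^{-n(H(P)-R)}$ is strictly less than one; collisions are therefore swept out by the very pruning step you already employ, and the expected number of such collision-induced bad indices is still $o(\mathsf{M}_n)$ since $R < H(P)$. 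Either repair closes the gap; the remainder of the derivation --- the expectation computation, the lower bound $|T_P| \ge (1+n)^{-d_X}e^{nH(P)}$, the Markov bound, and the union bound over the polynomially many conditional types, all absorbed by the slack $e^{-n^{3/4}}$ --- is sound.
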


This proposition is shown in \cite[Appendix C]{Conti} by using Csisz\'{a}r and K\"{o}rner\cite[Lemma 10.1]{CK}.
This proposition was used to make an universal encoder for one-to-one channel coding
in the existing studies \cite[Section 4]{Ha1}, \cite[Chapter 6]{Group2}, \cite[Section IV]{Ha1}, in which
the choice of the universal encoder does not depend on the output alphabet nor the output quantum system
because the employed packing lemma treats the conditional types from the input alphabet to the input alphabet.
Using this proposition, the paper \cite{Conti} derives a useful proposition.

To state it, we focus on the permutation group $S_n$ on $\{1, \ldots, n\}$.
For any $\bx \in {\cal X}^n$, we define an invariant subgroup 
$S_{\bx}\subset S_n$, where $S_n$ is the permutation group with degree $n$:
\begin{align*}
S_{\bx}: = \{g \in S_n | g(\bx)=\bx \}.
\end{align*}
Then, we have the following proposition, which takes a central role to
reduce our evaluation of the decoding error probability of this deterministic encoder given by Proposition \ref{l1}
to the evaluation of the decoding error probability under the random coding \cite[Section 4]{Ha1}, \cite[Chapter 6]{Group2}, \cite[Section IV]{Ha1}.

\begin{proposition}[\protect{\cite[Eq.(31)]{Conti}}]\Label{PP7}
Assume that $\bx \in \hat{\cal M}_n$.
Any element $\bx'(\neq \bx) \in T_{P}^n({\cal X})$ satisfy
\begin{align}
&\sum_{g \in S_{\bx}}
\frac{1}{|S_{\bx}|}
P_{\Unif,\hat{\cal M}_n}\circ g (\bx')
\le  P^n (\bx') e^{n^{3/4} } 
.  \Label{8}
\end{align}
\end{proposition}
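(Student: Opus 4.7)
The plan is to unfold the left-hand side using orbit-stabilizer and then invoke the bound \eqref{20} from Proposition \ref{l1}. Unpacking the definition, $P_{\Unif,\hat{\cal M}_n}\circ g(\bx') = \frac{1}{|\hat{\cal M}_n|}\mathbf{1}[g^{-1}(\bx')\in \hat{\cal M}_n]$, and since $S_{\bx}$ is a group I may replace $g$ by $g^{-1}$ in the summation, giving
\begin{align}
\sum_{g \in S_{\bx}}
\frac{1}{|S_{\bx}|}
P_{\Unif,\hat{\cal M}_n}\circ g (\bx')
=
\frac{1}{|S_{\bx}|\,|\hat{\cal M}_n|}
\bigl|\{g\in S_{\bx}\mid g(\bx')\in \hat{\cal M}_n\}\bigr|.
\end{align}

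Next I would identify the orbit $O := \{g(\bx')\mid g\in S_{\bx}\}$ with a conditional type class. Since any $g\in S_n$ preserves the joint type of $(\bx,\bx')$, and the $g\in S_{\bx}$ additionally fix $\bx$, every $\bx''\in O$ satisfies $T_Y(\bx,\bx'')=T_Y(\bx,\bx')$; conversely, any $\bx''$ with this joint type is obtained from $\bx'$ by some permutation fixing $\bx$. Hence $O = T_{\bV}(\bx)$, where $\bV\in V(\bx,{\cal X})$ is the conditional type determined by $\bV\cdot T_Y(\bx)=T_Y(\bx,\bx')$. By orbit-stabilizer, for every $\bx''\in T_{\bV}(\bx)$ the preimage size $|\{g\in S_{\bx}\mid g(\bx')=\bx''\}|$ equals $|S_{\bx}|/|T_{\bV}(\bx)|$. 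Substituting,
\begin{align}
\sum_{g \in S_{\bx}}
\frac{1}{|S_{\bx}|}
P_{\Unif,\hat{\cal M}_n}\circ g (\bx')
= \frac{|T_{\bV}(\bx)\cap \hat{\cal M}_n|}{|\hat{\cal M}_n|\,|T_{\bV}(\bx)|}.
\end{align}

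Since $\bx'\neq \bx$ and both lie in $T_P$, the conditional type $\bV$ cannot be the identical one, so $\bx\notin T_{\bV}(\bx)$ and therefore $T_{\bV}(\bx)\cap \hat{\cal M}_n = T_{\bV}(\bx)\cap (\hat{\cal M}_n\setminus\{\bx\})$. Applying \eqref{20} from Proposition \ref{l1} bounds the numerator by $|T_{\bV}(\bx)|\,e^{-n(H(P)-R)}$, yielding
\begin{align}
\sum_{g \in S_{\bx}}
\frac{1}{|S_{\bx}|}
P_{\Unif,\hat{\cal M}_n}\circ g (\bx')
\le \frac{e^{-n(H(P)-R)}}{|\hat{\cal M}_n|}
= e^{-nH(P)+n^{3/4}},
\end{align}
using $|\hat{\cal M}_n|=e^{nR-n^{3/4}}$. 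Since $\bx'\in T_P$, equation \eqref{eq3-2} gives $P^n(\bx')=e^{-nH(P)}$, and the desired bound \eqref{8} follows.

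The only mildly delicate step is the orbit-stabilizer identification of $O$ with $T_{\bV}(\bx)$ and verifying that $\bx\notin T_{\bV}(\bx)$ so that the packing bound applies directly; once that is in place the rest is arithmetic combining \eqref{20} with the cardinality of $\hat{\cal M}_n$ and \eqref{eq3-2}.
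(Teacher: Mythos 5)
Your proof is correct and follows the same route the paper uses for this proposition (and for its displayed analogue, Lemma~\ref{XP7}): reduce the group average to $\frac{|T_{\bV}(\bx)\cap\hat{\cal M}_n|}{|\hat{\cal M}_n|\,|T_{\bV}(\bx)|}$ by exploiting that the orbit of $\bx'$ under $S_{\bx}$ is exactly $T_{\bV}(\bx)$, observe that $\bV$ is non-identical so $\bx\notin T_{\bV}(\bx)$, and then invoke \eqref{20} together with $|\hat{\cal M}_n|=e^{nR-n^{3/4}}$ and \eqref{eq3-2}. The only presentational difference is that you make the orbit-stabilizer count explicit, whereas the paper averages over $T_{\bV}(\bx)$ and appeals to the symmetry that the inner sum is constant across the orbit; these are the same argument.
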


\if0
\begin{proof}
We choose a conditional type $\bV \in V(\bx,{\cal X})$
such that 
$\bV$ is non-identical, i.e., $\bV(x|x')\neq \delta_{x,x'}$ and
$\bx' \in T_{{\bV}}(\bx)$.
Since any element of the group $S_{\bx}$ does not change the set 
$T_{{\bV}}(\bx)$, we have
\begin{align*}
&\sum_{\bx'' \in T_{{\bV}}(\bx}
\sum_{g \in S_{\bx}}
\frac{1}{|S_{\bx}|}
P_{\Unif,\hat{\cal M}_{n}}\circ g (\bx'') \\
=&
\sum_{(\bx'') \in T_{{\bV}}(\bx)}
P_{\Unif,\hat{\cal M}_{n}} (\bx'')
=
|T_{{\bV}}(\bx) \cap \hat{\cal M}_{n}|
\cdot \frac{1}{\mathsf{M}_{n}}.
\end{align*}
Using this relation, 
we have
$
\sum_{g \in S_{\bx}}
\frac{1}{|S_{\bx}|}
P_{\Unif,\hat{\cal M}_{n}}\circ g (\bx')
=
\frac{|T_{{\bV}}(\bx) \cap \hat{\cal M}_{n}|}{|T_{{\bV}}(\bx)|}
\cdot
\frac{1}{\mathsf{M}_{n}}$
because the probability 
$\sum_{g \in S_{\bx}}
\frac{1}{|S_{\bx}|}
P_{\Unif,\hat{\cal M}_{n}}\circ g (\bx'')$
does not depend on the element $(\bx'')$ when
$\bx'' \in T_{{\bV}}(\bx)  \subset T_{P}$.
Therefore,
\begin{align}
&\sum_{g \in S_{\bx}}
\frac{1}{|S_{\bx}|}
P_{\Unif,\hat{\cal M}_{n}}\circ g (\bx')
=
\frac{|T_{{\bV}}(\bx) \cap \hat{\cal M}_{n}|}{|T_{{\bV}}(\bx)|}
\cdot
\frac{1}{\mathsf{M}_{n}} \nonumber \\
\stackrel{(a)}{=}  &
\frac{|T_{{\bV}}(\bx) \cap (\hat{\cal M}_{n}\setminus \{\bx\})|}
{|T_{{\bV}}(\bu,\bx)|\mathsf{M}_{n}} \nonumber \\
\stackrel{(b)}{\le} &
 \frac{e^{-n(H(P) -R )}}{\mathsf{M}_{n}}
=
e^{-nH(P)} e^{ n^{3/4} } 
={P}^{n}(\bx) e^{n^{3/4} },
\end{align}
where each step can be shown as follows.
Step $(a)$ holds because 
the conditional type ${\bV}$ is not identical.
Step $(b)$ follows from \eqref{20}.
Hence, we obtain \eqref{8}.
\end{proof}
\fi
\subsection{Superpostion code}
This subsection extends the contents of the previous subsection
to the setting for superpostion codes.
K\"{o}rner and Sgarro \cite[Lemma]{Korner-Sgarro}
extended the packing lemma by Csisz\'{a}r and K\"{o}rner\cite{CK}
to the case with superpostion code.
In the same way as Proposition \ref{l1},
Lemma 1 of \cite{Korner-Sgarro} can be rewritten as follows when $\hat{V}$ is the identical conditional type.

\begin{proposition}\Label{lB-3}
For two positive numbers $R_U,R_V>0$, there exists a sufficiently large integer $N$ satisfying the following.
For any integer $n \ge N$ and any types 
${P}_{UX} \in T_n({\cal U}\times {\cal X})$
 satisfying $R_U< H(P_U)$ and $R_B< H(V|U)_{P_{UX}}$,
we define
$\mathsf{M}_{U,n}:=e^{n R_U-n^{3/4}}$ 
and $\mathsf{M}_{X,n}:=e^{n R_X-n^{3/4}}$.
There exist a subset $\hat{\cal M}_{U,n}$ with $\mathsf{M}_{U,n}$ distinct elements
and a subset $\hat{\cal M}_{X,n,j}$
 with $\mathsf{M}_{X,n}$ distinct elements 
for $j=1,\ldots, \mathsf{M}_{U,n}$ as
\begin{align*}
\hat{\cal M}_{U,n} &:=\{\bu(1),\ldots, \bu(\mathsf{M}_{U,n})\} \subset T_{P _U}^n({\cal U})\\
\hat{\cal M}_{X,n,j} &:=\{ \bx(j,1),\ldots, \bx(j,\mathsf{M}_{X,n})\} 
\end{align*}
such that 
$(\bu(j), \bx(j,k)) \in T_{P_{UX}}^n({\cal U}\times {\cal X})$ and
the inequalities
\begin{align}
& \Big|T_{{\bV}} (\bu(j),\bx(j,k)) \cap 
\Big(\bigcup_{j'\neq j}
\big(\{\bu(j')\}\times \hat{\cal M}_{X,n,j'}  \big) \Big) \Big| \nonumber \\
\le & |T_{{\bV}}(\bu(j),\bx(j,k))| 
e^{-n(H(P_{UX})-R_A-R_B)} 
\Label{Y3} 
\\
& |T_{{\bV}_X}(\bu(j),\bx(j,k)) \cap 
(\hat{\cal M}_{X,n,j}\setminus \{\bx(j,k)\})| \nonumber \\
\le & |T_{{\bV}_X}(\bu(j),\bx(j,k))| 
e^{-n(H(X|U)_{P_{UX}}-R_V)} 
\Label{Y2} \\
& |T_{{\bV}_U}(\bu(j)) \cap 
(\hat{\cal M}_{U,n}\setminus \{\bu(j) \})
| \nonumber \\
\le & |T_{{\bV}_U}(\bu(j),\bx(j,k)) | 
e^{-n(H(P_U)-R_U)} 
\Label{Y1} 
\end{align}
hold for any $j \in \{1, \ldots, \mathsf{M}_{U,n}\}$, $k \in \{1, \ldots, \mathsf{M}_{X,n}\}$,
and
any conditional types ${\bV} \in V((\bu(j),\bx(j,k)),{\cal U}\times {\cal X})$,
${\bV}_X \in V((\bu(j),\bx(j,k)),{\cal X})$, and
${\bV}_U \in V(\bu(j),{\cal U})$.
\hfill $\square$\end{proposition}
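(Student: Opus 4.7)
The plan is to prove Proposition \ref{lB-3} by a two-stage random construction with modest oversampling followed by expurgation, adapting the original argument of K\"{o}rner and Sgarro \cite{Korner-Sgarro} to the present notation. First I would draw $2\mathsf{M}_{U,n}$ tentative outer codewords $\tilde{\bu}(1),\ldots,\tilde{\bu}(2\mathsf{M}_{U,n})$ independently and uniformly from $T_{P_U}^n({\cal U})$, and for each $j$ draw $2\mathsf{M}_{X,n}$ tentative inner codewords $\tilde{\bx}(j,1),\ldots,\tilde{\bx}(j,2\mathsf{M}_{X,n})$ independently and uniformly from $T_{P_{X|U}}^n({\cal X}|\tilde{\bu}(j))$, which automatically yields $(\tilde{\bu}(j),\tilde{\bx}(j,k))\in T_{P_{UX}}^n({\cal U}\times{\cal X})$.

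For each of \eqref{Y1}, \eqref{Y2}, \eqref{Y3}, I would then compute the expectation of the size of the offending set. Using $|T_P^n|\ge (n+1)^{-d}e^{nH(P)}$ and the uniform-on-shell law, the probability that a single independent sample falls in a prescribed conditional shell $T_{\bV_U}(\bu)$, $T_{\bV_X}(\bu,\bx)$, or $T_{\bV}(\bu,\bx)$ is at most a polynomial factor $(n+1)^{d}$ times $|T_{\bV_U}(\bu)|e^{-nH(P_U)}$, $|T_{\bV_X}(\bu,\bx)|e^{-nH(X|U)_{P_{UX}}}$, or $|T_{\bV}(\bu,\bx)|e^{-nH(P_{UX})}$ respectively; for the joint bound the pair $(\tilde{\bu}(j'),\tilde{\bx}(j',k'))$ is jointly uniform on $T_{P_{UX}}^n$, since $|T_{P_{UX}}^n|=|T_{P_U}^n|\cdot|T_{P_{X|U}}^n({\cal X}|\bu)|$ for any $\bu\in T_{P_U}^n$. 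Summing over $2\mathsf{M}_{U,n}$, $2\mathsf{M}_{X,n}$, and $4\mathsf{M}_{U,n}\mathsf{M}_{X,n}$ samples respectively, and using $R_U<H(P_U)$, $R_X<H(X|U)_{P_{UX}}$ together with the chain rule $H(P_U)+H(X|U)_{P_{UX}}=H(P_{UX})$, each expected count is dominated by the target right-hand side of the corresponding inequality, with the polynomial prefactors and the constant $4$ absorbed by the subexponential slack $e^{n^{3/4}}$ in the definitions of $\mathsf{M}_{U,n}$ and $\mathsf{M}_{X,n}$ once $n$ is large enough.

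I would then expurgate. Since the number of conditional types is polynomial in $n$, a union bound over $\bV_U$, $\bV_X$, $\bV$ is harmless, and Markov's inequality guarantees that with positive probability at most half of the tentative outer codewords fail \eqref{Y1} for some $\bV_U$ and, for each retained $\bu(j)$, at most half of the tentative inner codewords fail \eqref{Y2} or \eqref{Y3} for some $\bV_X$ or $\bV$. Discarding those yields $\hat{\cal M}_{U,n}$ of size $\mathsf{M}_{U,n}$ and $\hat{\cal M}_{X,n,j}$ of size $\mathsf{M}_{X,n}$; because expurgation only shrinks the offending sets on the left of \eqref{Y1}--\eqref{Y3}, all three bounds survive for the retained codewords.

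The main obstacle is \eqref{Y3}, which is the genuinely new content beyond the classical single-level packing lemma of Proposition \ref{l1}: it couples the outer and inner codebooks across distinct outer indices $j\ne j'$. Its proof relies crucially on the independence of the inner codebooks for distinct $j$ and on the chain rule $H(P_U)+H(X|U)_{P_{UX}}=H(P_{UX})$, which converts the two separate rate constraints into the single joint gap $H(P_{UX})-R_U-R_X>0$ that produces the exponent in \eqref{Y3}.
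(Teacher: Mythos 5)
Your proposal and the paper take genuinely different routes to Proposition~\ref{lB-3}. The paper does not reprove the statement at all: it simply observes that Proposition~\ref{lB-3} is the special case of K\"{o}rner--Sgarro's Lemma~1 obtained by choosing $\delta = n^{-1/4}$ and instantiating $\hat{V}$ as the identical conditional type (with the alphabet roles $\mathcal{Y}\leftarrow\mathcal{U}\times\mathcal{X}$, $\mathcal{Z}\leftarrow\mathcal{X}$, $\mathcal{Y}\leftarrow\mathcal{U}$ for the three inequalities respectively), exactly parallel to how Proposition~\ref{l1} is obtained from Csisz\'{a}r--K\"{o}rner's Lemma~10.1. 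Your blind proposal instead reconstructs a self-contained proof of the lemma by random coding with constant-factor oversampling and expurgation, which is essentially the argument that underlies \cite[Lemma~1]{Korner-Sgarro} in the first place. The paper's route is shorter and leans on the literature; yours is independent of that reference and makes the mechanism visible. Both are valid.

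Two remarks on the details of your sketch. First, the expurgation accounting is compressed to the point of hiding the one nontrivial step: a per-codeword Markov bound gives only a \emph{subexponentially} small failure probability of order $(n+1)^{d}e^{-cn^{3/4}}$, while the number of codewords is \emph{exponential}; the argument goes through only because this subexponential factor dominates the polynomially many conditional types and the constant oversampling factor after a total-count (not per-codeword) Markov step, followed by a pigeonhole argument to salvage $\mathsf{M}_{U,n}$ outer indices each retaining $\mathsf{M}_{X,n}$ inner codewords. You should make this explicit rather than calling it ``at most half fail.'' Second, you correctly flag that \eqref{Y3} couples the inner codebooks across distinct outer indices, and you correctly observe that this coupling is benign for expurgation because removing codewords only shrinks the left-hand side; this one-line monotonicity observation is in fact what makes the two-stage expurgation consistent and is worth stating as a lemma. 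Your use of the factorisation $|T_{P_{UX}}^n| = |T_{P_U}^n|\,|T_{P_{X|U}}^n(\mathcal{X}|\bu)|$, the resulting joint uniformity of $(\tilde{\bu}(j'),\tilde{\bx}(j',k'))$ on $T_{P_{UX}}^n$, and the chain rule $H(P_U)+H(X|U)_{P_{UX}}=H(P_{UX})$ to convert the two rate gaps into the single gap $H(P_{UX})-R_U-R_X$ are all correct and are the right way to handle \eqref{Y3}.
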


Our universal encoder for classical-quantum superposition code
is given by the above construction, and has
a decoding error probability essentially equivalent to the average performance of random coding.
To derive Proposition \ref{lB-3}, we choose $\delta=n^{-\frac{1}{4}}$ in \cite[Lemma 1]{Korner-Sgarro}.
Eq. \eqref{Y3} of Proposition \ref{lB-3} follows from Eq. (1) of \cite[Lemma 1]{Korner-Sgarro} 
with substituting ${\cal U}\times {\cal X}$ and the identical conditional type into ${\cal Y}$ and 
$\hat{V}$, respectively.
Eq. \eqref{Y2} of Proposition \ref{lB-3} follows from Eq. (3) of \cite[Lemma 1]{Korner-Sgarro} 
with substituting ${\cal X}$ and $\hat{V}$ into ${\cal Z}$ and $\hat{V}$, respectively.
Eq. \eqref{Y1} of Proposition \ref{lB-3} follows from Eq. (2) of \cite[Lemma 1]{Korner-Sgarro} 
with substituting ${\cal U}$ and $\hat{V}$ into ${\cal Y}$ and $\hat{V}$, respectively.

\if0
Then, we define the constant $c_{n,P_{X|U}}$ by 
\begin{align}
\frac{1}{c_{n,P_{X|U}}} P_{\Unif,T_{P_{X|U}}}(\bx|\bu)
=  P_{X|U}^{n}(\bx|\bu)=e^{-nH(V|U)_{P_{UX}}} 
\Label{eq3-2L}.
\end{align}
for $\bx \in T_{P}$.
So, the constant $c_{n,P}$ is bounded as
\begin{align}
c_{n,P_{UX}} \le |T_n({\cal U}\times {\cal X})| 
\Label{eq3L}.
\end{align}
\fi

As a generalization of Proposition \ref{PP7}, we have the following lemma for
the set ${\cal M}_{UX,n}:=\bigcup_{j}
\big(\{\bu(j)\}\times \hat{\cal M}_{X,n,j}  \big) $, which will be used for 
our evaluation of the decoding error probability of our universal c-q superposition code.
\begin{lemma}\Label{XP7}
Assume that $(\bu,\bx) \in \hat{\cal M}_{UX,n}$.
Any element $(\bu',\bx')\in T_{P_{UX}}^n({\cal U}\times {\cal X})$ with
$\bu\neq \bu'$ satisfies
\begin{align}
&\sum_{g \in S_{\bu,\bx}}
\frac{1}{|S_{\bu,\bx}|}
P_{\Unif,\hat{\cal M}_{UX,n}}\circ g (\bu',\bx')
\le P_{UX}^n (\bu',\bx') e^{2 n^{3/4} }
  \Label{8K}.
\end{align}
Any element $(\bu,\bx')\in T_{P_{UX}}^n({\cal U}\times {\cal X})$ with
$\bx\neq \bx'$ satisfies
\begin{align}
&\sum_{g \in S_{\bu,\bx}}
\frac{\mathsf{M}_{U,n}}{|S_{\bu,\bx}|}
P_{\Unif,\hat{\cal M}_{UX,n}}\circ g (\bu,\bx')
\le  P_{X|U}^n(\bx'|\bu) e^{n^{3/4} }
  \Label{8I}.
\end{align}
\if0
Any element $(\bu',\bx')\in T_{P_{UX}}^n({\cal U}\times {\cal X})$ with
$\bu'\neq \bu$ satisfies
\begin{align}
&\sum_{g \in S_{\bu}}
\frac{1}{|S_{\bu}|}
P_{\Unif,\hat{\cal M}_{UX,n}}\circ g (\bu',\bx')
\le P_{UX}^n(\bu',\bx') e^{n^{3/4} }
  \Label{8O}.
\end{align}
\fi
\end{lemma}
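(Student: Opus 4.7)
The plan is to follow the template established by Proposition~\ref{PP7}, namely to exploit the transitivity of the action of $S_{\bu,\bx}$ on conditional type classes to reduce the permutation-averaged probability to a simple ratio, and then invoke Proposition~\ref{lB-3} to bound that ratio.

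First I would fix $(\bu, \bx) = (\bu(j), \bx(j,k)) \in \hat{\cal M}_{UX,n}$ and, for a target element $(\bu'',\bx'')$, pick the unique conditional type $\bV \in V((\bu,\bx), {\cal U}\times {\cal X})$ with $(\bu'',\bx'') \in T_{\bV}(\bu,\bx)$. Because every $g \in S_{\bu,\bx}$ fixes both $\bu$ and $\bx$ coordinatewise, $g$ permutes $T_{\bV}(\bu,\bx)$ transitively and preserves $\bu$, so
\begin{align*}
\sum_{g \in S_{\bu,\bx}} \frac{1}{|S_{\bu,\bx}|} P_{\Unif,\hat{\cal M}_{UX,n}}\circ g\,(\bu'',\bx'')
= \frac{\bigl|T_{\bV}(\bu,\bx) \cap \hat{\cal M}_{UX,n}\bigr|}{|T_{\bV}(\bu,\bx)|\cdot \mathsf{M}_{U,n}\mathsf{M}_{X,n}},
\end{align*}
since the averaged probability is constant across $T_{\bV}(\bu,\bx)$.

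For \eqref{8K}, the hypothesis $\bu \neq \bu'$ implies that any element of $T_{\bV}(\bu,\bx)$ lying in $\hat{\cal M}_{UX,n}$ must in fact lie in the ``off-diagonal'' union $\bigcup_{j'\neq j}(\{\bu(j')\}\times \hat{\cal M}_{X,n,j'})$. Applying \eqref{Y3} of Proposition~\ref{lB-3} bounds that intersection by $|T_{\bV}(\bu,\bx)|\, e^{-n(H(P_{UX})-R_U-R_X)}$, and substituting the definitions $\mathsf{M}_{U,n}=e^{nR_U-n^{3/4}}$, $\mathsf{M}_{X,n}=e^{nR_X-n^{3/4}}$ together with \eqref{eq3-2} (i.e., $P_{UX}^n(\bu',\bx')=e^{-nH(P_{UX})}$) yields the factor $e^{2n^{3/4}}$, matching \eqref{8K}.

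For \eqref{8I}, the hypothesis $\bx \neq \bx'$ forces the relevant conditional type $\bV_X \in V((\bu,\bx),{\cal X})$ to be non-identical, so the intersection $T_{\bV_X}(\bu,\bx)\cap\hat{\cal M}_{X,n,j}$ coincides with $T_{\bV_X}(\bu,\bx)\cap(\hat{\cal M}_{X,n,j}\setminus\{\bx\})$, and we can use \eqref{Y2}. Here only elements with first coordinate $\bu$ contribute, so the numerator reduces to the within-block count and the $\mathsf{M}_{U,n}$ prefactor cancels against the total size $\mathsf{M}_{U,n}\mathsf{M}_{X,n}$ in the denominator. Dividing by $\mathsf{M}_{X,n}$ and rewriting $e^{-nH(X|U)_{P_{UX}}}=P_{X|U}^n(\bx'|\bu)$ yields the claimed $e^{n^{3/4}}$ bound. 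There is no serious obstacle: the only subtlety is to realize that for \eqref{8I} one should view $S_{\bu,\bx}$ as acting on the $\bx$-coordinate alone (since $\bu$ is fixed by every $g$), which gives the correct combinatorial count restricted to the single block $\{\bu\}\times\hat{\cal M}_{X,n,j}$ rather than the full union.
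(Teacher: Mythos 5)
Your proposal is correct and follows essentially the same argument as the paper: average over the isotropy group to reduce the permutation-averaged probability to a ratio of type-class counts, then apply \eqref{Y3} and \eqref{Y2} of Proposition~\ref{lB-3} and the identity \eqref{eq3-2}. You are in fact slightly more careful than the paper in observing that the constraint $\bu\neq\bu'$ forces the entire conditional type class to avoid block $j$, so that the intersection with $\hat{\cal M}_{UX,n}$ equals the intersection with the off-diagonal union $\bigcup_{j'\neq j}\bigl(\{\bu(j')\}\times\hat{\cal M}_{X,n,j'}\bigr)$ that actually appears in \eqref{Y3}, a point the paper leaves implicit.
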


\begin{proof}
\if0
For this proof, we define the set
$T_{P_{X|U}}^n({\cal X}|\bu):=
\{\bx| (\bu,\bx) \in T_{P_{UX}}^n({\cal U}\times{\cal X})\}$ and the conditional distribution
\begin{align}
P_{T_{P_{X|U}}}(\bx|\bu):=
\left\{
\begin{array}{cc}
\frac{1}{|T_{P_{X|U}}^n({\cal X}|\bu)|} & \bx \in T_{P_{X|U}}^n({\cal X}|\bu) \\
0 & \bx \notin T_{P_{X|U}}^n({\cal X}|\bu).
\end{array}
\right.
\end{align}
\fi
First, we show \eqref{8K}.
We choose a conditional type $\bV \in V((\bu,\bx),{\cal U}\times {\cal X})$
such that 
$\bV$ is non-identical, i.e., $\bV(u,x|u',x')\neq \delta_{(u,x),(u'x')}$ and
$(\bu',\bx') \in T_{{\bV}}(\bu,\bx)$.
Since any element of the group $S_{\bu,\bx}$ does not change the set 
$T_{{\bV}}(\bu,\bx)$, we have
\begin{align*}
&\sum_{(\bu'',\bx'') \in T_{{\bV}}(\bu,\bx)}
\sum_{g \in S_{\bu,\bx}}
\frac{1}{|S_{\bu,\bx}|}
P_{\Unif,\hat{\cal M}_{UX,n}}\circ g (\bu'',\bx'') \\
=&
\sum_{(\bu'',\bx'') \in T_{{\bV}}(\bu,\bx)}
P_{\Unif,\hat{\cal M}_{UX,n}} (\bu'',\bx'')
=
|T_{{\bV}}(\bu,\bx) \cap \hat{\cal M}_{UX,n}|
\cdot \frac{1}{\mathsf{M}_{U,n}\mathsf{M}_{X,n}}.
\end{align*}
Using this relation, 
we have
$
\sum_{g \in S_{\bu,\bx}}
\frac{1}{|S_{\bu,\bx}|}
P_{\Unif,\hat{\cal M}_{UX,n}}\circ g (\bu',\bx')
=
\frac{|T_{{\bV}}(\bu,\bx) \cap \hat{\cal M}_{UX,n}|}{|T_{{\bV}}(\bu,\bx)|}
\cdot
\frac{1}{\mathsf{M}_{UX,n}}$
because the probability 
$\sum_{g \in S_{\bu,\bx}}
\frac{1}{|S_{\bu,\bx}|}
P_{\Unif,\hat{\cal M}_{UX,n}}\circ g (\bu'',\bx'')$
does not depend on the element $(\bu'',\bx'')$ when
$(\bu'',\bx'') \in T_{{\bV}}(\bu,\bx)  \subset T_{P_{UX}}$.
Therefore,
\begin{align}
&\sum_{g \in S_{\bu,\bx}}
\frac{1}{|S_{\bu,\bx}|}
P_{\Unif,\hat{\cal M}_{UX,n}}\circ g (\bu',\bx')
=
\frac{|T_{{\bV}}(\bu,\bx) \cap \hat{\cal M}_{UX,n}|}{|T_{{\bV}}(\bu,\bx)|}
\cdot
\frac{1}{\mathsf{M}_{U,n}\mathsf{M}_{X,n}} \nonumber \\
\stackrel{(a)}{=}  &
\frac{|T_{{\bV}}(\bu,\bx) \cap (\hat{\cal M}_{UX,n}\setminus \{\bu,\bx\})|}
{|T_{{\bV}}(\bu,\bx)|\mathsf{M}_{U,n}\mathsf{M}_{X,n}} \nonumber \\
\stackrel{(b)}{\le} &
 \frac{e^{-n(H(P_{UX}) -R_U-R_X )}}{\mathsf{M}_{U,n}\mathsf{M}_{X,n}}
=
e^{-nH(P_{UX})} e^{2 n^{3/4} } 
={P_{UX}}^{n}(\bu,\bx) e^{2n^{3/4} },
  \Label{8K-pf}
\end{align}
where each step can be shown as follows.
Step $(a)$ holds because 
the conditional type ${\bV}$ is not identical.
Step $(b)$ follows from \eqref{Y3}.
Hence, we obtain \eqref{8K}.

Next, we show \eqref{8I}. Assume that $\bu=\bu(j)$.
We choose a conditional type $\bV \in V((\bu,\bx),{\cal X})$
such that $\bx \notin T_{{\bV}}(\bu,\bx)$ and 
$\bx' \in T_{{\bV}}(\bu,\bx)$.
Since any element of the group $S_{\bu,\bx}$ does not change the set 
$T_{{\bV}}(\bu,\bx)$, we have
\begin{align*}
&\sum_{\bx'' \in T_{{\bV}}(\bu,\bx)}
\sum_{g \in S_{\bu,\bx}}
\frac{1}{|S_{\bu,\bx}|}
P_{\Unif,\hat{\cal M}_{UX,n}}\circ g (\bu,\bx'') \\
=&
\sum_{\bx'' \in T_{{\bV}}(\bu,\bx)}
P_{\Unif,\hat{\cal M}_{UX,n}} (\bu,\bx'')
=
|T_{{\bV}}(\bu,\bx) \cap 
\hat{\cal M}_{X,n,j}  
|
\cdot \frac{1}{\mathsf{M}_{U,n}\mathsf{M}_{X,n}}.
\end{align*}
Using this relation,
we have
$
\sum_{g \in S_{\bu,\bx}}
\frac{1}{|S_{\bu,\bx}|}
P_{\Unif,\hat{\cal M}_{UX,n}}\circ g (\bu,\bx')
=
\frac{|T_{{\bV}}(\bu,\bx) \cap 
\hat{\cal M}_{X,n,j} 
|}{|T_{{\bV}}(\bu,\bx)|}
\cdot
\frac{1}{\mathsf{M}_{UX,n}}$
because the probability 
$\sum_{g \in S_{\bu,\bx}}
\frac{1}{|S_{\bu,\bx}|}
P_{\Unif,\hat{\cal M}_{UX,n}}\circ g (\bu,\bx'')$
does not depend on the element $\bx''$ when
$\bx'' \in T_{{\bV}}(\bu,\bx)  \subset T_{P_{X}}$.
Therefore,
\begin{align}
&\sum_{g \in S_{\bu,\bx}}
\frac{1}{|S_{\bu,\bx}|}
P_{\Unif,\hat{\cal M}_{UX,n}}\circ g (\bu,\bx')
=
\frac{|T_{{\bV}}(\bu,\bx) \cap 
\hat{\cal M}_{X,n,j} 
|}{|T_{{\bV}}(\bu,\bx)|}
\cdot
\frac{1}{\mathsf{M}_{U,n}\mathsf{M}_{X,n}} \nonumber \\
\stackrel{(a)}{=}  &
\frac{|T_{{\bV}}(\bu,\bx) \cap 
(\hat{\cal M}_{X,n,j}  \setminus \{\bx\}) 
|}
{|T_{{\bV}}(\bu,\bx)|\mathsf{M}_{U,n}\mathsf{M}_{X,n}} \nonumber \\
\stackrel{(b)}{\le} &
\frac{e^{-n(H(X|U)_{P_{UX}}-R_V)} }{\mathsf{M}_{U,n}\mathsf{M}_{X,n}}
=
\frac{e^{-n H(X|U)_{P_{UX}} } e^{ n^{3/4} } }{\mathsf{M}_{U,n}}
=
\frac{P_{X|U}^{n}(\bx|\bu) e^{n^{3/4} }}{\mathsf{M}_{U,n}}
,  \Label{8I-pf}
\end{align}
where each step can be shown as follows.
Step $(a)$ holds because of $\bx \notin T_{{\bV}}(\bu,\bx)$.
Step $(b)$ follows from \eqref{Y2}.
Hence, we obtain \eqref{8I}.
\end{proof}

\if0
\subsection{Simple MAC code}
The aim of this subsection is an extension of the results in \cite[Section IV]{Ha1} to the setting for 
MAC codes.
Liu and Hughes \cite[Lemma 1]{Liu-Hughes}
extended the packing lemma by Csisz\'{a}r and K\"{o}rner\cite{CK}
to the case with two terminals ${\cal A}$ and ${\cal B}$.
In the same way as Proposition \ref{l1},
Lemma 1 of \cite{Liu-Hughes} can be rewritten as follows when $\hat{V}$ is the identical conditional type
and ${\cal U}$ is singleton.

\begin{proposition}\Label{lB-2}
For two positive numbers $R_A,R_B>0$, there exists a sufficiently large integer $N$ satisfying the following.
For any integer $n \ge N$ and any types 
${P}_A \in T_n({\cal A})$
and 
${P}_B \in T_n({\cal B})$
 satisfying $R_A< H(P_A)$ and $R_B< H(P_B)$,
there exist 
$\mathsf{M}_{A,n}:=e^{nR_A-n^{3/4}}$ distinct elements in ${\cal A}^n$
and
$\mathsf{M}_{B,n}:=e^{n R_B-n^{3/4}}$ distinct elements in ${\cal B}^n$
as
\begin{align*}
\hat{\cal M}_{A,n} &:=\{ \ba(1),\ldots, \ba({\mathsf{M}_{A,n}})\} \subset T_{P _A}^n({\cal A})\\
\hat{\cal M}_{B,n} &:=\{ \bb(1),\ldots, \bb({\mathsf{M}_{B,n}})\} \subset T_{P_B}^n({\cal B})
\end{align*}
such that 
the inequalities
\begin{align}
& |T_{{\bV}}(\ba,\bb) \cap 
((\hat{\cal M}_{A,n}\setminus \{\ba\}) \times 
(\hat{\cal M}_{B,n}\setminus \{\bb\})
)| \nonumber \\
\le & |T_{{\bV}}(\ba,\bb)| 
e^{-n(H(P_A)+H(P_B)-R_A-R_B)} 
\Label{X3} 
\\
& |T_{{\bV}_B}(\ba,\bb) \cap 
(\hat{\cal M}_{B,n}\setminus \{\bb\})
| \nonumber \\
\le & |T_{{\bV}_B}(\ba,\bb)| 
e^{-n(H(P_B)-R_B)} 
\Label{X2} 
\end{align}
hold for any elements
$\ba \in \hat{\cal M}_{A,n} \subset T_{{P_A}}^n({\cal A})$, 
$\bb \in \hat{\cal M}_{B,n} \subset T_{{P_B}}^n({\cal B})$ 
and 
any conditional types ${\bV} \in V((\ba,\bb),{\cal A}\times {\cal B})$
and ${\bV}_B \in V((\ba,\bb),{\cal B})$.
\hfill $\square$\end{proposition}

Our universal encoder classical-quantum MAC code
is given by the above construction, and has
a decoding error probability essentially equivalent to the average performance of random coding.
To derive Proposition \ref{lB-2}, we set 
$\delta$, ${\cal X}$, ${\cal Y}$, ${\cal U}$, 
to be 
$n^{-\frac{1}{4}}$, ${\cal A}$, ${\cal B}$, a singleton set
 in \cite[Lemma 1]{Liu-Hughes}, respectively.
Then, 
Eq. \eqref{X3} of Proposition \ref{lB-2} follows from Eqs. (9) of \cite[Lemma 1]{Liu-Hughes}, respectively.
with substituting ${\cal A}\times {\cal B}$ and the identical conditional type into 
${\cal Z}$ and 
$\hat{V}$, respectively.
Eq. \eqref{X2} of Proposition \ref{lB-2} follows from Eq. (10) of \cite[Lemma 1]{Liu-Hughes} 
with substituting ${\cal B}$ and the identical conditional type into 
${\cal Z}$ and 
$\hat{V}$, respectively.

As another generalization of Proposition \ref{PP7}, we have the following lemma for 
$\hat{\cal M}_{AB,n}:=\hat{\cal M}_{A,n}\times \hat{\cal M}_{B,n}$, which will be used for 
our evaluation of the decoding error probability of our universal c-q MAC code.

\begin{lemma}\Label{YP7}
Assume that $\ba \in \hat{\cal M}_{A,n}$ and $\bb \in \hat{\cal M}_{B,n}$.
Any element $(\ba',\bb')\in T_{P_{AB}}^n({\cal A}\times {\cal B})$ with
$\ba\neq \ba'$ and $\bb\neq \bb'$
 satisfies
\begin{align}
&\sum_{g \in S_{\ba,\bb}}
\frac{1}{|S_{\ba,\bb}|}
P_{\Unif,\hat{\cal M}_{AB,n}}\circ g (\ba',\bb')
\le P_{AB}^n (\ba',\bb') e^{2 n^{3/4} }.
  \Label{8Y}
\end{align}
Any element $(\ba,\bb')\in T_{P_{AB}}^n({\cal A}\times {\cal B})$ with
$\bb\neq \bb'$ satisfies
\begin{align}
&\sum_{g \in S_{\ba,\bb}}
\frac{\mathsf{M}_{A,n}}{|S_{\ba,\bb}|}
P_{\Unif,\hat{\cal M}_{AB,n}}\circ g (\ba,\bb')
\le 
P_{B}^n (\bb') e^{n^{3/4} }
  \Label{8L}.
\end{align}
\end{lemma}

\begin{proof}
We can show \eqref{8Y} in the same way as \eqref{8K} by replacing the role of \eqref{Y3}
by the role of \eqref{X3}.
Also, we can show \eqref{8L} in the same way as \eqref{8I} by replacing the role of \eqref{Y2}
by the role of \eqref{X2}
because $P_{B|A}^n(\bb|\ba)=P_{B}^n(\bb)$.
\end{proof}
\fi

\subsection{MAC code}
The aim of this subsection is an extension of the results in \cite[Section IV]{Ha1} to the setting for 
MAC codes.
Liu and Hughes \cite[Lemma 1]{Liu-Hughes}
extended the packing lemma by Csisz\'{a}r and K\"{o}rner\cite{CK}
to the case with two terminals ${\cal A}$ and ${\cal B}$.
In the same way as Proposition \ref{l1},
Lemma 1 of \cite{Liu-Hughes} can be rewritten as follows when $\hat{V}$ is the identical conditional type
and ${\cal U}$ is ${\cal T}$.

\begin{proposition}\Label{lB-2W}
For two positive numbers $R_A,R_B>0$, 
there exists a sufficiently large integer $N$ satisfying the following.
We chose an integer $n \ge N$, a joint type
${P}_{A-T-B} \in T_n({\cal A}\times {\cal B}\times {\cal T})$
 satisfying the Markov condition $A-T-B$,
 $R_A< H(A|T)_{{P}_{A-T-B}}$ and $R_B< H(B|T)_{{P}_{A-T-B}}$,
 and $\bu \in T_{P_T}^n({\cal T})$,
there exist 
$\mathsf{M}_{A,n}:=e^{nR_A-n^{3/4}}$ distinct elements in ${\cal A}^n$
and
$\mathsf{M}_{B,n}:=e^{n R_B-n^{3/4}}$ distinct elements in ${\cal B}^n$
as
\begin{align*}
\hat{\cal M}_{A,n} &:=\{ \ba(1),\ldots, \ba({\mathsf{M}_{A,n}})\} \subset T_{P _{A|T}}^n({\cal A}|\bt)\\
\hat{\cal M}_{B,n} &:=\{ \bb(1),\ldots, \bb({\mathsf{M}_{B,n}})\} \subset T_{P_{B|T}}^n({\cal B}|\bt)
\end{align*}
such that 
the inequalities
\begin{align}
& |T_{{\bV}}(\ba,\bb,\bt) \cap 
((\hat{\cal M}_{A,n}\setminus \{\ba\}) \times 
(\hat{\cal M}_{B,n}\setminus \{\bb\})
)| \nonumber \\
\le & |T_{{\bV}}(\ba,\bb,\bt)| 
e^{-n(H(AB|T)_{{P}_{A-T-B}}-R_A-R_B)} 
\Label{X3W} 
\\
& |T_{{\bV}_B}(\ba,\bb,\bt) \cap 
(\hat{\cal M}_{B,n}\setminus \{\bb\})
| \nonumber \\
\le & |T_{{\bV}_B}(\ba,\bb,\bt)| 
e^{-n(H(B|T)_{{P}_{A-T-B}}-R_B)} 
\Label{X2W} \\
& |T_{{\bV}_A}(\ba,\bb,\bt) \cap  (\hat{\cal M}_{A,n}\setminus \{\ba\}) | \nonumber \\
\le & |T_{{\bV}_A}(\ba,\bb,\bt)| 
e^{-n(H(A|T)_{{P}_{A-T-B}}-R_A)} 
\Label{X2WH} 
\end{align}
hold for any elements
$\ba \in \hat{\cal M}_{A,n} \subset T_{P _{A|T}}^n({\cal A}|\bt)$, 
$\bb \in \hat{\cal M}_{B,n} \subset T_{P _{B|T}}^n({\cal B}|\bt)$ 
and 
any conditional types ${\bV} \in V((\ba,\bb,\bt),{\cal A}\times {\cal B})$,
${\bV}_B \in V((\ba,\bb,\bt),{\cal B})$,
and ${\bV}_A \in V((\ba,\bb,\bt),{\cal A})$.
\hfill $\square$\end{proposition}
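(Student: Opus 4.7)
\bigskip
\noindent\textbf{Proof proposal for Proposition \ref{lB-2W}.}
The plan is to deduce Proposition \ref{lB-2W} as a direct specialization of the generalized packing lemma of Liu and Hughes \cite[Lemma 1]{Liu-Hughes}, in exactly the same spirit as the earlier reduction that yielded Proposition \ref{lB-2}. Concretely, I would set $\delta=n^{-1/4}$ and make the substitutions ${\cal X}\mapsto {\cal A}$, ${\cal Y}\mapsto {\cal B}$, ${\cal U}\mapsto {\cal T}$ in their statement, and take the common cloud-centre sequence to be our fixed $\bt \in T_{P_T}^n({\cal T})$. Since the joint type $P_{A-T-B}$ satisfies the Markov condition $A-T-B$, the conditional distribution $P_{AB|T}$ factorises as $P_{A|T}\cdot P_{B|T}$, so that
\begin{align*}
H(AB|T)_{P_{A-T-B}}=H(A|T)_{P_{A-T-B}}+H(B|T)_{P_{A-T-B}}.
\end{align*}
In particular the hypotheses $R_A<H(A|T)_{P_{A-T-B}}$ and $R_B<H(B|T)_{P_{A-T-B}}$ match the rate conditions required by \cite[Lemma 1]{Liu-Hughes} after the substitution, and the codeword cardinalities $\mathsf{M}_{A,n}=e^{nR_A-n^{3/4}}$, $\mathsf{M}_{B,n}=e^{nR_B-n^{3/4}}$ are precisely what one obtains from that lemma with the displayed choice of $\delta$.

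The three claimed inequalities correspond one-to-one to the three conclusions of \cite[Lemma 1]{Liu-Hughes} when the auxiliary conditional type $\hat V$ is taken to be the identical one on the appropriate alphabet. Inequality \eqref{X3W}, which bounds the number of off-diagonal pairs in $(\hat{\cal M}_{A,n}\setminus\{\ba\})\times(\hat{\cal M}_{B,n}\setminus\{\bb\})$ falling into a prescribed conditional-type shell relative to $\bt$, is the image of Eq.~(9) of \cite{Liu-Hughes} under the above substitution. The marginal bound \eqref{X2W} is the image of Eq.~(10) of \cite{Liu-Hughes} with ${\cal Z}\mapsto {\cal B}$, and the symmetric marginal bound \eqref{X2WH} is obtained from the same equation with ${\cal Z}\mapsto {\cal A}$ (equivalently, by applying \cite[Lemma 1]{Liu-Hughes} with the roles of ${\cal X}$ and ${\cal Y}$ swapped in the construction). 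The arithmetic exponents $e^{-n(H(AB|T)_{P_{A-T-B}}-R_A-R_B)}$, $e^{-n(H(B|T)_{P_{A-T-B}}-R_B)}$ and $e^{-n(H(A|T)_{P_{A-T-B}}-R_A)}$ then emerge from the $\delta=n^{-1/4}$ choice, matching exactly what is stated.

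The only point demanding care, and what I expect to be the main obstacle, is ensuring that the codewords produced by the Liu--Hughes construction actually lie inside the conditional type classes $T_{P_{A|T}}^n({\cal A}|\bt)$ and $T_{P_{B|T}}^n({\cal B}|\bt)$, and that the resulting joint conditional type of $(\ba,\bb)$ relative to $\bt$ is consistent with the Markov structure of $P_{A-T-B}$. This is mainly bookkeeping: once $\bt$ is fixed and $\ba$, $\bb$ are drawn from $T_{P_{A|T}}^n({\cal A}|\bt)$ and $T_{P_{B|T}}^n({\cal B}|\bt)$ respectively, the codewords are independently constructed given $\bt$, which at the type level means that the joint conditional type factorises as $P_{A|T}\cdot P_{B|T}$; this is precisely the Markov condition $A-T-B$ already built into the hypothesis on $P_{A-T-B}$, so no consistency is lost. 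After verifying this compatibility, the three bounds are read off directly from \cite[Lemma 1]{Liu-Hughes} with no additional estimation needed, completing the proof.
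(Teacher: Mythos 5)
Your proposal is essentially the paper's argument: both derive Proposition~\ref{lB-2W} by specialising Liu--Hughes's packing lemma \cite[Lemma 1]{Liu-Hughes} with $\delta=n^{-1/4}$, the substitutions ${\cal U}\mapsto{\cal T}$, ${\cal X}\mapsto{\cal A}$, ${\cal Y}\mapsto{\cal B}$, ${\cal Z}\mapsto{\cal A}\times{\cal B}$, and $\hat V$ taken to be the identical conditional type, and your observation that the Markov chain $A-T-B$ forces $H(AB|T)_{P_{A-T-B}}=H(A|T)_{P_{A-T-B}}+H(B|T)_{P_{A-T-B}}$ is the right check that the stated rate hypotheses match the lemma's requirements. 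One bookkeeping correction: the paper fixes ${\cal Z}\mapsto{\cal A}\times{\cal B}$ once throughout and reads \eqref{X3W}, \eqref{X2W}, \eqref{X2WH} off from Eqs.~(11), (10), (9) of \cite[Lemma 1]{Liu-Hughes} respectively, obtaining \eqref{X2W} by specialising the conditional type in Eq.~(10) to $V_B\times a_o$ for a fixed $a_o\in{\cal A}$ (so the ${\cal A}$-component is inert) rather than re-invoking the lemma with ${\cal Z}\mapsto{\cal B}$, and obtaining \eqref{X2WH} directly from Eq.~(9) on the first terminal's codebook --- not from a ``role swap,'' which would re-run the construction and change the realised codebooks.
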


Our universal encoder classical-quantum MAC code
is given by the above construction, and has
a decoding error probability essentially equivalent to the average performance of random coding.
Proposition \ref{lB-2W} is a special case of \cite[Lemma 1]{Liu-Hughes}
by setting 
$\delta$, ${\cal U}$ ${\cal X}$, ${\cal Y}$, and ${\cal Z}$
to be $n^{-\frac{1}{4}}$, ${\cal T}$, ${\cal A}$, ${\cal B}$, and ${\cal A}\times {\cal B}$, 
respectively.
Eq. \eqref{X3W} of Proposition \ref{lB-2W} follows from Eqs. (11) of \cite[Lemma 1]{Liu-Hughes}
with substituting the identical conditional type into 
$\hat{V}$.
To consider Eq. \eqref{X2W} of Proposition \ref{lB-2W} we choose 
$a_o\in {\cal A}$, and denote the identical conditional type in $V((\ba,\bb,\bt),{\cal B})$
by $\hat{V}_B$.
Eq. \eqref{X2W} of Proposition \ref{lB-2W} follows from Eq. (10) of \cite[Lemma 1]{Liu-Hughes} 
by setting 
$V$ and $\hat{V}$ to be 
$V_B\times a_o$ and $\hat{V}_B\times a_o$, respectively.
Eq. \eqref{X2WH} of Proposition \ref{lB-2W} follows from Eq. (9) of \cite[Lemma 1]{Liu-Hughes}. 

As another generalization of Proposition \ref{PP7}, we have the following lemma for 
$\hat{\cal M}_{AB,n}:=\hat{\cal M}_{A,n}\times \hat{\cal M}_{B,n}$, which will be used for 
our evaluation of the decoding error probability of our universal c-q MAC code.

\begin{lemma}\Label{YP7W}
Assume that $\ba \in \hat{\cal M}_{A,n}$ and $\bb \in \hat{\cal M}_{B,n}$.
Any element $(\ba',\bb')\in T_{P_{AB|U}}^n({\cal A}\times {\cal B}|\bt)$ with
$\ba\neq \ba'$ and $\bb\neq \bb'$
 satisfies
\begin{align}
&\sum_{g \in S_{\ba,\bb}}
\frac{1}{|S_{\ba,\bb}|}
P_{\Unif,\hat{\cal M}_{AB,n}}\circ g (\ba',\bb')
\le P_{AB|T}^n (\ba',\bb'|\bt) e^{2 n^{3/4} }.
  \Label{8YW}
\end{align}
Any element $(\ba,\bb')\in T_{P_{AB|T}}^n({\cal A}\times {\cal B}|\bt)$ with
$\bb\neq \bb'$ satisfies
\begin{align}
&\sum_{g \in S_{\ba,\bb}}
\frac{\mathsf{M}_{A,n}}{|S_{\ba,\bb}|}
P_{\Unif,\hat{\cal M}_{AB,n}}\circ g (\ba,\bb')
\le 
P_{B|T}^n (\bb'|\bt) e^{n^{3/4} }
  \Label{8LW}.
\end{align}
Any element $(\ba',\bb)\in T_{P_{AB|T}}^n({\cal A}\times {\cal B}|\bt)$ with
$\ba\neq \ba'$ satisfies
\begin{align}
&\sum_{g \in S_{\ba,\bb}}
\frac{\mathsf{M}_{B,n}}{|S_{\ba,\bb}|}
P_{\Unif,\hat{\cal M}_{AB,n}}\circ g (\ba',\bb)
\le 
P_{A|T}^n (\ba'|\bt) e^{n^{3/4} }
  \Label{8LWH}.
\end{align}
\end{lemma}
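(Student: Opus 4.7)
The plan is to mirror the three-step proof template of Lemma \ref{XP7}, substituting Proposition \ref{lB-2W} for Proposition \ref{lB-3}. For each of \eqref{8YW}, \eqref{8LW}, and \eqref{8LWH}, the argument proceeds as follows: (i) choose an appropriate non-identical conditional type $\bV$ so that the target element lies in $T_{\bV}(\ba,\bb,\bt)$ while the reference element $(\ba,\bb)$ (or its relevant coordinate) does not; (ii) use invariance of $T_{\bV}(\ba,\bb,\bt)$ under the permutation group to convert the orbit-averaged point mass into a ratio of cardinalities; (iii) bound the intersection cardinality via the corresponding clause of Proposition \ref{lB-2W}, then simplify using $\mathsf{M}_{A,n}\mathsf{M}_{B,n}=e^{n(R_A+R_B)-2n^{3/4}}$ together with the type-probability identity $P_{AB|T}^n(\ba',\bb'|\bt)=e^{-nH(AB|T)_{P_{A-T-B}}}$ and its $A$- and $B$-marginal analogues.

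Concretely, for \eqref{8YW} I pick $\bV\in V((\ba,\bb,\bt),{\cal A}\times{\cal B})$ non-identical and containing $(\ba',\bb')$; then the orbit sum reduces to
\[
\frac{|T_{\bV}(\ba,\bb,\bt)\cap(\hat{\cal M}_{AB,n}\setminus\{(\ba,\bb)\})|}{|T_{\bV}(\ba,\bb,\bt)|\,\mathsf{M}_{A,n}\mathsf{M}_{B,n}},
\]
which is bounded via \eqref{X3W} to yield $P_{AB|T}^n(\ba',\bb'|\bt)\,e^{2n^{3/4}}$. For \eqref{8LW} I pick $\bV\in V((\ba,\bb,\bt),{\cal B})$ with $\bb\notin T_{\bV}(\ba,\bb,\bt)$ and $\bb'\in T_{\bV}(\ba,\bb,\bt)$; the intersection now reduces to $\{\ba\}\times(\hat{\cal M}_{B,n}\setminus\{\bb\})$, so only $\mathsf{M}_{B,n}$ survives in the denominator and exactly cancels the $\mathsf{M}_{A,n}$ prefactor on the left of \eqref{8LW}; applying \eqref{X2W} delivers $P_{B|T}^n(\bb'|\bt)\,e^{n^{3/4}}$. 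The proof of \eqref{8LWH} is identical after swapping $A\leftrightarrow B$ and invoking \eqref{X2WH} in place of \eqref{X2W}.

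The step requiring the most care is (ii): a permutation $g\in S_{\ba,\bb}$ fixes $(\ba,\bb)$ coordinate-wise but need not fix $\bt$, so $T_{\bV}(\ba,\bb,\bt)$ is not automatically $S_{\ba,\bb}$-invariant in the way $T_{\bV}(\bu,\bx)$ was $S_{\bu,\bx}$-invariant in Lemma \ref{XP7}. The resolution is that every element of $\hat{\cal M}_{A,n}$ lies in $T_{P_{A|T}}^n({\cal A}|\bt)$ and every element of $\hat{\cal M}_{B,n}$ lies in $T_{P_{B|T}}^n({\cal B}|\bt)$, so the condition $g(\ba',\bb')\in\hat{\cal M}_{AB,n}$ forces $g\in S_{\bt}$ as well. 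Hence only the subgroup $S_{\ba,\bb}\cap S_{\bt}$ contributes to the orbit sum, and the transitivity-on-type-class argument of Lemma \ref{XP7} applies verbatim to this subgroup; the resulting inequality for the $S_{\ba,\bb}$-orbit average carries an additional harmless factor $|S_{\ba,\bb}\cap S_{\bt}|/|S_{\ba,\bb}|\le 1$ that only tightens the bound, and the three claimed inequalities follow.
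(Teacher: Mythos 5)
Your high-level plan matches the paper's: the paper proves Lemma~\ref{YP7W} by declaring that \eqref{8YW}, \eqref{8LW}, \eqref{8LWH} follow from the three clauses of Proposition~\ref{lB-2W} exactly as \eqref{8K}, \eqref{8I} followed from \eqref{Y3}, \eqref{Y2} in Lemma~\ref{XP7}, with $P^n_{B|AT}=P^n_{B|T}$ supplying the $A$-marginal cancellation. Your step (i)--(iii) is precisely that argument, and your observation about the $\mathsf{M}_{A,n}$ cancellation in \eqref{8LW} is correct.

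However, the ``resolution'' you offer for the $\bt$-invariance problem is false, and this is a genuine gap. You claim that $g(\ba',\bb')\in\hat{\cal M}_{AB,n}$ forces $g\in S_{\bt}$. It does not: the membership conditions impose only that $g(\ba')$ and $g(\bb')$ have the right \emph{conditional type} with respect to the fixed $\bt$, which is a constraint on joint types, not a pointwise constraint. Concretely, take $n=6$, ${\cal A}=\{0,1\}$, $\bt=(1,1,1,2,2,2)$, $P_{A|T=1}=(2/3,1/3)$, $P_{A|T=2}=(1/3,2/3)$, $\ba=(0,0,1,0,1,1)$, $\ba'=(0,1,0,1,0,1)$. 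The permutation $g=(2\,4)(3\,5)$ fixes $\ba$ pointwise, fixes $\ba'$ pointwise (so certainly maps it into $T^n_{P_{A|T}}(\bt)$), yet $g(\bt)\ne\bt$. So terms with $g\notin S_{\bt}$ can contribute to the orbit sum, your restriction to $S_{\ba,\bb}\cap S_{\bt}$ \emph{under-counts}, and the ``harmless factor $\le 1$'' reasoning does not close the argument. (If $P_{A|T=t_1}=P_{A|T=t_2}$, an entire block swap produces an even more dramatic failure of your claim.)

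The correct reading, implicit in the paper, is that $S_{\ba,\bb}$ in Lemma~\ref{YP7W} must be the pointwise stabilizer of the \emph{triple} $(\ba,\bb,\bt)$, i.e.\ $S_\ba\cap S_\bb\cap S_\bt$. With that group, $T_{\bV}(\ba,\bb,\bt)$ \emph{is} invariant (the triple is fixed, so its joint type with any orbit element is constant), and the transitivity argument of Lemma~\ref{XP7} then transfers verbatim. This is also the group that the downstream error analysis actually requires: in Step~$(a)$ of \eqref{3rd-E2}, \eqref{2nd-UK}, \eqref{4th-E2}, the claimed invariance of $\overline{\Pi}^{(3)}$, $\overline{\Pi}^{(2)}$, $\overline{\Pi}^{(1)}$ under the averaging group fails unless $g$ fixes $\bt$, because those projections are built from $\bigotimes_{t}\rho_{\Univ,m_t}$ and $\bigotimes_t\rho_{\psi_{A,n,t}(j'),\psi_{B,n,t}(k')}$, whose tensor-block structure is indexed by the $\bt$-partition of $\{1,\dots,n\}$. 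So the right fix is to enlarge the stabilized data (include $\bt$ in the stabilizer), not to shrink the contributing set of permutations.
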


\begin{proof}
We can show \eqref{8YW} in the same way as \eqref{8K} by replacing the role of \eqref{Y3}
by the role of \eqref{X3W}.
Also, we can show \eqref{8LW} in the same way as \eqref{8I} by replacing the role of \eqref{Y2}
by the role of \eqref{X2W}
because $P_{B|A}^n(\bb|\ba,\bt)=P_{B}^n(\bb|\bt)$.
Eq. \eqref{8LWH} follows from \eqref{X2WH} in the same way as \eqref{8LWH}.
\end{proof}

\section{Universal superposition coding}\Label{S6}
This section shows Theorem \ref{T2} by constructing our universal superposition code.

\subsection{Universal encoder}\Label{S6-1}
First, we construct our universal encoder by using Proposition \ref{lB-3}. 
We choose a sufficiently large integer $N$ to satisfy the conditions in Proposition \ref{lB-3}. 
Assume that $n \ge N$.
We choose $\mathsf{M}_{A,n}:=e^{n R_A-{n^{3/4}}}$ and 
$\mathsf{M}_{B,n}:=e^{n R_B-{n^{3/4}}}$.
Given a joint distribution $P_{UX}\in T_n({\cal U}\times {\cal X})$,
applying Proposition \ref{lB-3}, 
a map $\phi_{A,n}$ from $\{1, \ldots, \mathsf{M}_{A,n}\}$ to ${\cal U}^n$
and a map $\phi_{B,n}$ from $\{1, \ldots, \mathsf{M}_{A,n}\}\times \{1, \ldots, \mathsf{M}_{B,n}\}$ to 
${\cal X}^n$
to satisfy the condition in Proposition \ref{lB-3} with the join distribution $P_{UX}$.

\subsection{Universal decoder}
Next, we construct our universal decoders for both receivers. 
\subsubsection{Receiver $Y$}\Label{SS5-1}
Our decoder is constructed by using the idea given in \cite{Ha1}\cite[Chapter 7]{Group2}.
The quantum system ${\cal H}_Y^{\otimes n}$ is decomposed as
\begin{align}
{\cal H}_Y^{\otimes n}=
\oplus_{{\bf n}\in Y_n}{\cal W}_{{\bf n}},\Label{K11}\\
{\cal W}_{{\bf n}}:= {\cal U}_{{\bf n}} \otimes {\cal X}_{{\bf n}} .\Label{K10}
\end{align}
Define the state
\begin{align}
\rho_{{\Univ},n}:= \sum_{{\bf n}\in Y_n} \frac{1}{|Y_n|} \rho_{{\bf n}},
\end{align}
where $\rho_{{\bf n}}$ is the completely mixed state on ${\cal W}_{{\bf n}}$.
Then, we have \cite[Theorem 6.1]{Group2}
\begin{align}
\rho^{\otimes n} \le (n+1)^{\frac{(d_Y+2)(d_Y-1)}{2}} \rho_{{\Univ},n}
\Label{LN9-1}.
\end{align}

For simplicity, we consider the case
when $\bx'=(\underbrace{1,\ldots,1}_{m_1},\underbrace{2,\ldots,2}_{m_2},\ldots, 
\underbrace{d_X,\ldots, d_X}_{m_{d_X}})$.
In this case, we define $\rho_{\bx'}
:= \rho_{\Univ,m_1} \otimes \rho_{\Univ,m_2} \otimes \cdots \otimes \rho_{\Univ,m_{d_X}}$. 
For a general element $\bx\in {\cal X}^n$,
we define $\rho_{\bx}$ as the permutation of $\rho_{\bx'}$ 
with the above special element $\bx'$ satisfying $T_Y(\bx)=T_Y(\bx')$.
Hence, from \eqref{LN9-1}, we have \cite[(6.84)]{Group2}
\begin{align}
W^{(n)}_{\bx} \le (n+1)^{\frac{d_X(d_Y+2)(d_Y-1)}{2}} \rho_{\bx}
\Label{LN9}.
\end{align}
In the same way, for $\bu$, we define $\rho_{\bu}$.
As shown in \cite[(6.40)]{Group2}, 
the commutativity $[\rho_{\Univ,m_1} \otimes \rho_{\Univ,m_2} ,\rho_{\Univ,m_1+m_2} ]=0$ holds.
Hence, $\rho_{\phi_{B,n}(j,k)}$, $\rho_{\phi_{A,n}(j)}$, and $\rho_{{\Univ},n}$ are
commutative each other.

Using two positive numbers $r_A$ and $r_B$, we 
define the projections $\Pi_{j,k}^{(1)}, \Pi_{j,k}^{(2)}$, $\Pi_{j,k}^{(3)}$, and $\Pi_{j,k}$;
\begin{align}
\Pi_{j,k}^{(1)}:=& \{\rho_{\phi_{B,n}(j,k)} \ge C_{n}^{(1)}\rho_{\phi_{A,n}(j)} \}\\
\Pi_{j}^{(2)}:=&\{\rho_{\phi_{A,n}(j)} \ge C_{n}^{(2)} \rho_{{\Univ},n}\} \\
\Pi_{j,k}^{(3)}:=& \{\rho_{\phi_{B,n}(j,k)} \ge C_{n}^{(1)} C_{n}^{(2)}\rho_{{\Univ},n} \}
\ge \Pi_{j,k}:=\Pi_{j,k}^{(1)}\Pi_{j}^{(2)},
\end{align}
where $C_{n}^{(1)}:= e^{n (R_B+r_B)}$, $C_{n}^{(2)}:=e^{n (R_A+r_A)}$.
These projections are commutative each other
because $\rho_{\phi_{B,n}(j,k)}$, $\rho_{\phi_{A,n}(j)}$, and $\rho_{{\Univ},n}$ are
commutative each other.
Finally, we define the decoder of Receiver $Y$ as
\begin{align}
D(j,k):=&
\Big(\sum_{j',k'}\Pi(j',k')\Big)^{-1/2}
\Pi(j,k) \Big(\sum_{j',k'}\Pi(j',k')\Big)^{-1/2}.
\end{align}

\subsubsection{Receiver $Z$}
On the quantum system ${\cal H}_Z^{\otimes n}$,
we define $\rho^Z_{{\Univ},n}$ and $\rho^Z_{\bu}$ based on the same decomposition as \eqref{K11}
in the same was as Subsubsection \ref{SS5-1}.
We define the projection $\Pi_{j}^{Z}$ as
\begin{align}
\Pi_{j}^{Z}:=&\{\rho^Z_{\phi_{A,n}(j)} \ge C_{n}^{(2)} \rho^Z_{{\Univ},n}\} .
\end{align}
The decoder of Receiver $Z$ is given as
\begin{align}
D^Z(j):=&
\Big(\sum_{j}\Pi^Z(j')\Big)^{-1/2}
\Pi^Z(j)\Big(\sum_{j}\Pi^Z(j')\Big)^{-1/2}.
\end{align}

\subsection{Error evaluation}
Finally, we evaluate the decoding error probabilities for both receivers
by deriving our lower bounds of their exponential decreasing rates (exponents).
\subsubsection{Receiver $Y$}\Label{SSS8}
Before starting the evaluation of the decoding error probability of Receiver $Y$, we prepare several notations.
We simplify the average state $\sum_{x}P_X(x)W_x$ as $W_{P_X}$.
Then, similar to $W_{P_X}$, we define $W^{(n)}_{P_{X^n}}$ for any distribution $P_{X^n}$ on ${\cal X}^n$.
Also, the dimension of the quantum system ${\cal H}_Y$ is denoted by $d_Y$.
These notations are applied to the other system ${\cal H}_Z$.

We evaluate the decoding error probability of Receiver $Y$ as
\begin{align}
&\Tr W^{(n)}_{\phi_{B,n}(j,k)} (I-D(j,k))
\stackrel{(a)}{\le} 
 2 \Tr W^{(n)}_{\phi_{B,n}(j,k)} (I-\Pi(j,k))
+ 4 \Tr W^{(n)}_{\phi_{B,n}(j,k)}
\Big(\sum_{(j',k')\neq (j,k)}\Pi(j',k')\Big) \nonumber \\
\le & 2 \Tr W^{(n)}_{\phi_{B,n}(j,k)} (I-\Pi_{j,k}^{(1)})
+ 2 \Tr W^{(n)}_{\phi_{B,n}(j,k)} (I-\Pi_{j}^{(2)}) \nonumber \\
&+ 4 \Tr W^{(n)}_{\phi_{B,n}(j,k)}
\Big(\sum_{j' (\neq j),k'}\Pi(j',k')\Big) 
+ 4 \Tr W^{(n)}_{\phi_{B,n}(j,k)}
\Big(\sum_{k' \neq k}\Pi(j,k')\Big) \nonumber \\
\le& 2 \Tr W^{(n)}_{\phi_{B,n}(j,k)} (I-\Pi_{j,k}^{(1)})
+ 2 \Tr W^{(n)}_{\phi_{B,n}(j,k)} (I-\Pi_{j}^{(2)}) \nonumber \\
&+ 4 \Tr W^{(n)}_{\phi_{B,n}(j,k)}
\Big(\sum_{j' (\neq j),k'}\Pi^{(3)}(j',k')\Big) 
+ 4 \Tr W^{(n)}_{\phi_{B,n}(j,k)}
\Big(\sum_{k' \neq k}\Pi^{(1)}(j,k')\Big) ,\Label{ER1}
\end{align}
where Step $(a)$ follows from \cite[Lemma 2]{HN}.
Although the term $\Tr W^{(n)}_{\phi_{B,n}(j,k)} (I-\Pi_{j,k}^{(3)}) $ does not appear,
we evaluate it as the first step for the reparation for the evaluation of 
$\Tr W^{(n)}_{\phi_{B,n}(j,k)} (I-\Pi_{j,k}^{(1)})$.
For any $t \in (0,1)$, we have
\begin{align}
 &\Tr W^{(n)}_{\phi_{B,n}(j,k)} (I-\Pi_{j,k}^{(3)}) \nonumber \\
\stackrel{(a)}{\le} & (n+1)^{\frac{s d_X(d_Y+2)(d_Y-1)}{2}} (C_{n}^{(1)} C_{n}^{(2)})^s \Tr (W^{(n)}_{\phi_{B,n}(j,k)})^{1-s}  \rho_{{\Univ},n}^s \nonumber \\
\stackrel{(b)}{=} & (n+1)^{\frac{s d_X(d_Y+2)(d_Y-1)}{2}}(C_{n}^{(1)}C_{n}^{(2)})^s 
\frac{1}{|T_{P_{X}}|}\sum_{\bx \in T_{P_{X}}} 
\Tr (W^{(n)}_{\bx})^{1-s} \rho_{{\Univ},n}^s \nonumber \\
\stackrel{(c)}{\le} &(n+1)^{\frac{s d_X(d_Y+2)(d_Y-1)}{2}} (1+n)^{d_X} (C_{n}^{(1)}C_{n}^{(2)})^s 
\sum_{\bx \in {\cal X}^n} P_X^n(\bx) 
\Tr (W^{(n)}_{\bx})^{1-s} \rho_{{\Univ},n}^s \nonumber \\
 &(n+1)^{\frac{s d_X(d_Y+2)(d_Y-1)}{2}} (1+n)^{d_X} (C_{n}^{(1)}C_{n}^{(2)})^s 
\Tr \Big(\sum_{x\in {\cal X}} P_X(x) 
 W_x^{1-s} \Big)^{\otimes n} 
 \rho_{{\Univ},n}^s \nonumber \\
= &(n+1)^{\frac{s d_X(d_Y+2)(d_Y-1)}{2}} (1+n)^{d_X} (C_{n}^{(1)}C_{n}^{(2)})^s 
\max_{\sigma_n}
\Tr \Big(\sum_{x\in {\cal X}} P_X(x) 
 W_x^{1-s} \Big)^{\otimes n} 
 \sigma_n^s \nonumber \\
\stackrel{(d)}{\le} &(n+1)^{\frac{s d_X(d_Y+2)(d_Y-1)}{2}} (1+n)^{d_X} (C_{n}^{(1)}C_{n}^{(2)})^s 
\Big(\Tr
\Big( \Big(\sum_{x\in {\cal X}} P_X(x) 
 W_x^{1-s} \Big)^{\otimes n} \Big)^{\frac{1}{1-s}}\Big)^{1-s}
\nonumber  \\
\le &(n+1)^{\frac{s d_X(d_Y+2)(d_Y-1)}{2}+d_X} (C_{n}^{(1)}C_{n}^{(2)})^s 
\Big(\Tr
\Big( \sum_{x\in {\cal X}} P_X(x) 
 W_x^{1-s} \Big)^{\frac{1}{1-s}}\Big)^{n(1-s)} \nonumber \\
= &(n+1)^{\frac{s d_X(d_Y+2)(d_Y-1)}{2}+d_X} 
e^{n s(R_A+r_A+R_B+r_B- I_{1-s}(X;Y))}
 ,  \Label{KJ1}
\end{align}
where each step can be shown as follows.
Step $(a)$ follows from the combination of \eqref{LN9} and the condition in $\Pi_{j,k}^{(3)}$.
Step $(b)$ holds because 
$\Tr (W^{(n)}_{\bx})^{1-s} \rho_{{\Univ},n}^s $ has the same value for any $\bx \in T_{P_{X}}$ and 
$\phi_{B,n}(j,k) \in T_{P_{X}}$.
Step $(c)$ follows from \eqref{eq3-X}. 
Step $(d)$ follows from Eq.(6.92) of \cite{Group2} or Eq.(20) of \cite{Ha1}., i.e., the H\"{o}lder inequality.

Using any $s \in (0,1)$, 
$n_u:=n(\phi_{A,n(j)},u )$, and ${\cal N}_u:={\cal N}(\phi_{A,n(j)},u )$ for $u \in {\cal U}$,
we evaluate the first term of \eqref{ER1} as
\begin{align}
& \Tr W^{(n)}_{\phi_{B,n}(j,k)} (I-\Pi_{j,k}^{(1)}) \nonumber \\
\stackrel{(a)}{\le} & 
(n+1)^{\frac{s d_X(d_Y+2)(d_Y-1)}{2}}(C_{n}^{(1)})^s \Tr (W^{(n)}_{\phi_{B,n}(j,k)})^{1-s} 
(\rho_{\phi_{A,n}(j)})^s \nonumber \\
= & 
(n+1)^{\frac{s d_X(d_Y+2)(d_Y-1)}{2}}(C_{n}^{(1)})^s 
\prod_{u \in {\cal U}}
\Tr (W^{(n_u)}_{\phi_{B,n}|_{ {\cal N}_u}(j,k)})^{1-s} 
(\rho_{\Univ, n_u})^s \nonumber \\
\stackrel{(b)}{\le} &(n+1)^{\frac{s d_X(d_Y+2)(d_Y-1)}{2}+d_X d_U} (C_{n}^{(1)})^s 
\prod_{ u \in {\cal U}}
\Big(\Tr
\Big(\sum_{x\in {\cal X}}P_{X|U=u}(x)
W_x^{1-s} \Big)^{\frac{1}{1-s}}
\Big)^{n_u (1-s)} \nonumber \\
\stackrel{(c)}{\le} &(n+1)^{\frac{s d_X(d_Y+2)(d_Y-1)}{2}+d_X d_U} (C_{n}^{(1)})^s 
\Big(\sum_{u \in {\cal U}} P_U(u)
\Tr
\Big(\sum_{x\in {\cal X}}P_{X|U=u}(x)
W_x^{1-s} \Big)^{\frac{1}{1-s}}
\Big)^{n (1-s)}\nonumber \\
= &(n+1)^{\frac{s d_X(d_Y+2)(d_Y-1)}{2}+d_X d_U} 
e^{n s(R_B+r_B- I_{1-s}(X;Y|U))}
,\Label{KJ2}
\end{align}
where each step can be shown as follows.
Step $(a)$ follows from the combination of \eqref{LN9} and the condition in $\Pi_{j,k}^{(1)}$.
Step $(b)$ follows from the application of 
\eqref{KJ1} to the case with $n=n_u$ for each $u$.
Step $(c)$ follows from the following inequality;
Due to the concavity of $\log (x)$, any positive numbers $a_u$ and $n_u$ satisfy
\begin{align}
\prod_{u \in {\cal U}} a_u^{n_u}
\le (\sum_{u \in {\cal U}} \frac{n_u}{n}a_u)^n,
\end{align}
because $n=\sum_{u \in {\cal U}} n_u$.

Using any $s \in (0,1)$, we evaluate the second term of \eqref{ER1} as
\begin{align}
 &\Tr W^{(n)}_{\phi_{B,n}(j,k)} (I-\Pi_{j}^{(2)}) \nonumber \\
\stackrel{(a)}{=} 
&\sum_{g \in S_{\phi_{A,n}(j)}} \frac{1}{|S_{\phi_{A,n}(j)}|}\Tr W^{(n)}_{g(\phi_{B,n}(j,k))} (I-\Pi_{j}^{(2)}) \nonumber \\
= &\sum_{\bx} P_{\Unif,T_{P_{X|U}}}(\bx|\phi_{A,n}(j)) \Tr W^{(n)}_{\bx } (I-\Pi_{j}^{(2)}) \nonumber \\
\stackrel{(b)}{\le}  &|T_n({\cal X})| \sum_{\bx} P_{X|U}^n(\bx|\phi_{A,n}(j)) \Tr W^{(n)}_{\bx } (I-\Pi_{j}^{(2)}) \nonumber \\
= &|T_n({\cal X})| \Tr \big(\bigotimes_{i=1}^n W_{P_{X|U=\phi_{A,n}(j)_i} }\big) (I-\Pi_{j}^{(2)})\nonumber  \\
\stackrel{(c)}{\le}  &|T_n({\cal X})| 
(n+1)^{\frac{s d_U(d_Y+2)(d_Y-1)}{2}+d_U} (C_{n}^{(2)})^s 
\Big(\Tr
\Big( \sum_{u\in {\cal U}} P_U(u) 
 W_{P_{X|U=u}}^{1-s} \Big)^{\frac{1}{1-s}}\Big)^{n(1-s)}\nonumber \\
= &|T_n({\cal X})| 
(n+1)^{\frac{s d_U(d_Y+2)(d_Y-1)}{2}+d_U} 
e^{n s(R_A+r_A-s I_{1-s}(U;Y))}
,\Label{2nd-E}
\end{align}
where each step can be shown as follows.
Step $(a)$ holds because 
the projection $(I-\Pi_{j}^{(2)})$ is invariant with respect to $S_{\phi_{A,n}(j)}$.
Step $(b)$ follows from \eqref{eq3-X}.
Step $(c)$ can be shown in the same way as \eqref{KJ1}.

The third term of \eqref{ER1} is evaluated as
\begin{align}
&
\frac{1}{\mathsf{M}_{A,n} \mathsf{M}_{B,n}}
\sum_{j,k} \Tr W^{(n)}_{\phi_{B,n}(j,k)}
\Big(\sum_{j' (\neq j),k'}\Pi^{(3)}(j',k')\Big) 
=
\frac{1}{\mathsf{M}_{A,n} \mathsf{M}_{B,n}}
\sum_{j',k' } \Tr (\sum_{j' (\neq j),k'} W^{(n)}_{\phi_{B,n}(j,k)})
\Pi^{(3)}(j',k') \nonumber \\
=&
\sum_{j',k' } \Tr \Big(
\sum_{ \bu \neq \phi_{A,n}(j')} \sum_{\bx} P_{\Unif,\hat{\cal M}_{UX,n} }(\bu,\bx)W^{(n)}_{\bx} \Big)
\Pi^{(3)}(j',k') \nonumber \\
\stackrel{(a)}{=} &
\sum_{j',k' } \Tr \Big(
\sum_{g \in S_{\phi_{A,n}(j'),\phi_{B,n}(j',k') }} \frac{1}{|S_{\phi_{A,n}(j'),\phi_{B,n}(j',k') }|}
\sum_{ \bu \neq \phi_{A,n}(j')} \sum_{\bx} P_{\Unif,\hat{\cal M}_{UX,n} }\circ g (\bu,\bx)W^{(n)}_{\bx} \Big)
\Pi^{(3)}(j',k') \nonumber \\
\stackrel{(b)}{\le} & 
e^{2n^{3/4}}
\sum_{j',k' } \Tr \Big(
\sum_{ \bu \neq \phi_{A,n}(j')} \sum_{\bx} P^n(\bu,\bx)W^{(n)}_{\bx} \Big)
\Pi^{(3)}(j',k') \nonumber \\
\le & 
e^{2n^{3/4}}
\sum_{j',k' } \Tr \Big(
\sum_{ \bu ,\bx} P^n(\bu,\bx)W^{(n)}_{\bx} \Big)
\Pi^{(3)}(j',k') \nonumber \\
= & 
2e^{2n^{3/4}}
\sum_{j',k' } \Tr W_{P_V}^{\otimes n}
\Pi^{(3)}(j',k') \nonumber \\
\stackrel{(c)}{\le} & 
e^{2n^{3/4}}
\sum_{j',k' } \Tr 
 (n+1)^{\frac{(d_Y+2)(d_Y-1)}{2}}
\rho_{\Univ,n}
\Pi^{(3)}(j',k') \nonumber \\
\le & e^{2n^{3/4}}
 (n+1)^{\frac{(d_Y+2)(d_Y-1)}{2}}
\sum_{j',k' } (C_{n}^{(1)}C_{n}^{(2)})^{-1}
=
e^{2n^{3/4}} (n+1)^{\frac{(d_Y+2)(d_Y-1)}{2}}
{\mathsf{M}_{A,n} \mathsf{M}_{B,n}}
 (C_{n}^{(1)} C_{n}^{(2)})^{-1}\nonumber \\
=&
(n+1)^{\frac{(d_Y+2)(d_Y-1)}{2}}
e^{-n(r_A+r_B)},
\Label{3rd-E}
\end{align}
where each step can be shown as follows.
Step $(a)$ holds because 
the projection $\Pi^{(3)}(j',k')$ is invariant with respect to $S_{\phi_{A,n}(j'),\phi_{B,n}(j',k') }$.
Step $(b)$ follows from \eqref{8K}.
Step $(c)$ follows from the combination of \eqref{LN9-1} and the condition in $\Pi_{j',k'}^{(3)}$.

The fourth term of \eqref{ER1} is evaluated as
\begin{align}
&\frac{1}{\mathsf{M}_{A,n} \mathsf{M}_{B,n}}
\sum_{j,k}
\Tr W^{(n)}_{\phi_{B,n}(j,k)}
\Big(\sum_{k' \neq k}\Pi^{(1)}(j,k')\Big) \nonumber \\
=&
\frac{1}{\mathsf{M}_{A,n} \mathsf{M}_{B,n}}
\sum_{k',j}
\Tr 
(\sum_{k \neq k'} W^{(n)}_{\phi_{B,n}(j,k)})
\Pi^{(1)}(j,k') \nonumber \\
=&
\sum_{k',j}
\Tr \Big(
\sum_{\bx \neq \phi_{B,n}(k')} 
P_{\Unif,\hat{\cal M}_{UX,n} } (\phi_{A,n}(j')) ,\bx )
W^{(n)}_{\bx}\Big)
\Pi^{(1)}(j,k') \nonumber \\
\stackrel{(a)}{=} & 
\frac{1}{\mathsf{M}_{A,n} }
\sum_{j,k'}
\Tr \Big(
\sum_{g \in S_{\phi_{A,n}(j),\phi_{B,n}(j,k') }} \frac{\mathsf{M}_{A,n}}{|S_{\phi_{A,n}(j),\phi_{B,n}(j,k')}|}
\sum_{\bx \neq \phi_{B,n}(k')} 
P_{\Unif,\hat{\cal M}_{UX,n} } \circ g (\phi_{A,n}(j)) ,\bx )
W^{(n)}_{\bx}\Big)
\Pi^{(1)}(j,k') \nonumber \\
\stackrel{(b)}{\le} &
\frac{e^{n^{3/4}}}{\mathsf{M}_{A,n} }
\sum_{k',j}
\Tr \Big(
\sum_{\bx \neq \phi_{B,n}(k')} 
P_{X|U}^n (\bx|\phi_{A,n}(j)))
W^{(n)}_{\bx}\Big)
\Pi^{(1)}(j,k') \nonumber \\
\le &
\frac{e^{n^{3/4}}}{\mathsf{M}_{A,n} }
\sum_{k',j}
\Tr \Big(
\sum_{\bx}
P_{X|U}^n (\bx|\phi_{A,n}(j)))
W^{(n)}_{\bx}\Big)
\Pi^{(1)}(j,k') \nonumber \\
= &
\frac{e^{n^{3/4}}}{\mathsf{M}_{A,n} }
\sum_{k',j}
\Tr \Big(\bigotimes_{i=1}^n
W_{P_{X|U=\phi_{A,n}(j)_i}} \Big)
\Pi^{(1)}(j,k') \nonumber \\
\stackrel{(c)}{\le} &
\frac{e^{n^{3/4}}}{\mathsf{M}_{A,n} }
\sum_{k',j}
\Tr \Big(
 (n+1)^{\frac{d_U(d_Y+2)(d_Y-1)}{2}}
\rho_{\phi_{A,n}(j) ,n}
\Big)
\Pi^{(1)}(j,k') \nonumber \\
\stackrel{(d)}{\le} &
\frac{e^{n^{3/4}}}{\mathsf{M}_{A,n} }
\sum_{k',j}
 (n+1)^{\frac{d_U(d_Y+2)(d_Y-1)}{2}}
(C_n^{(1)})^{-1}
=e^{n^{3/4}}
 (n+1)^{\frac{d_U(d_Y+2)(d_Y-1)}{2}}
\mathsf{M}_{B,n} 
(C_n^{(1)})^{-1} \nonumber \\
=& e^{n^{3/4}}
 (n+1)^{\frac{d_U(d_Y+2)(d_Y-1)}{2}} e^{-n r_B},
\Label{4th-E}
\end{align}
where each step can be shown as follows.
Step $(a)$ holds because 
the projection $\Pi^{(1)}(j,k')$ is invariant with respect to $S_{\phi_{A,n}(j),\phi_{B,n}(j,k') }$.
Step $(b)$ follows from \eqref{8I}.
Step $(c)$ follows from \eqref{LN9} by replacing $d_X$ by $d_U$.
Step $(d)$ follows from the condition in $\Pi_{j,k'}^{(1)}$.

Hence, since we can choose $t$ freely in \eqref{KJ2} and \eqref{2nd-E}, 
the combination of \eqref{ER1}, \eqref{KJ2}, \eqref{2nd-E}, \eqref{3rd-E}, and \eqref{4th-E}
shows the following lower bound of the exponent of the decoding error probability of Receiver $Y$;
\begin{align}
&
\min \Big(
\max_s s(I_{1-s}(U;Y)-R_A-r_A),
\max_s s(I_{1-s}(X;Y|U)-R_B-r_B),r_A,r_A+r_B\Big) \nonumber \\
\ge &
\min \Big(
\max_s s(I_{1-s}(U;Y)-R_A-r_A),
\max_s s(I_{1-s}(X;Y|U)-R_B-r_B),r_A,r_B\Big) \nonumber \\
=&
\min \Big(
\min (\max_s s(I_{1-s}(U;Y)-R_A-r_A),r_A),
\min (\max_s s(I_{1-s}(X;Y|U)-R_B-r_B),r_B)
\Big) .
\end{align}
Thus, we obtain Eq. \eqref{Ex1} as a lower bound of the exponent of Receiver $Y$.

\subsubsection{Receiver $Z$}\Label{SSS9}
We evaluate the decoding error probability of Receiver $Z$ as follows.
\begin{align}
& \frac{1}{\mathsf{M}_{B,n} }
\sum_k \Tr W^{(n)}_{Z,\phi_{B,n}(j,k)} (I-D^Z(j)) \nonumber \\
\le& \frac{2}{\mathsf{M}_{B,n} }
\sum_k\Tr W^{(n)}_{Z,\phi_{B,n}(j,k)} (I-\Pi^Z(j))
+ \frac{4}{\mathsf{M}_{B,n} }
\sum_k\Tr W^{(n)}_{Z,\phi_{B,n}(j,k)}
\Big(\sum_{j'\neq j}\Pi^Z(j')\Big) .
\Label{ER2}
\end{align}
In the same way as \eqref{2nd-E},
using any $s \in (0,1)$, we evaluate the first term of \eqref{ER2} as
\begin{align}
 &\Tr W^{(n)}_{Z,\phi_{B,n}(j,k)} (I-\Pi_{j}^{(2)}) \nonumber \\
\le &|T_n({\cal X})| 
(n+1)^{\frac{s d_U(d_Y+2)(d_Y-1)}{2}+d_U} (C_{n}^{(2)})^s 
\Big(\Tr
\Big( \sum_{u\in {\cal U}} P_U(u) 
 W_{Z,P_{X|U=u}}^{1-s} \Big)^{\frac{1}{1-s}}\Big)^{n(1-s)}\nonumber \\
=&  |T_n({\cal X})| 
(n+1)^{\frac{s d_U(d_Y+2)(d_Y-1)}{2}+d_U} e^{n s(R_A+r_A-s I_{1-s}(U;Z))}.
\Label{KJ2O}
\end{align}

The second term  of \eqref{ER2} is evaluated as
\begin{align}
 &
\sum_{j,k} \Tr  W^{(n)}_{Z,\phi_{B,n}(j,k)}
\Big(\sum_{j'\neq j}\Pi^Z(j')\Big)\nonumber\\
 =&\sum_{j'} \Tr \Big(\sum_{j\neq j'}\sum_k W^{(n)}_{Z,\phi_{B,n}(j,k)}\Big) \Pi^Z(j')\nonumber\\
\stackrel{(a)}{=} &
\sum_{j'} \Tr 
  \sum_{g \in S_{\phi_{A,n}(j'),\phi_{B,n}(j',1))}} \frac{1}{|S_{\phi_{A,n}(j'),\phi_{B,n}(j',1))}|}
  \Big(\sum_{j\neq j'}\sum_k
   W^{(n)}_{Z,g^{-1} (\phi_{B,n}(j,k))}\Big) \Pi^Z(j')\nonumber \\
=&\sum_{j'} \Tr 
  \sum_{g \in S_{\phi_{A,n}(j'),\phi_{B,n}(j',1))}} \frac{1}{|S_{\phi_{A,n}(j'),\phi_{B,n}(j',1))}|}
  \Big(
  \sum_{\bx,\bu\neq \phi_{A,n}(j')}
P_{\Unif,\hat{\cal M}_{UX,n}}\circ g (\bx,\bu) 
   W^{(n)}_{Z,\bx}\Big) \Pi^Z(j')\nonumber \\
\stackrel{(b)}{\le} &\sum_{j'} e^{2 n^{3/4}} \Tr 
  \Big(
  \sum_{\bx,\bu\neq \phi_{A,n}(j')}
P_{UX}^n (\bx,\bu) 
   W^{(n)}_{Z,\bx}\Big) \Pi^Z(j')\nonumber \\
\le &\sum_{j'} e^{2 n^{3/4}} \Tr 
  \Big(
  \sum_{\bx,\bu}
P_{UX}^n (\bx,\bu) 
   W^{(n)}_{Z,\bx}\Big) \Pi^Z(j')\nonumber \\
= &\sum_{j'} 2e^{n^{3/4}} \Tr 
  W_{Z,P_X}^{\otimes n} \Pi^Z(j')\nonumber \\
\stackrel{(c)}{\le} & e^{2n^{3/4}} (n+1)^{\frac{(d_Z+2)(d_Z-1)}{2}}
\mathsf{M}_{A,n} 
 (C_{n}^{(2)})^{-1} \nonumber \\
= & e^{n^{3/4}} (n+1)^{\frac{(d_Z+2)(d_Z-1)}{2}}
e^{-r_A},
\Label{2nd-U}
\end{align}
where each step can be shown as follows.
Step $(a)$ holds because 
the projection $\Pi^{Z}(j')$ is invariant with respect to $S_{\phi_{A,n}(j')}$
and
$S_{\phi_{A,n}(j'),\phi_{B,n}(j',1))}$ is a subgroup of $S_{\phi_{A,n}(j')}$.
Step $(b)$ follows from \eqref{8K} of Lemma \ref{XP7}.
Step $(c)$ follows from the combination of \eqref{LN9-1} and the condition in $\Pi^Z(j')$.

Hence, since we can choose $t$ freely in \eqref{KJ2O},
from the combination of \eqref{ER2}, \eqref{KJ2O}, and \eqref{2nd-U}, 
we obtain Eq. \eqref{Ex2}, i.e., a lower bound of the exponent of the decoding error probability of Receiver $Y$ as follows.
\begin{align}
\min (\max_s s(I_{1-s}(U;Z)-R_A-r_A),r_A).
\end{align}

\section{Universal classical-quantum MAC coding with joint decoding}\Label{S7}
In this section, using our technique for universal classical-quantum superposition coding,
we construct universal classical-quantum MAC code with joint decoding that achieves the corner points. 
In this section, we omit the subscript ${P_{A-T-B}}$ in the (conditional) 
mutual information.

\subsection{Universal encoder}\Label{S7-1}
To construct our universal encoder, 
we apply Proposition \ref{lB-2W}
to a joint distribution $P_{A-T-B}\in T_n({\cal A}\times {\cal B}\times {\cal T})$
when $\bt$ has the form 
\begin{align}
(\underbrace{1,\ldots,1}_{m_1},\underbrace{2,\ldots,2}_{m_2},\ldots, 
\underbrace{d_T,\ldots, d_T}_{m_{d_T}}).
\Label{Defmt}
\end{align}
We choose a sufficiently large $N$ to satisfy the conditions in Proposition \ref{lB-2W}. 
Assume that $n \ge N$.
We choose $\mathsf{M}_{A,n}:=e^{n R_A-{n^{3/4}}}$ and 
$\mathsf{M}_{B,n}:=e^{n R_B-{n^{3/4}}}$.
Applying Proposition \ref{lB-2W}, 
a map $\psi_{A,n}$ from $\{1, \ldots, \mathsf{M}_{A,n}\}$ to ${\cal A}^n$
and a map $\psi_{B,n}$ from $\{1, \ldots, \mathsf{M}_{B,n}\}$ to ${\cal B}^n$
such that the subsets
$\hat{\cal M}_{A,n}:=\{\psi_{A,n}(1), \ldots, \psi_{A,n}(\mathsf{M}_{A,n})\}$ 
and
$\hat{\cal M}_{B,n}:=\{\psi_{B,n}(1), \ldots, \psi_{B,n}(\mathsf{M}_{B,n})\}$ 
satisfy the condition in Proposition \ref{lB-2W} with the two distributions $P_{A}$ and $P_{B}$.
To describe the components,
we use the notation 
$\psi_{A,n}(j)=(\psi_{A,n,1}(j), \ldots, \psi_{A,n,d_T}(j))\in {\cal A}^{m_1}\times \cdots \times {\cal A}^{m_{d_T}}$
and
$\psi_{B,n}(k)=(\psi_{B,n,1}(k), \ldots, \psi_{B,n,d_T}(k))\in {\cal B}^{m_1}\times \cdots \times {\cal B}^{m_{d_T}}$.


\subsection{Universal decoder}
Our decoder is the same as the decoder of Receiver $Y$ in the case of superposition coding.
Using two positive numbers $r_A$ and $r_B$, we 
define the projections $\overline{\Pi}_{j,k}^{(1)}, \overline{\Pi}_{j,k}^{(2)}$, $\overline{\Pi}_{j,k}^{(3)}$, and $\overline{\Pi}_{j,k}$;
\begin{align}
\overline{\Pi}_{j,k}^{(1)}:=& \Big\{
\Big(\bigotimes_{t \in {\cal T}}
\rho_{\psi_{A,n,t}(j),\psi_{B,n,t}(k)} \Big)\ge C_{n}^{(1)}
\Big(\bigotimes_{t \in {\cal T}}
\rho_{\psi_{A,n,t}(j)}\Big) \Big\} \\
\overline{\Pi}_{j}^{(2)}:=&\Big\{\Big(
\bigotimes_{t \in {\cal T}}\rho_{\psi_{A,n,t}(j)}\Big) \ge C_{n}^{(2)} \Big(\bigotimes_{t \in {\cal T}}\rho_{{\Univ},m_t}\Big)\Big\}  \\
\overline{\Pi}_{j,k}^{(3)}:=& \Big\{\Big(
\bigotimes_{t \in {\cal T}}\rho_{\psi_{A,n,t}(j),\psi_{B,n,t}(k)} \Big)\ge C_{n}^{(1)} C_{n}^{(2)}
\Big(\bigotimes_{t \in {\cal T}}\rho_{{\Univ},m_t} \Big)\Big\}
\ge \overline{\Pi}_{j,k}:=\overline{\Pi}_{j,k}^{(1)}\overline{\Pi}_{j}^{(2)},
\end{align}
where $C_{n}^{(1)}:= e^{n (R_B+r_B)}$, $C_{n}^{(2)}:=e^{n (R_A+r_A)}$.
These projections are commutative with each other
because $\bigotimes_{t \in {\cal T}}\rho_{\psi_{A,n,t}(j),\psi_{B,n,t}(k)}$, 
$\bigotimes_{t \in {\cal T}}\rho_{\psi_{A,n,t}(j)}$, and 
$\bigotimes_{t \in {\cal T}}\rho_{{\Univ},m_t}$ are
commutative with each other.

Then, the decoder is given as
\begin{align}
\overline{D}(j,k):=&
\Big(\sum_{j',k'}\overline{\Pi}(j',k')\Big)^{-1/2}
\overline{\Pi}(j,k) \Big(\sum_{j',k'}\overline{\Pi}(j',k')\Big)^{-1/2}.
\end{align}

\subsection{Error evaluation}
The decoding error probability of our code is evaluated in a quite similar way to Subsubsection \ref{SSS8}.
Our decoding error probability is decomposed as 
\begin{align}
&\Tr W^{(n)}_{\psi_{A,n}(j),\psi_{B,n}(k)} (I-\overline{D}(j,k)) \nonumber \\
\stackrel{(a)}{\le}  & 2 \Tr W^{(n)}_{\psi_{A,n}(j),\psi_{B,n}(k)} (I-\overline{\Pi}(j,k))
+ 4 \Tr W^{(n)}_{\psi_{A,n}(j),\psi_{B,n}(k)}
\Big(\sum_{(j',k')\neq (j,k)}\overline{\Pi}(j',k')\Big) \nonumber \\
\le & 2 \Tr W^{(n)}_{\psi_{A,n}(j),\psi_{B,n}(k)} (I-\overline{\Pi}_{j,k}^{(1)})
+ 2 \Tr W^{(n)}_{\psi_{A,n}(j),\psi_{B,n}(k)} (I-\overline{\Pi}_{j}^{(2)}) \nonumber \\
&+ 4 \Tr W^{(n)}_{\psi_{A,n}(j),\psi_{B,n}(k)}
\Big(\sum_{j' (\neq j),k'(\neq k)}\overline{\Pi}(j',k')\Big) 
+ 4 \Tr W^{(n)}_{\psi_{A,n}(j),\psi_{B,n}(k)}
\Big(\sum_{k' \neq k}\overline{\Pi}(j,k')\Big) \nonumber \\
&+ 4 \Tr W^{(n)}_{\psi_{A,n}(j),\psi_{B,n}(k)}
\Big(\sum_{j' \neq j}\overline{\Pi}(k,j')\Big) \nonumber \\
\le& 2 \Tr W^{(n)}_{\psi_{A,n}(j),\psi_{B,n}(k)} (I-\overline{\Pi}_{j,k}^{(1)})
+ 2 \Tr W^{(n)}_{\psi_{A,n}(j),\psi_{B,n}(k)} (I-\overline{\Pi}_{j}^{(2)}) \nonumber \\
&+ 4 \Tr W^{(n)}_{\psi_{A,n}(j),\psi_{B,n}(k)}
\Big(\sum_{j' (\neq j),k'(\neq k)}\overline{\Pi}^{(3)}(j',k')\Big) 
+ 4 \Tr W^{(n)}_{\psi_{A,n}(j),\psi_{B,n}(k)}
\Big(\sum_{k' \neq k}\overline{\Pi}^{(1)}(j,k')\Big) \nonumber \\
&+ 4 \Tr W^{(n)}_{\psi_{A,n}(j),\psi_{B,n}(k)}
\Big(\sum_{j' \neq j}\overline{\Pi}^{(2)}(j')\Big) ,
\Label{ER3}
\end{align}
where Step $(a)$ follows from \cite[Lemma 2]{HN}.

All the terms in \eqref{ER3} except for the fifth term $\Tr W^{(n)}_{\psi_{A,n}(j),\psi_{B,n}(k)}
\Big(\sum_{j' \neq j}\overline{\Pi}^{(2)}(k,j')\Big)$
can be evaluated in the same way as Subsubsection \ref{SSS8}
by replacing the roles of Proposition \ref{lB-3} and Lemma \ref{XP7} 
by the roles of Proposition \ref{lB-2W} and Lemma \ref{YP7W}.
In this derivation,
Eqs. \eqref{8YW} and \eqref{8LW} in Lemma \ref{YP7W}
take the roles of 
Eqs. \eqref{8K} and \eqref{8I} in Lemma \ref{XP7} as follows.

For simplicity, we evaluate the first term of \eqref{ER3} when ${\cal T}$ is singleton.
Using any $s \in (0,1)$, 
$n_a:=n(\phi_{A,n(j)},a )$, and ${\cal N}_a:={\cal N}(\psi_{A,n(j)},a)$ for $a \in {\cal A}$,
\begin{align}
&  \Tr W^{(n)}_{\psi_{A,n}(j),\psi_{B,n}(k)} (I-\overline{\Pi}_{j,k}^{(1)}) \nonumber \\
\stackrel{(a)}{\le} & 
(n+1)^{\frac{s d_B(d_Y+2)(d_Y-1)}{2}}(C_{n}^{(1)})^s 
\Tr (W^{(n)}_{\psi_{A,n}(j),\psi_{B,n}(k)})^{1-s}  (\rho_{\phi_{A,n}(j)})^s \nonumber \\
= & 
(n+1)^{\frac{s d_B(d_Y+2)(d_Y-1)}{2}}(C_{n}^{(1)})^s 
\prod_{a \in {\cal A}}
\Tr (W^{(n_a)}_{\phi_{B,n}|_{ {\cal N}_a}(k)})^{1-s} 
(\rho_{\Univ, n_a})^s \nonumber \\
\stackrel{(b)}{\le} & 
(n+1)^{\frac{s d_B(d_Y+2)(d_Y-1)}{2}+d_B d_A} 
e^{n s(R_B+r_B- I_{1-s}(B;Y|A))}
,\Label{KJ2B}
\end{align}
where each step is shown as follows.
Step $(a)$ follows from the combination of \eqref{LN9} and the condition in $\Pi_{j,k}^{(1)}$.
Step $(b)$ follows in the same way as \eqref{KJ2}.
When ${\cal T}$ is not singleton,
applying \eqref{KJ2B} to the $d_T$ blocks, we evaluate the first term of \eqref{ER3}; 
\begin{align}
&  \Tr W^{(n)}_{\psi_{A,n}(j),\psi_{B,n}(k)} (I-\overline{\Pi}_{j,k}^{(1)}) \nonumber \\
\stackrel{(a)}{\le} & 
\prod_{t \in {\cal T}}
(m_t+1)^{\frac{s d_B(d_Y+2)(d_Y-1)}{2}+d_B d_A} 
e^{m_t s(R_B+r_B- I_{1-s}(B;Y|A,T=t))} \nonumber \\
\le&
(n+1)^{d_T(\frac{s d_B(d_Y+2)(d_Y-1)}{2}+d_B d_A)} 
e^{n s(R_B+r_B- I_{1-s}(B;Y|A,T))} 
,\Label{KJ2C}
\end{align}
where 
Step $(a)$ follows by applying \eqref{KJ2B} to $d_T$ blocks.

Applying the same modification to \eqref{2nd-E}, 
we evaluate the second term of \eqref{ER3} as
\begin{align}
& \Tr W^{(n)}_{\psi_{A,n}(j),\psi_{B,n}(k)} (I-\overline{\Pi}_{j}^{(2)}) \nonumber \\
\le &|T_n({\cal X})| 
(n+1)^{d_T (\frac{s d_A(d_Y+2)(d_Y-1)}{2}+d_A)} 
e^{n s(R_A+r_A-s I_{1-s}(A;Y|T))},
\Label{2nd-E2}
\end{align}
\if0
Applying the same modification to \eqref{3rd-E}, 
we evaluate the third term of \eqref{ER3} as
\begin{align}
&\frac{1}{\mathsf{M}_{A,n} \mathsf{M}_{B,n}}
\sum_{j,k}
 \Tr W^{(n)}_{\psi_{A,n}(j),\psi_{B,n}(k)}
\Big(\sum_{j' (\neq j),k'(\neq k)}\overline{\Pi}^{(3)}(j',k')\Big) \nonumber \\
\le &
(n+1)^{\frac{d_T(d_Y+2)(d_Y-1)}{2}}
e^{-n(r_A+r_B)},
\end{align}
\fi

Modifying \eqref{3rd-E} and using the notation $\hat{\cal M}_{AB,n}:=\hat{\cal M}_{A,n}\times \hat{\cal M}_{B,n}$,
we evaluate the third term of \eqref{ER1} as
\begin{align}
&
\frac{1}{\mathsf{M}_{A,n} \mathsf{M}_{B,n}}
\sum_{j,k} \Tr W^{(n)}_{\psi_{A,n}(j),\psi_{B,n}(k)}
\Big(\sum_{j' (\neq j),k'(\neq k)}\Pi^{(3)}(j',k')\Big) \nonumber \\
=&
\frac{1}{\mathsf{M}_{A,n} \mathsf{M}_{B,n}}
\sum_{j',k' } \Tr (\sum_{j' (\neq j),k'(\neq k)} W^{(n)}_{\psi_{A,n}(j),\psi_{B,n}(k)})
\Pi^{(3)}(j',k') \nonumber \\
=&
\sum_{j',k' } \Tr \Big(
\sum_{ \ba \neq \psi_{A,n}(j')} 
\sum_{ \bb \neq \psi_{B,n}(k')} 
P_{\Unif,\hat{\cal M}_{AB,n} }(\ba,\bb)W^{(n)}_{\ba,\bb}
 \Big)
\Pi^{(3)}(j',k') \nonumber \\
\stackrel{(a)}{=} &
\sum_{j',k' } \Tr \Big(
\sum_{g \in S_{\psi_{A,n}(j),\psi_{B,n}(k) }} \frac{1}{|S_{\psi_{A,n}(j),\psi_{B,n}(k) }|}
\sum_{ \ba \neq \psi_{A,n}(j')} 
\sum_{ \bb \neq \psi_{B,n}(k')} 
P_{\Unif,\hat{\cal M}_{AB,n} }\circ g(\ba,\bb)W^{(n)}_{\ba,\bb}
\Big)
\Pi^{(3)}(j',k') \nonumber \\
\stackrel{(b)}{\le} & 
e^{2n^{3/4}}
\sum_{j',k' } \Tr \Big(
\sum_{ \ba \neq \psi_{A,n}(j')} 
\sum_{ \bb \neq \psi_{B,n}(k')} 
P^n(\ba,\bb|\bt)W^{(n)}_{\ba,\bb} \Big)
\Pi^{(3)}(j',k') \nonumber \\
\le & 
e^{2n^{3/4}}
\sum_{j',k' } \Tr \Big(
\sum_{ \ba ,\bb} P^n(\ba,\bb|\bt)W^{(n)}_{\ba,\bb} \Big)
\Pi^{(3)}(j',k') \nonumber \\
= & 
2e^{2n^{3/4}}
\sum_{j',k' } \Tr 
\Big(\bigotimes_{t \in {\cal T}}
W_{P_{AB|T=t}}^{\otimes m_t}\Big)
\Pi^{(3)}(j',k') \nonumber \\
\stackrel{(c)}{\le} & 
e^{2n^{3/4}}
\sum_{j',k' } \Tr 
 (n+1)^{\frac{d_T(d_Y+2)(d_Y-1)}{2}}
\big(\bigotimes_{t \in {\cal T}}\rho_{\Univ, m_t}\big)
\Pi^{(3)}(j',k') \nonumber \\
\le & e^{2n^{3/4}}
 (n+1)^{\frac{d_T(d_Y+2)(d_Y-1)}{2}}
\sum_{j',k' } (C_{n}^{(1)}C_{n}^{(2)})^{-1}
=
e^{2n^{3/4}} (n+1)^{\frac{d_T(d_Y+2)(d_Y-1)}{2}}
{\mathsf{M}_{A,n} \mathsf{M}_{B,n}}
 (C_{n}^{(1)} C_{n}^{(2)})^{-1}\nonumber \\
=&
(n+1)^{\frac{d_T(d_Y+2)(d_Y-1)}{2}}
e^{-n(r_A+r_B)},
\Label{3rd-E2}
\end{align}
where each step can be shown as follows.
Step $(a)$ holds because 
the projection $\Pi^{(3)}(j',k')$ is invariant with respect to $S_{\psi_{A,n}(j),\psi_{B,n}(k) }$.
Step $(b)$ follows from \eqref{8YW}.
Step $(c)$ follows from \eqref{LN9}.
Step $(d)$ follows from the condition in $\Pi_{j',k'}^{(3)}$.
Applying the same modification to \eqref{4th-E}, 
we evaluate the fourth term of \eqref{ER3} as
\begin{align}
&\frac{1}{\mathsf{M}_{A,n} \mathsf{M}_{B,n}}
\sum_{j,k}
 \Tr W^{(n)}_{\psi_{A,n}(j),\psi_{B,n}(k)}
\Big(\sum_{k' \neq k}\overline{\Pi}^{(1)}(j,k')\Big) \nonumber \\
\le& e^{n^{3/4}}
 (n+1)^{\frac{d_Td_A(d_Y+2)(d_Y-1)}{2}} e^{-n r_B},
\Label{4th-E2}
\end{align}
where 
we use \eqref{8LW} instead of \eqref{8I}.

We evaluate the fifth term of \eqref{ER3} as follows.
\begin{align}
 &
\sum_{j,k } \Tr  \Big[ W^{(n)}_{\psi_{A,n}(j),\psi_{B,n}(k)}
\Big(\sum_{j'\neq j}\overline{\Pi}^{(2)}(j')\Big)\Big]\nonumber \\
= &
\sum_{j'}\sum_{k} 
\Tr \Big[
\Big(\sum_{j\neq j'} W^{(n)}_{\psi_{A,n}(j),\psi_{B,n}(k)}\Big)
\overline{\Pi}^{(2)}(j')\Big]\nonumber \\
=&
\sum_{j'}\sum_{k} 
\Tr \Big[
  \Big(
  \sum_{\ba\neq \psi_{A,n}(j')}
P_{\Unif,\hat{\cal M}_{AB,n}} (\ba,\psi_{B,n}(k)) 
   W^{(n)}_{\ba,\psi_{B,n}(k)}\Big) \overline{\Pi}^{(2)}(j')\Big]\nonumber \\
=&\sum_{j',k'}  \frac{1}{\mathsf{M}_{B,n} -1} \sum_{k \neq k'} 
\Tr \Big[
  \Big(
  \sum_{\ba\neq \psi_{A,n}(j')}
P_{\Unif,\hat{\cal M}_{AB,n}} (\ba,\psi_{B,n}(k)) 
   W^{(n)}_{\ba,\psi_{B,n}(k)}\Big) \overline{\Pi}^{(2)}(j')\Big]\nonumber \\
=&\sum_{j',k'}  \frac{1}{\mathsf{M}_{B,n} -1} 
\Tr \Big[
  \Big(
  \sum_{\ba\neq \psi_{A,n}(j'),\bb\neq \psi_{B,n}(k')}
P_{\Unif,\hat{\cal M}_{AB,n}} (\ba,\bb) 
   W^{(n)}_{\ba,\bb}\Big) \overline{\Pi}^{(2)}(j')\Big]\nonumber \\
\stackrel{(a)}{=}&\sum_{j',k'}  \frac{1}{\mathsf{M}_{B,n} -1} \sum_{k \neq k'} 
\Tr \Bigg[
  \sum_{g \in S_{(\psi_{A,n}(j'),\psi_{B,n}(k'))}} \frac{1}{|S_{(\psi_{A,n}(j'),\psi_{B,n}(k'))}|}
\nonumber \\
  &\hspace{10ex} \cdot \Big(
  \sum_{\ba\neq \psi_{A,n}(j'),\bb\neq \psi_{B,n}(k')}
P_{\Unif,\hat{\cal M}_{AB,n}} \circ g (\ba,\bb) 
   W^{(n)}_{\ba,\bb}\Big) \overline{\Pi}^{(2)}(j')
   \Bigg]\nonumber \\
\stackrel{(b)}{\le} &
\sum_{j',k'} \frac{e^{2 n^{3/4}} }{\mathsf{M}_{B,n} -1}
\Tr \Big[
  \Big(
  \sum_{\ba\neq \psi_{A,n}(j'),\bb\neq \psi_{B,n}(k')}
P_{AB}^n (\ba,\bb|\bt) 
   W^{(n)}_{\ba,\bb}\Big) \overline{\Pi}^{(2)}(j')\Big]\nonumber \\
{\le} &
\sum_{j',k'} \frac{e^{2 n^{3/4}} }{\mathsf{M}_{B,n} -1}
\Tr \Big[  \Big(  \sum_{\ba,\bb}
P_{AB}^n (\ba,\bb|\bt) 
   W^{(n)}_{\ba,\bb}\Big) \overline{\Pi}^{(2)}(j')\Big]\nonumber \\
= &
\sum_{j',k'} \frac{e^{2 n^{3/4}} }{\mathsf{M}_{B,n} -1}
\Tr \Big[
\Big(\bigotimes_{t \in {\cal T}}W_{P_{AB|T=t}}^{\otimes m_t}\Big)\overline{\Pi}^{(2)}(j')\Big]\nonumber \\
\stackrel{(c)}{\le} &
\sum_{j',k'} \frac{e^{2 n^{3/4}} }{\mathsf{M}_{B,n} -1}
\Tr \Big[
 (n+1)^{\frac{d_T(d_Y+2)(d_Y-1)}{2}}
\big(\bigotimes_{t \in {\cal T}}\rho_{\Univ, m_t}\big)
\overline{\Pi}^{(2)}(j')\Big]\nonumber \\
\stackrel{(d)}{\le} &
\sum_{j',k'} \frac{e^{2 n^{3/4}} }{\mathsf{M}_{B,n} -1}
 (n+1)^{\frac{d_T(d_Y+2)(d_Y-1)}{2}}
 (C_{n}^{(2)})^{-1}\nonumber \\
= & 
\frac{e^{2 n^{3/4}} \mathsf{M}_{B,n} }{\mathsf{M}_{B,n} -1}
(n+1)^{\frac{d_T(d_Y+2)(d_Y-1)}{2}}
e^{-nr_A},
\Label{2nd-UK}
\end{align}
where each step can be shown as follows.
Step $(a)$ follows from the invariance of $\overline{\Pi}^{(2)}(j')$ by any action 
in the group $ S_{(\psi_{A,n}(j'),\psi_{B,n}(k'))}$.
Step $(b)$ follows from \eqref{8YW} of Lemma \ref{YP7W}.
Step $(c)$ follows from \eqref{LN9}.
Step $(d)$ follows from the condition in the projection $\overline{\Pi}^{(2)}(j')$.

\if0
The key point of this derivation is excluding $\psi_{B,n}(k')$
under the group action $ S_{(\psi_{A,n}(j'),\psi_{B,n}(k'))}$.
This trick is necessary to apply \eqref{8Y} of Lemma \ref{YP7}
due to the assumption of Lemma \ref{YP7}.
This is the reason why we need such an unnatural formula transformation
in the above derivation.
\fi

Hence, since we can choose $s$ freely in \eqref{KJ2C} and \eqref{2nd-E2}, 
from the combination of \eqref{ER3}, \eqref{KJ2C}, \eqref{2nd-E2}, \eqref{3rd-E2}, \eqref{4th-E2}, and
\eqref{2nd-UK}, 
we obtain the following lower bond of the exponent of the decoding error probability of Receiver $Y$;
\begin{align}
&
\min \Big(
\max_s s(I_{1-s}(A;Y)-R_A-r_A),
\max_s s(I_{1-s}(B;Y|A)-R_B-r_B),r_A,r_A+r_B,r_B\Big) \nonumber \\
= &
\min \Big(
\max_s s(I_{1-s}(A;Y)-R_A-r_A),
\max_s s(I_{1-s}(B;Y|A)-R_B-r_B),r_A,r_B\Big) \nonumber \\
=&
\min \Big(
\min (\max_s s(I_{1-s}(A;Y)-R_A-r_A),r_A),
\min (\max_s s(I_{1-s}(B;Y|A)-R_B-r_B),r_B)
\Big) .
\end{align}
Thus, we obtain Eq. \eqref{Ex3}.

\section{Universal classical-quantum MAC coding with separate decoding}\Label{S7V}
In this section, 
we construct universal classical-quantum MAC code with separate decoding that achieves the general points.
\subsection{Code construction}
The universal encoder is the same as Subsection \ref{S7-1}.
Our decoder with separate decoding is composed of two POMs by using the same notations given in Section \ref{S7}.
Using two positive numbers $r_A$ and $r_B$, we 
define the projections $\overline{\Pi}_{j,k}^{(4)}, 
\overline{\Pi}^B_{j,k}, \overline{\Pi}^B_{j},
\overline{\Pi}^A_{j,k}$, and $ \overline{\Pi}^A_{k}$
in addition to $\overline{\Pi}_{j,k}^{(1)}$ and $\overline{\Pi}_{j,k}^{(3)}$; 
\begin{align}
\overline{\Pi}_{j,k}^{(4)}:=& \Big\{
\Big(\bigotimes_{t \in {\cal T}}
\rho_{\psi_{A,n,t}(j),\psi_{B,n,t}(k)} \Big)\ge C_{n}^{(2)}
\Big(\bigotimes_{t \in {\cal T}}
\rho_{\psi_{B,n,t}(j)}\Big) \Big\}\\
\overline{\Pi}^B_{j,k}:=&\overline{\Pi}_{j,k}^{(1)}\overline{\Pi}_{j}^{(3)}, \quad 
\overline{\Pi}^B_{j}:= \sum_{j}\overline{\Pi}^B_{j,k}  \\
\overline{\Pi}^A_{j,k}:=&\overline{\Pi}_{j,k}^{(4)}\overline{\Pi}_{j}^{(3)},\quad
\overline{\Pi}^A_{j}:=\sum_k \overline{\Pi}^A_{j,k}.
\end{align}
The projection $\overline{\Pi}_{j}^{(3)}$
is commutative with $\overline{\Pi}_{j,k}^{(1)}$ and $\overline{\Pi}_{j,k}^{(4)}$.
Then, the decoders with separate decoding are given as
\begin{align}
\overline{D}^B(k):=&
\Big(\sum_{k'}\overline{\Pi}^B(k')\Big)^{-1/2}
\overline{\Pi}^B(k) \Big(\sum_{k'}\overline{\Pi}^B(k')\Big)^{-1/2} \\
\overline{D}^A(j):=&
\Big(\sum_{j'}\overline{\Pi}^A(j')\Big)^{-1/2}
\overline{\Pi}^A(j) \Big(\sum_{j'}\overline{\Pi}^A(j')\Big)^{-1/2}.
\end{align}
Since $\overline{\Pi}_{j,k}^{(1)}$ 
is not commutative with $\overline{\Pi}_{j,k}^{(4)}$ in general,
we cannot construct a decoder with joint decoding in this way
by using the projections $\overline{\Pi}_{j}^{(3)}$, $\overline{\Pi}_{j,k}^{(1)}$ and $\overline{\Pi}_{j,k}^{(4)}$.


\subsection{Error evaluation}
The decoding error probability of our code is evaluated in a quite similar way to Section \ref{S7}.
Since $\epsilon_A(\Psi_{S,n};W^{(n)})$ can be evaluated in the same way, 
we evaluate only $\epsilon_B(\Psi_{S,n};W^{(n)}))$.
The decoding error probability for message from $B$ is decomposed as 
\begin{align}
&\Tr W^{(n)}_{\psi_{A,n}(j),\psi_{B,n}(k)} (I-\overline{D}^B(k)) \nonumber \nonumber \\
\stackrel{(a)}{\le}  & 2 \Tr W^{(n)}_{\psi_{A,n}(j),\psi_{B,n}(k)} (I-\overline{\Pi}^B(k))
+ 4 \Tr W^{(n)}_{\psi_{A,n}(j),\psi_{B,n}(k)}
\Big(\sum_{k'\neq k}\overline{\Pi}^B(k')\Big) \nonumber \nonumber \\
\le & 2 \Tr W^{(n)}_{\psi_{A,n}(j),\psi_{B,n}(k)} (I-\overline{\Pi}_{j,k}^{(1)})
+ 2 \Tr W^{(n)}_{\psi_{A,n}(j),\psi_{B,n}(k)} (I-\overline{\Pi}_{j}^{(3)})\nonumber  \nonumber \\
&+ 4 \Tr W^{(n)}_{\psi_{A,n}(j),\psi_{B,n}(k)}
\Big(\sum_{j' (\neq j),k'(\neq k)}\overline{\Pi}^B(j',k')\Big) 
+ 4 \Tr W^{(n)}_{\psi_{A,n}(j),\psi_{B,n}(k)}
\Big(\sum_{k' \neq k}\overline{\Pi}^B(j,k')\Big)\nonumber  \nonumber \\
\le& 2 \Tr W^{(n)}_{\psi_{A,n}(j),\psi_{B,n}(k)} (I-\overline{\Pi}_{j,k}^{(1)})
+ 2 \Tr W^{(n)}_{\psi_{A,n}(j),\psi_{B,n}(k)} (I-\overline{\Pi}_{j}^{(3)}) \nonumber \nonumber \\
&+ 4 \Tr W^{(n)}_{\psi_{A,n}(j),\psi_{B,n}(k)}
\Big(\sum_{j' (\neq j),k'(\neq k)}\overline{\Pi}^{(3)}(j',k')\Big) 
+ 4 \Tr W^{(n)}_{\psi_{A,n}(j),\psi_{B,n}(k)}
\Big(\sum_{k' \neq k}\overline{\Pi}^{(1)}(j,k')\Big) 
\Label{ER3V}
\end{align}
where Step $(a)$ follows from \cite[Lemma 2]{HN}.

All the terms in \eqref{ER3} except for 
the second term $\Tr W^{(n)}_{\psi_{A,n}(j),\psi_{B,n}(k)} (I-\overline{\Pi}_{j,k}^{(3)})$
has been evaluated in Section \ref{S7}.
The second term $\Tr W^{(n)}_{\psi_{A,n}(j),\psi_{B,n}(k)} (I-\overline{\Pi}_{j,k}^{(3)})$
is evaluated by using \eqref{KJ1}.
That is, applying the same type of modification as \eqref{KJ2C} to \eqref{KJ1},
we have
\begin{align}
&  \Tr W^{(n)}_{\psi_{A,n}(j),\psi_{B,n}(k)} (I-\overline{\Pi}_{j,k}^{(3)})\nonumber  \nonumber \\
{\le} & 
(n+1)^{d_T(\frac{s d_X(d_Y+2)(d_Y-1)}{2}+d_X)} 
e^{n s(R_A+r_A+R_B+r_B- I_{1-s}(AB;Y|T))}. \Label{MLP}
\end{align}
Hence, since we can choose $s$ freely in \eqref{KJ2C} and \eqref{MLP}, 
from the combination of \eqref{KJ2C}, \eqref{MLP}, \eqref{MLP}, \eqref{3rd-E2}, and \eqref{4th-E2}, 
we obtain the following lower bond of the exponent of the decoding error probability 
$\epsilon_B(\Psi_{S,n};W^{(n)}))$.
\begin{align}
&
\min \Big(
\max_s s(I_{1-s}(B;Y|A)-R_B-r_B),
\max_s s(I_{1-s}(AB;Y)-R_A-R_B-r_A -r_B),r_A+r_B,r_B\Big) .
\end{align}
Thus, we obtain Eq. \eqref{Ex3B}.

In the same way, 
we obtain Eq. \eqref{Ex3A} by replacing the role of Eq. \eqref{8LW} by Eq. \eqref{8LWH}.

\section{Conclusions}\Label{S8}
As the first main result, we have given a c-q universal superposition code
by combining the generalized packing lemma by \cite{Korner-Sgarro} and
the modification of universal decoder given from the Schur duality \cite{Ha1}. Applying this code, we have derived the capacity region of c-q compound BCD.
As the second main result, we have shown c-q universal MAC code with joint decoding by modifying the above universal code with use of another generalized packing lemma by \cite{Liu-Hughes}.
This code works well for corner points.
As the third main result, we have shown a c-q universal MAC code with separate decoding by constructing another universal decoder in the above universal code.
Combing the universal code with separate decoding with Eq. \eqref{MGY}, we have shown a single-letterized formula for the capacity region of a c-q compound MAC.

The key point of our method is the combination of 
the construction of a code with separation decoding and the gentle operator lemma \cite{Win,O-G,Springer}.
We can expect application of this kind of combination 
to various topics of quantum information theory with the multiple user setting. 

\if0
However, the construction of our code to achieve a general point of the capacity region of a c-q compound MAC is not direct.
At least, we could not derive the exponent for such a code for a general point. Hence, we could not derive a simple exponent to achieve the capacity region of a c-q compound MAC.
Therefore, it is an interesting open problem to construct a code with joint decoding to directly achieve the capacity region of a c-q compound MAC.
\fi

Further, the encoder of our universal codes does not depend on the output dimension unlike the preceding studies \cite{Korner-Sgarro, PW, Liu-Hughes}.
Hence, similar to the paper \cite{Conti}, there is a possibility that our encoder can be used for universal codes for c-q BCD and c-q MAC even with infinite-dimensional output systems.
Such an extension is another future study.
In addition, The derivations in examples in Subsection \ref{S93A} and \ref{S91} employ several numerical calculations.
Hence, their analytical derivations are also future studies.

In addition, 
one might be interested in the problem
whether our method for universal coding can be applied to 
the case where shared randomness between the senders is allowed
in a cq-MAC model.
The classical version of the above model 
was started by Willems \cite{Willems},
where the senders of a (classical) MAC may communicate to build a common randomness or they may share a common randomness for help their cooperation to send messages. 
This model often is addressed as MAC with conferencing encoders \cite{BLW} and has been extended to quantum \cite{BN}. 
Since our model is different from the above model, 
it is another interesting future problem to extend our result to the above setting.

\section*{Acknowledgments}
MH would like to thank Prof. Tomohiko Uyematsu for informing the reference \cite{Liu-Hughes}, which takes a central role in our code construction.

\appendices
\if0
\section{Simple construction of universal universal pair of encoder and decoder with finite number of
possible channels}\Label{A1}
Here, we show how easily we can show the existence of an encoder to work with all possible channels.
First, we focus on the ensemble of encoders that works with all possible channels due to the conventional random coding technique. 
In order that the encoder works with finite number of possible channels,
it is sufficient to consider a finite number of constraints for encoders.
Due to Markov inequality, with non-zero probability with respect to the choice of the encoder,
the decoding error probability is sufficiently small for possible channels.
Therefore, there exists an encoder to work with all possible channels.

The construction of decoder to works with all possible channels
is given as follows.
For the initial transmission with zero rate, i.e., the first part,
the sender sends a fixed sequence of alphabets, which does not depend on the message.
The sender applies the above encoder to the remaining transmission, i.e., the second part.
Using the receiving data in this first part, 
the receiver distinguishes which channel is true among finite number of possible channels
with negligible error probability.
Using this decision, the receiver chooses the decoder that works with the estimated channel
for the receiving data in the second part.

The above pair of the encoder and the decoder does not depend on which channel is true among the finite number of possible channels. 
Hence, we can construct an universal pair of an encoder and a decoder.
\fi

\section{Another universal decoder for superposition coding}\Label{A2}
\subsection{Decoder construction}
In this appendix, 
we give another universal decoder only for Reciever $Y$ 
with the same encoder as Subsection \ref{S6-1}
for superposition coding, which has a different exponent from the exponents of the decoder given in Subsubsection \ref{SS5-1}.
The decoder presented here is similar to that given in Section \ref{S7V}.
As explained later, this decoder has an exponent different from that given in Section \ref{S6}.

We choose $C_{n}^{(1)}:= e^{n (R_B+r_B)}$ and $C_{n}^{(2)}:=e^{n (R_A+r_A)}$.
We define the projection $\hat{\Pi}_{j,k}:=\Pi_{j,k}^{(1)}\Pi_{j,k}^{(3)}$.
The decoder is given as
\begin{align}
\hat{D}(j,k):=&
\Big(\sum_{j',k'}\hat{\Pi}(j',k')\Big)^{-1/2}
\hat{\Pi}(j,k) \Big(\sum_{j',k'}\hat{\Pi}(j',k')\Big)^{-1/2}.
\end{align}

\subsection{Error evaluation}
We evaluate the decoding error probability of Receiver $Y$ as
\begin{align}
&\Tr W^{(n)}_{\phi_{B,n}(j,k)} (I-\hat{D}(j,k))
\le 2 \Tr W^{(n)}_{\phi_{B,n}(j,k)} (I-\Pi(j,k))
+ 4 \Tr W^{(n)}_{\phi_{B,n}(j,k)}
\Big(\sum_{(j',k')\neq (j,k)}\Pi(j',k')\Big) \nonumber \\
=& 2 \Tr W^{(n)}_{\phi_{B,n}(j,k)} (I-\Pi_{j,k}^{(3)})
+ 2 \Tr W^{(n)}_{\phi_{B,n}(j,k)} (I-\Pi_{j,k}^{(1)}) \nonumber \\
&+ 4 \Tr W^{(n)}_{\phi_{B,n}(j,k)}
\Big(\sum_{j',k' \neq j}\Pi(j',k')\Big) 
+ 4 \Tr W^{(n)}_{\phi_{B,n}(j,k)}
\Big(\sum_{k' \neq k}\Pi(j,k')\Big) \nonumber \\
\le & 2 \Tr W^{(n)}_{\phi_{B,n}(j,k)} (I-\Pi_{j,k}^{(3)})
+ 2 \Tr W^{(n)}_{\phi_{B,n}(j,k)} (I-\Pi_{j,k}^{(1)}) \nonumber \\
&+ 4 \Tr W^{(n)}_{\phi_{B,n}(j,k)}
\Big(\sum_{j',k' \neq j}\Pi^{(3)}(j',k')\Big) 
+ 4 \Tr W^{(n)}_{\phi_{B,n}(j,k)}
\Big(\sum_{k' \neq k}\Pi^{(1)}(j,k')\Big) .
\Label{ER4}
\end{align}
These four terms of \eqref{ER4} are calculated in \eqref{KJ1}, \eqref{KJ2}, \eqref{3rd-E}, and \eqref{4th-E}.
Hence, since we can choose $t$ freely in \eqref{KJ2} and \eqref{2nd-E}, 
from the combination of \eqref{ER4}, \eqref{KJ1}, \eqref{KJ2}, \eqref{3rd-E}, and \eqref{4th-E},
we obtain the following lower bond of the exponent;
\begin{align}
&
\min \Big(
\max_s s(I_{1-s}(X;Y)-R_A-R_B-r_A-r_B),
\max_s s(I_{1-s}(X;Y|U)-R_B-r_B),r_A,r_A+r_B\Big) \nonumber \\
=&
\min \Big(
\min (\max_s s(I_{1-s}(X;Y)-R_A-R_B-r_A-r_B),r_A+r_B), \nonumber\\
&\quad \min (\max_s s(I_{1-s}(X;Y|U)-R_B-r_B),r_B)
\Big) .
\end{align}
We maximize it by choosing $r_A$ and $r_B$; 
\begin{align}
&
\max_{r_A,r_B}
\min \Big(
\min (\max_s s(I_{1-s}(X;Y)-R_A+R_B-r_A-r_B),r_A+r_B), \nonumber \\
&\quad \min (\max_s s(I_{1-s}(X;Y|U)-R_B-r_B),r_B)
\Big) \nonumber \\
= &
\min \Big(
\max_{0 \le s\le 1}\frac{s(I_{1-s}(X;Y) -R_A-R_B)}{1+s},
\max_{0 \le s\le 1}\frac{s(I_{1-s}(X;Y|U) -R_B)}{1+s}
\Big) ,
\end{align}
where the maximum is achieved when
\begin{align}
r_A+r_B=&\max_{0 \le s\le 1}\frac{s(I_{1-s}(X;Y) -R_A-R_B)}{1+s}\nonumber \\
r_B=&\max_{0 \le s\le 1}\frac{s(I_{1-s}(X;Y|U) -R_B)}{1+s} .
\end{align}

\if0
\subsection{Comparison of exponents}
As a comparison between exponents of this decoder and the decoder given in Section \ref{S7}, we have the following lemma. 
\begin{lemma}
\begin{align}
&\min \Big(
\max_{0 \le s\le 1}\frac{s(I_{1-s}(X;Y) -R_A-R_B)}{1+s},
\max_{0 \le s\le 1}\frac{s(I_{1-s}(X;Y|U) -R_B)}{1+s}
\Big) \\
\ge&
\min \Big(
\max_{0 \le s\le 1}\frac{s(I_{1-s}(U;Y) -R_A)}{1+s},
\max_{0 \le s\le 1}\frac{s(I_{1-s}(X;Y|U) -R_B)}{1+s}
\Big) ,
\end{align}
\end{lemma}
Hence, the decoder given here has a better exponent evaluation than 
that given in Section \ref{S7}.

\begin{proof}
\begin{align}
&\min \Big(
\max_{0 \le s\le 1}\frac{s(I_{1-s}(X;Y) -R_A-R_B)}{1+s},
\max_{0 \le s\le 1}\frac{s(I_{1-s}(X;Y|U) -R_B)}{1+s}
\Big) \\
=&
\max_{r_A,r_B}
\min \Big(
\min (\max_s s(I_{1-s}(X;Y)-R_A+R_B-r_A-r_B),r_A+r_B), \nonumber \\
&\quad \min (\max_s s(I_{1-s}(X;Y|U)-R_B-r_B),r_B)
\Big) \nonumber \\
\ge &
??????
\ge&
\min \Big(
\max_{0 \le s\le 1}\frac{s(I_{1-s}(U;Y) -R_A)}{1+s},
\max_{0 \le s\le 1}\frac{s(I_{1-s}(X;Y|U) -R_B)}{1+s}
\Big) ,
\end{align}
\end{proof}
\fi

\end{document}